\newcolumntype{C}[1]{>{\centering\arraybackslash}p{#1}}
\theoremstyle{plain}
\newtheorem{theorem}{Theorem}
\newtheorem{lemma}[theorem]{Lemma}
\newtheorem{proposition}[theorem]{Proposition}
\theoremstyle{definition}
\newtheorem{definition}{Definition}
\newtheorem{remark}{Remark}
\newcounter{assumpG}
\newtheorem{assumptionG}[assumpG]{Assumption}
\newcounter{assumpS}
\newtheorem{assumptionS}[assumpS]{Assumption}
\DeclareMathOperator*{\argmin}{arg\,min}
\begin{document}

\title{Testing Clustered Equal Predictive Ability with Unknown Clusters\thanks{We thank Lucy L.\ Gao, Antonio Monta\~{n}es, Ryo Okui, Hashem Pesaran, and Esther Ruiz-Ortega and the participants of the Centre for Econometric Analysis Occasional Econometrics Seminar at the Bayes Business School (London, October~2023), the Annual Spatial Econometrics Association Conference (San Diego, November~2023), the 29th International Panel Data Conference (Orl\'{e}ans, July~2024), the GIAM Seminar at Galatasaray University (Istanbul, April~2025), and the SPOC Seminar at Burgundy Institute of Mathematics (Dijon, October~2025). The usual disclaimer applies.}}

\author{%
O\u{g}uzhan Akg\"{u}n\thanks{LEDi UR 7467, Universit\'{e} Bourgogne Europe, France. Email: oguzhan.akgun@u-bourgogne.fr}
\and Alain Pirotte\thanks{CRED, Paris-Panth\'{e}on-Assas University, France. Email: alain.pirotte@assas-universite.fr}
\and Giovanni Urga\thanks{Bayes Business School (formerly Cass), London, United Kingdom. Email: g.urga@city.ac.uk}
\and Zhenlin Yang\thanks{School of Economics, Singapore Management University, Singapore. Email: zlyang@smu.edu.sg}}

\date{\today}

\maketitle

\begin{abstract}
\noindent We develop tests of clustered equal predictive ability (C-EPA) in panels where the clusters are unknown and estimated by the Panel Kmeans algorithm. To address the challenge of testing hypotheses that depend on data-driven clusters, we adopt a selective conditional inference framework. Specifically, we first derive a Wald-type test for pairwise equality and show that the limiting distribution of its square root conditional on the estimated clusters is that of a truncated $\chi$ variable. We characterize the associated truncation set by quadratic inequalities in the data space. Then, for the C-EPA hypothesis, we propose a $p$-value combination method by aggregating the evidence against the pairwise equality and overall EPA null hypotheses. The Monte Carlo results show accurate size control and good finite-sample power of the proposed tests. An empirical application to exchange-rate forecasting, using both traditional time-series models and machine-learning methods, illustrates the practical relevance of our procedure.

\medskip
\noindent\textbf{Keywords:} Forecast evaluation; hypothesis testing; Kmeans; selective inference.

\smallskip
\noindent\textbf{JEL codes:} C12; C23.
\end{abstract}

\newpage

\section{Introduction}\label{sec:intro}

Since the seminal work of \cite{diebold95}, a large and ever-growing literature\footnote{See \cite{giacomini2011}, \cite{clark13}, and \cite{rossi_forecasting_2021} for reviews of the early and more recent contributions to the area.} on testing equal predictive ability (EPA) using time series data has emerged. Despite this vast time series literature, testing EPA with panels has attracted the attention of econometricians only recently.
To the best of our knowledge, the only contributions are those of Akgun, Pirotte, Urga and Yang (\citeyear{akgun24}, APUY hereafter) and \citet{qu23b}.
Both papers focus on two EPA hypotheses: the overall EPA (O-EPA) hypothesis and the clustered EPA (C-EPA) hypothesis, but they assume that the relevant clusters are known to the researcher.

This assumption is restrictive in many empirical forecasting environments. Predictive performance may vary systematically across countries, firms, sectors, forecasters, or assets, but the dimensions that generate such heterogeneity are rarely known ex ante. Observable classifications such as income groups, geographical regions, institutional arrangements, or political alignments may be informative, but they are often coarse, overlapping, unstable over time, or unrelated to the particular loss differential under study. For example, \cite{dreher2008political} show that IMF forecast quality differs with countries' IMF-program status and political alignment with major donors, suggesting that forecast accuracy may depend on institutional and political features not captured by standard country classifications. More generally, the relevant grouping for EPA testing is the grouping induced by relative predictive performance itself, not necessarily by an externally imposed taxonomy. Imposing predetermined clusters may therefore mask important heterogeneity, while testing after estimating clusters without accounting for their data-driven nature leads to invalid inference. This creates the need for C-EPA procedures that both learn the clusters from the data and conduct valid inference conditional on that clustering step.

The main contribution of this paper is to develop conditional C-EPA tests for panel data when the relevant cluster structure is unknown and must be learned from the data. Our framework advances beyond APUY and \citet{qu23b} in four main directions. First, inspired by \cite{giacomini06}, we formulate C-EPA testing in a conditional predictive ability setting, allowing loss differentials to be interacted with general testing functions. Second, we estimate latent clusters using Panel Kmeans and show that, for our purpose, Panel Kmeans is equivalent to standard Kmeans applied to unit-level time averages, thereby reducing the selective-inference problem from $NTP$ to $NP$ dimensions. Third, we develop a selective conditional inference procedure that delivers valid post-clustering inference: for pairwise equality of cluster centers, the square root of a Wald-type statistic has an asymptotic truncated $\chi$ distribution conditional on the selected clustering outcome. A key technical ingredient is a high-dimensional Gaussian approximation for weakly dependent panel arrays, which allows us to extend selective inference beyond the exact-Gaussian and independent settings considered in much of the existing literature \citep{gao24,chen23,chang2024central}. Fourth, we construct tests of panel-wide homogeneity and of the full C-EPA null by combining selective pairwise $p$-values with an O-EPA $p$-value through $p$-merging functions valid under dependence among the components \citep{vovk20,vovk2022admissible,gasparin2024combining}. We also show that selecting the number of clusters by an information criterion does not invalidate the proposed procedure, and we examine the role of multiple Kmeans initializations in simulations.

The central inferential difficulty is that the clusters used in the C-EPA null are estimated from the same loss differentials on which the test is based. Panel Kmeans is a natural tool for uncovering latent heterogeneity in predictive performance, and it is widely used in econometric applications with grouped heterogeneity \citep{lin12,bonhomme15a,sarafidis15,bonhomme2022discretizing,patton23}. However, once clusters are data-driven, conventional tests that treat them as fixed suffer from a post-selection problem: the clustering step mechanically separates units in ways that can create spurious evidence of heterogeneity. This is the familiar problem of \textit{double dipping} \citep{kriegeskorte09}, here arising because the same panel is used both to estimate the clusters and to conduct inference on their centers.

Sample splitting offers a simple way to avoid reusing the same observations for clustering and testing, and has been used in related panel settings \citep{patton23}. Its simplicity is attractive, but it comes at a cost. The split between training and testing samples is arbitrary, the testing sample may be substantially smaller, and the procedure can be sensitive to structural breaks or other forms of temporal instability. These limitations motivate our selective conditional approach, which uses the full sample while conditioning on the clustering outcome.

The proposed procedures are evaluated in Monte Carlo simulations and compared with naive post-clustering tests and split-sample alternatives. The simulations show that naive tests severely over-reject, while the selective tests control size accurately and retain good power in empirically relevant designs. A further experiment with structural instability illustrates a setting in which sample splitting can be unreliable, whereas the selective procedure remains well sized.

We illustrate the empirical relevance of the methodology through an application to exchange-rate forecasting. Using a panel of bilateral exchange rates against the U.S. dollar, we compare traditional time-series models and machine-learning methods relative to an AR(1) benchmark. The results show that accounting for latent clusters can reveal heterogeneous predictive gains that are missed by aggregate forecast-comparison tests.

The rest of the paper is organised as follows. 
Section \ref{sec:preliminaries} presents the hypotheses of interest, three motivating examples, a generalized test of C-EPA with predetermined clusters, and the Panel Kmeans estimator of the unknown clusters.
Section \ref{sec:naive} discusses basic regularity conditions for our new tests, {including the HD-CLT assumption and its primitive conditions,} and presents
two useful lemmas.
Section \ref{sec:tests} introduces {the homogeneity test and} the tests of C-EPA with unknown clusters and presents their asymptotic properties.
Section \ref{sec:monte} presents essential Monte Carlo results.
An empirical illustration is reported in Section \ref{sec:app}.
Section \ref{sec:conc} concludes.
Appendices~A--C in the main paper contain the proofs of the main results. The supplementary material contains the derivations of the examples, the Split Sample test, the primitive conditions for the HD-CLT assumption and additional results.

\paragraph*{Notation}
Random variables are denoted by upper-case letters and their realizations by the corresponding lower-case letters, e.g., $w(\cdot)$ denotes a realization of the test statistic $W(\cdot)$.
$\| \cdot \|$ denotes Euclidean norm, $\mathbf{1} \{ \cdot \}$ is indicator function, $\mathrm{diag}( \cdot )$ forms a diagonal matrix by given elements, $\mathrm{tr}( \cdot )$ is the trace of a square matrix, $[\ \cdot\ ]$ returns an integer by rounding, $| \cdot |$ denotes its cardinality when applied to a set, $\lfloor \ \cdot\  \rfloor$ returns the closest integer smaller than its argument, $\otimes$ denotes Kronecker product.

\section{Setup and Preliminaries}\label{sec:preliminaries}

In this section, we introduce the testing framework {and the C-EPA null and alternative hypotheses, present three motivating examples, and introduce a conditional C-EPA test with predetermined clusters that generalizes APUY. Finally, we present the Panel Kmeans estimator and its algorithm, used to select the clusters on which we conduct the C-EPA tests with unknown clusters.}

\subsection{Testing framework and hypotheses}\label{sec:frame}

Let $\widehat{Y}_{a,it}$ be the $\tau$-steps ahead, $\tau \geq 1$, forecast of agents $a=1,2$ for the target $Y_{it}$ made at time $t-\tau$, $t=1,2,\dots,T$, for unit $i=1,2,\dots,N$.
Here, $a$ represents a forecasting agent such as the IMF or the OECD as in APUY and \citet{qu23b}, or a forecasting model
\citep[see][among others]{clark2001tests,clark13,clark2014tests,clark2015nested,giacomini06}.
A generic loss function is denoted by $L(\cdot,\cdot)$.
This can be a quadratic loss, an absolute loss or another loss function that is not necessarily in the forecast error form.
Define the loss differentials of the two forecasts as $\Delta L_{it} = L(\widehat{Y}_{1,it},Y_{it}) - L(\widehat{Y}_{2,it},Y_{it})$ which are defined on a complete probability space $(\mathit{\Omega},\mathcal{E},\Pr)$.

The null of interest is the generalized C-EPA hypothesis.
By ``generalized" we mean that it allows conditioning variables, contrary to the unconditional null considered in recent papers by APUY and \citet{qu23b}. It is stated as
\begin{equation}\label{eq:null}
\mathcal{H}_{0}: \frac{1}{|\mathcal{C}_k|} \sum_{i \in \mathcal{C}_k} \mathbb{E}(\Delta L_{it} \mid \mathcal{F}_{t-\tau}) = 0, \, \text{a.s.,} \, \text{ for all } k=1,2,\dots,K,
\end{equation}
where $\mathcal{F}_{t} \subseteq \mathcal{E}$ is a conditioning set (see examples below), and $\left\{\mathcal{C}_k\right\}_{k=1}^K$ are the sets of panel unit indexes that form a partition of the set of all units $\{1,2,\dots,N\}$.
More concretely, $\mathcal{C}_k = \{i : k_i = k \}$, where $k_i \in \{1,2,\dots,K\}$ is the cluster membership indicator of the unit $i$. 

The alternative hypothesis is
\begin{equation}\label{eq:alt}
\mathcal{H}_{1}: \frac{1}{|\mathcal{C}_k|} \sum_{i \in \mathcal{C}_k} \mathbb{E}(\Delta L_{it} \mid \mathcal{F}_{t-\tau}) \neq 0, \, \text{ for at least one } k=1,2,\dots,K.
\end{equation}
In the formulation of the hypotheses, it is implicitly assumed that the conditional expectation of interest is time invariant almost surely.
With a more complicated notation, we could also focus on the averages of these expectations over time.
However, this requires different ways of estimating the variance or clustering (see Remark \ref{rem:gfe} below).

The two cases covered in the null and alternative hypotheses are important.
The first case is the unconditional C-EPA hypothesis which is obtained when $\mathcal{F}_{t} = \{ \emptyset, \mathit{\Omega} \}$.
For predetermined clusters, the tests for this null hypothesis were developed by APUY and \citet{qu23b} under different assumptions on the autocorrelation and CD properties of the loss differentials.
The second case is the conditional C-EPA hypothesis, for which two sub-cases are particularly useful.
{One} case is when the $\sigma$-field $\sigma(\{ W_{is} \}^N_{i=1}, s \leq t)$ iq generated by the present and past of the measurable-$\mathcal{E}$ random variables $W_{it} = (Y_{it},X'_{it})'$, with $X_{it}$ a vector of external predictors used to make the predictions $\widehat{Y}_{a,it}${; the null in \eqref{eq:null} then sets} $\mathcal{F}_{t} = \sigma(\{ W_{is} \}^N_{i=1}, s \leq t)$.
{The other case is when a} researcher is interested in the conditional EPA with respect to the realization of a vector of measurable-$\mathcal{E}$ common factors $F_t$, {so that} $\mathcal{F}_{t} = \sigma(F_{s}, s \leq t)$.
Some common factors can be particularly useful to model via dummy variables indicating, for example, the global financial crisis, the COVID-19 period, etc.
Through a careful choice of these dummies, our framework makes it possible to focus on local differences in the predictive abilities of the two forecasters.

\begin{remark}\normalfont
The two conditional schemes described above need not be mutually exclusive.
In practice, forecast errors can be estimated from alternative panel data models that include both external predictors and observed or estimated common factors. Furthermore, model errors may also exhibit spatial or network dependence.
These general forecasting models feature external predictors,  strong cross-sectional dependence via common factors,  and weak cross-sectional dependence via spatial interactions \cite[see][for different types of cross-sectional dependence]{chudik2011weak}.
In turn, one would expect the resulting loss differentials to contain different types of cross-sectional dependence via differences in alternative models.
While our current theoretical framework accommodates general forms of cross-sectional dependence, formal treatment of parametric structures of cross-dependence in loss differentials could lead to more powerful inference (see APUY for further insights on the distinction of weak vs. strong cross-sectional dependence in EPA testing).
\end{remark}

\begin{remark}\normalfont
In some cases, one may want to compare a large number of forecasts made for a given unit with a base forecast.
For example, in comparing the inflation forecasts of the survey of professional forecasters with that of the IMF for the Euro area, the framework remains similar but the meaning of the indexes change.
This case can be described as follows{:} $Y_{t}$ denotes the Euro area inflation{,}
{$\widehat{Y}_{1,it}$} is the forecast of the $i$-th forecaster for period $t${,} and $\widehat{Y}_{2,t}$ is the IMF forecast for period $t$.
The loss differentials are then $\Delta L_{it} = L(\widehat{Y}_{1,it},Y_{t}) - L(\widehat{Y}_{2,t},Y_{t})$ which still depend on the two indexes $i$ and $t$.
As far as its assumptions on the loss differentials are satisfied, our framework is applicable in these situations.
In the following section, we give further examples in detail which justify the practical importance of testing the C-EPA null with unknown clusters.
\end{remark}

The null $\mathcal{H}_{0}$ implies that $|\mathcal{C}_k|^{-1} \sum_{i \in \mathcal{C}_k} \mathbb{E}(\widetilde{H}_{i,t-\tau} \Delta L_{it}) = 0$, for any measurable-$\mathcal{E}$ vector of random variables $\widetilde{H}_{it}$ \citep{giacomini06}.
Here, by taking expectations with respect to the measurable-$\mathcal{E}$ vector $\widetilde{H}_{i,t-\tau}$, we obtain an unconditional moment condition.
Let $H_{it}$ be such a $P \times 1$ vector, called a ``testing function" by \cite{giacomini06} and $Z_{it} = H_{i,t-\tau}\Delta L_{it}$ with $\mu_{i}^0 = \mathbb{E} ( Z_{it} )$.
Define also $\theta^0_{k}(\mathcal{C}) =|\mathcal{C}_k|^{-1} \sum_{i \in \mathcal{C}_k} \mu_i^0$ where $\mathcal{C} = \{ \mathcal{C}_1,\dots,\mathcal{C}_K \}$.
Then, $\mathcal{H}_{0}$ implies
\begin{equation}\label{eq:h0rewritten}
  \mathcal{H}'_{0}: \theta^0_{k}(\mathcal{C}) = 0, \text{ for all } k=1,2,\dots,K.
\end{equation}
This transformation from a conditional to an unconditional moment is standard in forecast evaluation and GMM-type testing frameworks.
It enables tractable estimation and inference without explicitly modeling the conditioning $\sigma$-field $\mathcal{F}_{t-\tau}$.
While this approach does not retain all the information contained in the full conditional distribution of $\Delta L_{it}$, it preserves enough structure for hypothesis testing provided that the specified test function is sufficiently informative.
In practice, the choice of $H_{i,t-\tau}$, e.g. the lagged loss differentials, regressors, or cluster-specific moments, alters the power properties and interpretation of the resulting test.

\subsection{Examples}\label{sec:examples}

The usefulness of testing the conditional EPA hypothesis has been widely documented in the literature starting with \cite{giacomini06} \citep[see also the excellent review by][]{clark13}.
We now present examples highlighting the importance of accounting for unknown clusters when testing the C-EPA hypothesis.

\paragraph*{Example 1: Time series forecasting}
This example is in the spirit of \cite{diebold95}, but emphasizes that the relevant loss differential may vary across latent groups of series. In time series forecasting, it is common to compare a benchmark model, such as an AR(1), against more flexible specifications \citep[as in][who compare direct and iterated autoregressive forecasts across many macroeconomic variables]{marcellino2006comparison}.

Suppose that we observe $N$ bivariate time series $\{Y_{it},X_{it}\}_{t=0}^R$ where $R$ is the time series length used for estimation.
The true data-generating process (DGP) belongs to one of two latent clusters:
\begin{align*}
Y_{it} =
\begin{cases}
\alpha_i + \beta_i X_{i,t-1} + U_{it}, & i \in \mathcal{C}_1, \\
\beta_i X_{i,t-1} + U_{it}, & i \in \mathcal{C}_2,
\end{cases}
\end{align*}
where $U_{it} \sim iid(0, \sigma^2)$ and assume that the predictor is fixed. Two forecasters have imperfect knowledge of the DGP and make the forecasts:
\begin{equation*}
    \begin{split}
        \text{Forecaster 1:} \quad &\widehat{Y}^{(1)}_{i,R+1} = \hat{\alpha}_i + \hat{\beta}_{i} X_{i,R}, \\
        \text{Forecaster 2:} \quad &\widehat{Y}^{(2)}_{i,R+1} = \tilde{\beta}_{i} X_{i,R}.
    \end{split}
\end{equation*}
where $X_{i,R}$ is observed. The least squares estimators $\hat{\alpha}_i$, $\hat{\beta}_i$, and $\tilde{\beta}_i$ are computed from a fixed estimation window and are thus subject to sampling variability.

Each forecaster performs well on one cluster and poorly on the other{: for units in $\mathcal{C}_1$, Forecaster 1 correctly includes an intercept while Forecaster 2 omits it and incurs bias, whereas for units in $\mathcal{C}_2$ the true DGP has no intercept, so Forecaster 2 is correctly specified and Forecaster 1 overfits by including a superfluous constant.}

This setup yields systematic differences in forecast accuracy across clusters. In Section S.1 of the supplement, we derive the expected loss differential between the two forecasters, conditional on the true cluster membership which shows how heterogeneity in model specification and estimation precision across clusters induces systematic differences in forecast performance, motivating the C-EPA hypothesis as a natural testable implication of latent cluster structure.

\paragraph*{Example 2: Panel data forecasting}
The trade-off between the bias of pooled estimators and the variance of heterogeneous estimators is central to panel forecasting \citep{pesaran26}, and latent group structures have become popular in panel data analysis over the last decade \citep[see][]{bonhomme15a,ando2017clustering,lumsdaine2023estimation}.
Suppose that two forecasters are interested in a variable $Y_{it}$ whose DGP is given by
\[
Y_{it} = \beta'_{k_i} X_{i,t-1} + U_{it}, \quad U_{it} \sim iid(0, \sigma^2), \quad k_i \in \{1, \dots, K\}.
\]
We assume that the vector of predictors $X_{i,t-1}$ is known and fixed, and that the forecast errors $U_{it}$ are independent of all regressors and estimators. Two forecasters make the following two forecasts:
\begin{equation*}
    \begin{split}
        \text{Forecaster 1:} \quad &\widehat{Y}^{\text{pooled}}_{i,R+1} = \hat{\beta}' X_{i,R}, \\
        \text{Forecaster 2:} \quad &\widehat{Y}^{\text{het}}_{i,R+1} = \hat{\beta}_i' X_{i,R}.
    \end{split}
\end{equation*}
The pooled estimator $\hat{\beta}$ suffers from misspecification bias if $\beta_{k_i} \neq \beta$, while the individual estimator $\hat{\beta}_i$ is unbiased but has higher variance due to limited time series observations.

This decomposition of the loss differentials for different clusters given in Section S.1 of the supplement shows that systematic differences in forecast performance arise when units differ in the extent to which they benefit from pooling relative to heterogeneous estimation. The resulting clusters are therefore defined by the relative predictive performance of the competing forecasting methods, rather than by observable labels known to the researcher in advance. Since the researcher does not generally know which units share similar bias--variance trade-offs, cluster membership must be inferred from the loss differentials before valid C-EPA testing can be conducted.

\paragraph*{Example 3: Forecasting with machine learning methods}
Machine learning methods are increasingly popular in economic applications (see \cite{athey2018impact} for a discussion and \cite{haghighi2025machine} for the recent special issue of the Journal of Econometrics).
In high dimensional forecasting tasks, researchers often compare linear methods such as Lasso with nonlinear alternatives like random forests. For instance, \cite{goulet2022machine} compared a large set of data-rich and data-poor models, finding that the main advantage of machine learning for macroeconomic forecasting is its ability to capture nonlinearities associated with macroeconomic uncertainty, financial stress and housing bubbles.
Two methods are trained and evaluated using validation MSE:
\begin{itemize}
  \itemsep0em
  \item Method 1: linear forecast (e.g., Lasso),
  \item Method 2: nonlinear forecast (e.g., random forests).
\end{itemize}
If some units display nonlinear patterns while others do not, the average MSE over the panel units may be misleading. Comparing the two methods may then require a second machine learning method, namely clustering, since clustering forecast loss differentials and testing the C-EPA null allows for identification of cluster-level model dominance.
If it was not reserved for another econometric method, we would call the application of our testing framework to this case ``double machine learning."

\subsection{Generalized clustered EPA tests with predetermined clusters}\label{sec:known}

The tests for the unconditional C-EPA hypothesis have been developed by APUY and \citet{qu23b} for predetermined clusters.
When $\mathcal{F}_{t} = \{ \emptyset, \mathit{\Omega} \}$, the C-EPA null reduces to $|\mathcal{C}_k|^{-1} \sum_{i \in \mathcal{C}_k} \mathbb{E}(\Delta L_{it}) = 0$ for all $k=1,2,\dots,K$.
APUY suggested several test statistics under different assumptions on the dependence structure of the loss differentials{; here we generalize their methodology to} $\mathcal{F}_{t} \neq \{ \emptyset, \mathit{\Omega} \}$ together with a small sample adjustment.

Consider the following test statistic for \eqref{eq:h0rewritten}:
\begin{equation}\label{eq:f_con}
W(\mathcal{C}) = \frac{B-KP+1}{KPB} T \hat{\theta}'(\mathcal{C}) \widehat{\Omega}^{-1}(\mathcal{C}) \hat{\theta}(\mathcal{C}),
\end{equation}
where $\hat{\theta}(\mathcal{C}) = [\hat{\theta}_1'(\mathcal{C}),\dots,\hat{\theta}'_K(\mathcal{C})]'$ with $\hat{\theta}_k(\mathcal{C}) = (|\mathcal{C}_k|T)^{-1} \sum_{i \in \mathcal{C}_k} \sum_{t=1}^T Z_{it}$ and
$\widehat{{\Omega}}(\mathcal{C})$ is an orthonormal series (OS) variance-covariance estimator defined as follows
\begin{align}\label{eq:var}
\widehat{\Omega}(\mathcal{C})
&=
\frac{1}{B}
\sum_{j=1}^B
\widehat{\Lambda}_{j}(\mathcal{C})
\widehat{\Lambda}_{j}'(\mathcal{C}),  \\
\widehat{\Lambda}_{j}(\mathcal{C})
&=
\sqrt{\frac{2}{T}}
\sum_{t=1}^T
[
\bar{Z}_{t}(\mathcal{C})
-
\hat{\theta}(\mathcal{C})
]
\cos
\left[
\pi j
\left(
\frac{t-1/2}{T}
\right)
\right],
\end{align}
with $\bar{Z}_{t}(\mathcal{C}) = [\bar{Z}'_{1,t}(\mathcal{C}),\dots,\bar{Z}'_{K,t}(\mathcal{C})]'$, $\bar{Z}_{k,t}(\mathcal{C}) = |\mathcal{C}_k|^{-1} \sum_{i \in \mathcal{C}_k} Z_{it}$, and $B$ is the number of orthonormal basis functions used in its estimation. 
The first factor in \eqref{eq:f_con}, $(B-KP+1)/KPB$, is a small sample correction obtained through the connection between Hotelling's $T^2$ distribution and the $F$-distribution, and using the asymptotic property of the proposed variance estimator.

The general class of OS estimators of a long-run variance (LRV) was first proposed by \cite{phillips05}{, and developed further by} \cite{muller07} {and} \cite{sun11,sun13,sun14a}, among others.
Under the results of Lemma \ref{lemma:clt} and following \cite{sun13}, it is easy to show that $W(\mathcal{C}) \overset{d}{\longrightarrow} \mathbb{F}_{KP,B-KP+1}$ under the null, where $\mathbb{F}_{v_1,v_2}$ denotes the $F$-distribution with numerator and denominator degrees of freedom of $v_1$ and $v_2$, respectively.
When $B \longrightarrow \infty$, a generalization of the usual results of APUY holds such that $W(\mathcal{C}) \overset{d}{\longrightarrow} \chi^2_{\scriptscriptstyle KP}$.
\cite{sun13} shows that when $B$ is not too large, the $\mathbb{F}_{KP,B-KP+1}$ critical values yield better size properties than the (scaled) $\chi^2_{\scriptscriptstyle KP}$ ones. We leave the formal discussion of the theoretical and numerical results to the next sections.

With some abuse of notation, let $p [ w(\mathcal{C}) ] = \Pr \left[\, \mathbb{F}_{KP,B-KP+1} \geq w(\mathcal{C}) \,\right]$ be the $p$-value associated with $w(\mathcal{C})$.
For simplicity, here and throughout we suppress the dependence of $p[\ \cdot \ ]$ on the reference distribution, which will be clear from the context as we establish the asymptotic distribution of each test, under the respective null.
A level-$\alpha$ test rejects the null if $p [ w(\mathcal{C}) ] \leq \alpha$, where $\alpha \in (0,1)$ is the predetermined Type I error rate.

\subsection{Panel Kmeans estimator}

If there is no a priori information on the clusters $\mathcal{C}_k$, $k = \{1,\dots,K\}$, one may use the Panel Kmeans estimator applied to the stacked panel $Z = (Z'_{11},Z'_{12}\dots,Z_{NT}')'$, denoted $\mathcal{C}(Z)$, to learn these clusters from the data.
For a given $K$, the Panel Kmeans estimators of the cluster membership sets and cluster centers are defined respectively as:
\begin{align}\label{eq:kmeans}
(\widehat{\mathcal{C}}_1,\dots,\widehat{\mathcal{C}}_K)
&=
\argmin_{({\mathcal{C}}_1,\dots,{\mathcal{C}}_K)}
\sum_{k=1}^K
\sum_{i \in \mathcal{C}_k}
\sum_{t=1}^T
\left\lVert
Z_{it}
-
\frac{1}{|\mathcal{C}_k|T}
\sum_{j \in \mathcal{C}_k}
\sum_{s=1}^T
Z_{js}
\right\rVert^2
\\
\hat{\theta}_{k}(\widehat{\mathcal{C}})
&=
\frac{1}{|\widehat{\mathcal{C}}_k|T}
\sum_{i \in \widehat{\mathcal{C}}_k}
\sum_{t=1}^T
Z_{it}.
\end{align}
The optimization problem in \eqref{eq:kmeans} is typically solved by an iterative algorithm, similar to those proposed by \cite{lloyd82} or \cite{hartigan75}.
The Panel Kmeans estimates of the cluster membership variables and the cluster centers can be calculated using Algorithm \ref{algo} which is a generalization of Lloyd's algorithm.

\begin{algorithm}
\caption{Panel Kmeans}
\label{algo}
\KwIn{Data $\{Z_{it}: i=1,\dots,N,\ t=1,\dots,T\}$, number of clusters $K$}
\KwOut{Cluster assignments $k_i$, cluster centers $\theta_k$}

Initialize $\theta_k^{(0)}$ for $k = 1, \dots, K$; set $m \gets 0$\;

\Repeat{$k_i^{(m)} = k_i^{(m-1)}$ for all $i = 1, \dots, N$}{
    \For{$i \gets 1$ \KwTo $N$}{
        $k_i^{(m+1)} \gets \arg\min\limits_{k \in \{1, \dots, K\}} \sum_{t=1}^T \lVert Z_{it} - \theta_k^{(m)} \rVert^2$\;
    }
    \For{$k \gets 1$ \KwTo $K$}{
        Update cluster $\mathcal{C}_k^{(m+1)} \gets \{i : k_i^{(m+1)} = k\}$\;
        $\theta_k^{(m+1)} \gets \frac{1}{|\mathcal{C}_k^{(m+1)}| T} \sum\limits_{i \in \mathcal{C}_k^{(m+1)}} \sum\limits_{t=1}^T Z_{it}$\;
    }
    $m \gets m + 1$\;
}
\end{algorithm}

Algorithm \ref{algo} is a generalization of Lloyd's Kmeans clustering algorithm in several respects{, with important consequences in our framework}.
Although the criterion in \eqref{eq:kmeans} is written using the complete panel observations, Lemma~\ref{lemma:average_equivalence} shows that, with time-invariant cluster centers, its assignment and center-update steps are exactly equivalent to those of Lloyd's Kmeans algorithm applied to the unit-specific time averages $\bar Z_1,\dots,\bar Z_N$. The panel structure remains important for the asymptotic distribution and long-run variance estimation, but the clustering path itself depends on the data only through $\bar Z$.
Specifically, it minimizes the total within-cluster sum of squared deviations over time, thereby extending the clustering criterion to sequences rather than points.
This introduces a temporal structure absent in classical Kmeans, while preserving the iterative structure of centroid updating and cluster reassignment.
{Like classical Kmeans, it} solves a non-convex minimization problem{: the} objective function is piecewise quadratic and discontinuous in the assignment variables, leading to the possibility of converging to local minima{, which} motivates the use of multiple random initializations.
{Finally, generalizing} the selective inference framework for classical Kmeans {of} \cite{chen23} to the Panel Kmeans {is challenging because of} the dependency structure of the data: observations for a given unit are temporally dependent and potentially cross-sectionally correlated, making standard theoretical arguments more delicate.

\cite{chen23} initializes the Kmeans algorithm by choosing $K$ random cluster centers from the data, then conditions on the initial assignments based on these centers as well as on each cluster assignment in the Kmeans iterations. In our setting, we instead assign each unit randomly to a cluster and then calculate the corresponding cluster centers, so the first cluster centers are not chosen to minimize a distance metric. This is a subtle but important difference. The method of \cite{chen23} results in two sets of analytical formulae, one for the initialization and one for the canonical assignments, whereas we rely only on the second since our initialization does not use a distance metric. Hence the truncation set calculations we provide in Section S.4 of the supplement are simpler than those of \cite{chen23}.

\begin{remark}\normalfont
\label{rem:gfe}
Our selective inference framework can be extended to models with group fixed effects (GFE) varying over time. Specifically, suppose each unit's outcome is a $P \times 1$ vector $Z_{it}$, and follows the model $Z_{it} = \mu_{k_i,t} + V_{it}$, where $\mu_{k,t}$ is a time-varying grouped fixed effect and $V_{it}$ is the innovation.
The testing problem then concerns the C-EPA null hypothesis defined by $\mathcal{H}_0: T^{-1} \sum_{t=1}^T \mu_{k,t} = 0$ for all $k$.
This is an important extension because it allows instabilities in relative forecast superiority over time while focusing still on the average equivalence.
In a time series setting, \cite{harvey2024testing} handled this problem by nonparametric local demeaning to estimate the LRV{. By contrast, the} GFE modeling strategy may simplify obtaining a consistent variance estimator by assuming that the heterogeneity of the instability is fixed and low dimensional with respect to $N$.
In this case, a multidimensional GFE generalization of the Panel Kmeans clustering algorithm continues to yield a polyhedral selection region in $\mathbb{R}^{NTP}$.
This generalization preserves the logic of the polyhedral approach but {may introduce new computational burdens, as the truncation region grows in complexity and evaluating exact $p$-values requires new techniques.}
Developing efficient and scalable inference methods in this multivariate GFE setting is an important direction for future work.
\end{remark}

\section{Assumptions and Two Useful Lemmas}\label{sec:naive}

We now present the assumptions and two preliminary results that will be instrumental in developing the test statistics of Section \ref{sec:tests} and establishing their asymptotic properties. This requires some additional notation. Throughout, $C$ denotes a generic positive constant, and we write $(T, N) \to \infty$ for the joint divergence of both dimensions. Define $V_{it} = Z_{it} - \mu_i^0$, where $V_{p,it}$, for $p = 1, \dots, P$, denotes the $p$th element of the vector $V_{it}$ and the demeaned time average
$
\bar{V}_i = \frac{1}{T}\sum_{t=1}^T V_{it} \in \mathbb{R}^P
$,
and let $\bar{V} = (\bar{V}_1',\dots,\bar{V}_N')' \in \mathbb{R}^{NP}$ denote the stacked vector of demeaned time averages across all units with $
\Xi
=
\mathrm{Var}(\sqrt T\bar V)
\in
\mathbb R^{NP\times NP}$. Moreover, let $\mathcal{A}_{NP}^{\mathrm{si}}(a,d)$ denote the class of simple convex sets in $\mathbb R^{NP}$ in the sense of \cite{chang2024central}. For any partition $\mathcal C=(\mathcal C_1,\dots,\mathcal C_K)$ and any pair $k\neq g$, define $\delta^N_{k,g}\in\mathbb R^N$ by $(\delta^N_{k,g})_i =
\mathbf 1\{i\in\mathcal C_k\} / |\mathcal C_k| - \mathbf 1\{i\in\mathcal C_g\} / |\mathcal C_g|$ and let $\Pi^N_{k,g} = I_N - \delta^N_{k,g}(\delta^N_{k,g})'
/ \lVert\delta^N_{k,g}\rVert^2 $. Let $\mathfrak C_{N,K}$ denote the set of partitions satisfying
$\min_{1\leq k\leq K}|\mathcal C_k|/N\geq\underline\pi$ for some
constant $\underline\pi>0$. For fixed constants $a\geq0$ and $d>0$,
each $A\in\mathcal A_{NP}^{\mathrm{si}}(a,d)$ can be approximated by a
$\vartheta$-generated convex polytope $A_\vartheta$, with
$\vartheta\leq(NPT)^d$, such that
$
A_\vartheta\subseteq A\subseteq A_{\vartheta,a/T}
$,
where $A_{\vartheta,a/T}$ is obtained by enlarging each defining
half-space of $A_\vartheta$ by $a/T$.

The following assumptions will be referred to throughout the paper. They are grouped into two classes: the first three are generic, labeled as G$\#$, which are required for both size and power properties of the tests, and the other three are specific assumptions, labeled as S$\#$, required only for the power properties of the test under the alternative hypothesis $\mathcal{H}_1$.

\begin{assumptionG}\normalfont\label{ass:alternative}
\begin{enumerate*}[label=(\alph*)]
  % \item\label{ass:compact} $\Theta^{KP}$ is a compact set of $\mathbb{R}^{KP}$,
  \item\label{ass:compact} $\lVert \mu_i^0 \rVert < \infty$,
  \item\label{ass:meanzero} $\mathbb{E} \lVert V_{it} \rVert^4 \leq C$,
  \item\label{ass:weakdepts} $T^{-1} \sum_{t,s=1}^T \mathbb{E} \lVert V_{it} V'_{is} \rVert \leq C$.
\end{enumerate*}
\end{assumptionG}

\begin{assumptionG}\normalfont\label{ass:clusternumbers}
  $|\mathcal{C}_k| /N \longrightarrow \pi_k \in (0,1)$ for each $k=1,\dots,K$ as $N\longrightarrow \infty$.
\end{assumptionG}

\begin{assumptionG}\normalfont\label{ass:clt}
As $(T,N)\to\infty$,
\[
\rho_{NT}
=
\sup_{A\in\mathcal A_{NP}^{\mathrm{si}}(a,d)}
\left|
\Pr\left(\sqrt T\bar V\in A\right)
-
\Pr(G\in A)
\right|
\longrightarrow0,
\]
where $G\sim\mathcal N(\mathbf 0_{NP},\Xi)$. There exists a constant
$c_\Xi>0$ such that
$
\lambda_{\min}(\Xi)\geq c_\Xi
$
for all $(T,N)$. Uniformly over $\mathcal C\in\mathfrak C_{N,K}$ and $k\neq g$, the Gaussian approximation applies to the joint events involving
$
[
(\delta^N_{k,g}\otimes I_P)'\sqrt T\bar V,\,
(\Pi^N_{k,g}\otimes I_P)\sqrt T\bar V
]
$
used in the conditioning argument below, and
$
(\delta^N_{k,g}\otimes I_P)'
\Xi
(\Pi^N_{k,g}\otimes I_P)
=
0
$.
Moreover, as $B\to\infty$ and $B/T\to0$,
\[
\sup_{\mathcal C\in\mathfrak C_{N,K}}
\max_{k\neq g}
\left\|
\Sigma_{k,g}(\mathcal C)^{-1/2}
\widehat\Sigma_{k,g}(\mathcal C)
\Sigma_{k,g}(\mathcal C)^{-1/2}
-
I_P
\right\|
=
o_p(1).
\]
\end{assumptionG}

Assumptions \ref{ass:alternative}\ref{ass:compact} and \ref{ass:alternative}\ref{ass:meanzero} are standard conditions which ensure that the cluster centers are well defined and all moments up to the fourth of the innovation $V_{it}$ exist {so that the cluster centers as well as their variances are finite and can be estimated consistently}.
Assumption \ref{ass:alternative}\ref{ass:weakdepts} limits the time-series dependence {in the sense that $\sum_{t\neq s}{\rm E}\|V_{it}V_{is}^{\prime}\| = O(T)$.}
We do not place any restriction on the CD characteristics of the panel and allow for both strong and weak CD {(see the discussion following Lemma \ref{lemma:clt} below)}.
Assumption \ref{ass:clusternumbers} controls the asymptotic number of units per cluster, requiring that each cluster has a non-negligible contribution to the population. It is standard in the clustering literature (see, for instance, Assumption 2(a) of \cite{bonhomme15a} and Assumption A1(vii) of \cite{su16}), and can be relaxed at the expense of more complicated notation.
Assumption~\ref{ass:clt} replaces unit-by-unit mixing conditions with a high-level Gaussian approximation for the $NP$-dimensional vector of time-averaged innovations. It allows $N$ and $T$ to diverge jointly and accommodates general cross-sectional dependence. The approximation is imposed uniformly over the admissible partitions because the partition used in the selective test is data dependent. The final orthogonality condition ensures that, in the Gaussian approximation, the pairwise center contrast is independent of its selection-relevant residual component. Compound symmetry is sufficient for this condition but is not required. The relative consistency condition provides the standardization needed for the Wald statistic. Primitive sufficient conditions are given in Section S.6 of the supplement.
\begin{remark}\normalfont\label{rem:hdclt}
Beyond the generality noted above, Assumption~\ref{ass:clt} allows the $V_{it}$ to be subject to common factors, spatial spillovers, or network interactions, provided the joint distribution of $\sqrt{T}\bar{V}$ is well approximated by a Gaussian over simple convex sets. The use of simple convex sets is convenient because the events induced by fixed-dimensional linear contrasts of $\sqrt{T}\bar{V}$ are convex polyhedra. In particular, for a cluster pair $(k,g)$, the event
$
\left\{
w \in \mathbb{R}^{NP}:
(\delta^N_{k,g}\otimes I_P)'w \leq x
\right\}
$
with
$
x \in \mathbb{R}^{P}
$,
is an intersection of $P$ half-spaces. Hence the Gaussian approximation applies directly to the center contrasts used in Lemma~\ref{lemma:clt}. The eigenvalue lower bound on $\Xi$ ensures that $\Sigma_{k,g}(\mathcal C) = (\delta^N_{k,g}\otimes I_P)' \Xi (\delta^N_{k,g}\otimes I_P)$ is positive definite for each $(T,N)$, so its inverse square root is well defined. This is needed for standardization and for the anti-concentration arguments used in the supplement. The primitive sufficient conditions referenced above will be stated using the conditions of Chang, Chen, and Wu for simple convex sets, based on geometric $\alpha$-mixing, sub-Weibull tails, and non-degeneracy of the relevant projected long-run variances.
\end{remark}

We now state a useful lemma for the theoretical analysis of the test statistics developed below, which requires some new notation.
Let $\theta^{0}_{k}(\mathcal{C}) = |\mathcal{C}_k|^{-1} \sum_{i \in \mathcal{C}_k} \mu_i^0$ be the true center of the $k$-th cluster implied by the partition $\mathcal{C}$.
Define the $KP \times 1$ vector of true cluster centers: $\theta^0(\mathcal{C}) = [\theta^{0\prime}_{1}(\mathcal{C}),\dots,\theta^{0\prime}_{K}(\mathcal{C})]'$ and let \( \Omega(\mathcal{C}) \in \mathbb{R}^{KP \times KP} \) denote the variance-covariance matrix of the vector \( \hat{\theta}(\mathcal{C}) \) after scaling, that is,
$
\Omega(\mathcal{C}) = \mathbb{V} \{ \sqrt{T} [\hat{\theta}(\mathcal{C}) - \theta^0(\mathcal{C})]\},
$
and let \( \mathcal{N}(\mathcal{C}) = \mathrm{diag}(|\mathcal{C}_1|,\dots,|\mathcal{C}_K|) \otimes I_P \).
The following result summarizes the usual properties of the sample mean for a fixed $\mathcal{C}$. It will prove useful even though we focus on estimated clusters, since we will condition on the estimated cluster for valid C-EPA testing.
\begin{lemma}\normalfont\label{lemma:clt}
Let $\mathcal{C}$ be a fixed partition of $[N]$ into $K$ clusters.
Under Assumptions~\ref{ass:alternative}--\ref{ass:clt}, the following results hold as $(T,N) \to \infty$:
\begin{enumerate}[label=(\alph*)]
    \item\label{lemma:clt:lln} $\hat{\theta}_k(\mathcal{C}) - \theta^0_k(\mathcal{C}) = o_p(1)$ for each $k \in \{1,\dots,K\}$.
        \item\label{lemma:clt:clt} $
    \widehat{\Sigma}_{k,g}(\mathcal{C})^{-1/2}
    \sqrt{T}\,
    \bigl[\hat{\theta}_k(\mathcal{C}) - \hat{\theta}_g(\mathcal{C}) - \theta^0_k(\mathcal{C}) + \theta^0_g(\mathcal{C})\bigr]
    \overset{d}{\longrightarrow}
    \mathcal{N}(0, I_P)
    $.
\end{enumerate}
\end{lemma}

Part~\ref{lemma:clt:lln} is a law of large numbers showing that Assumptions~\ref{ass:alternative}--\ref{ass:clt} suffice for the consistency of the sample cluster centers under any fixed partition $\mathcal{C}$, and Part~\ref{lemma:clt:clt} is the corresponding CLT for pairwise differences.
It follows from Assumption~\ref{ass:clt} by observing that the pairwise difference $\sqrt{T}[\hat{\theta}_k(\mathcal{C})-\hat{\theta}_g(\mathcal{C})]$ is a linear functional of $\sqrt{T} \bar{V}$: specifically, $\sqrt{T}[\hat{\theta}_k(\mathcal{C})-\hat{\theta}_g(\mathcal{C})-\theta^0_k(\mathcal{C})+\theta^0_g(\mathcal{C})] =
(\delta^N_{k,g}\otimes I_P)' \sqrt{T}\bar V$ where $\delta^N_{k,g}$ is the pairwise difference weight vector defined above.
The Gaussian approximation in Assumption~\ref{ass:clt} then directly yields asymptotic normality of this linear functional, for any fixed $K$ and fixed pairwise difference $(k,g)$.
The eigenvalue lower bound on $\Xi$ ensures that
$
\Sigma_{k,g}(\mathcal C)
=
(\delta^N_{k,g}\otimes I_P)'
\Xi
(\delta^N_{k,g}\otimes I_P)
$
is positive definite for each $(T,N)$, so its inverse square root is
well defined.
Cross-sectional dependence of arbitrary form, including strong factor-driven dependence, is accommodated because $\Xi$ need not be block-diagonal.

\begin{remark}\normalfont
Lemma~\ref{lemma:clt} applies to a fixed partition $\mathcal{C}$ and does not directly give the asymptotic distribution of cluster mean estimators when $\widehat{\mathcal{C}}$ is estimated from the data. Below, we condition on the realized estimated partition $\widehat{\mathcal{C}}$ and invoke Lemma~\ref{lemma:clt} for that fixed realization. This is the key step that makes selective inference feasible: by conditioning, we treat $\widehat{\mathcal{C}}$ as known, so the CLT for fixed partitions applies.
\end{remark}

The following assumptions will be used to explore the statistical properties of our tests under the alternative, that is, for the power properties.

\begin{assumptionS}\normalfont\label{ass:clustercenters} $\mu_i^0 =\theta^0_{k}$ for all $i \in \mathcal{C}^0_k$ and $k = 1,\dots,K^0$, where $\theta^0_{k}$ is the true cluster center of the $k$th cluster and $\mathcal{C}^0_k$ is the set of units belonging to the true $k$th cluster.
\end{assumptionS}

\begin{assumptionS}\normalfont\label{ass:separation} Let $K^0 \geq 2$. Then for all $k,g \in \{1,\dots,K^0\}$, $k \neq g$, there exists $C_{k,g} > 0$ such that $\lVert \theta^0_{k} - \theta^0_{g} \rVert^2 \geq C_{k,g}$.
\end{assumptionS}

\begin{assumptionS}\normalfont\label{ass:ghat_consistent} There exist constants $a_1>0$ and $b_1 > 0$ such that, for each $i=1,\dots,N$, $V_{it}$ is $\alpha$-mixing with mixing coefficients $\alpha[t] \leq e^{-a_1t^{b_1}}$.
Moreover, there exist constants $a_2>0$ and $b_2 > 0$ such that $\Pr(\lvert V_{p,it} \rvert > C) \leq e^{1-(C/a_2)^{b_2}}$ for all $p$, $i$, $t$ and $C>0$.
\end{assumptionS}

Assumption \ref{ass:clustercenters} states that the centers of panel units are homogeneous within clusters but heterogeneous between them.
Assumption \ref{ass:separation} complements it by placing a lower bound on the differences between cluster centers, so that the true cluster centers are well separated; it implicitly formalizes the situation where $\mathcal{H}_0$ fails because the population contains clusters that differ in their expectations.
Although it implies that the C-EPA null hypothesis fails, this cluster separation assumption is not necessary (but sufficient) for our tests to have power.
As documented in the next section, even if $K^0 = 1$, that is, there is only one cluster in the population, our proposed tests have power if the population overall mean is different from zero.

Assumption \ref{ass:ghat_consistent} places additional constraints on the dependence properties and tail probabilities of $V_{it}$ beyond Assumptions \ref{ass:alternative} and \ref{ass:clt}, imposed for the consistent estimation of cluster membership and the asymptotic equivalence of the Panel Kmeans cluster center estimators to those based on true clusters.

\begin{lemma}\normalfont\label{lemma:kmeansconsistent}
    Under Assumptions \ref{ass:alternative}--\ref{ass:clt}, \ref{ass:clustercenters}--\ref{ass:separation}, if $K = K^0$, then as $(T,N) \to \infty$,
    \begin{enumerate}[label=(\alph*)]
        \item\label{lemma:clt:lln_kmeans} $\hat{\theta}(\widehat{\mathcal{C}}) - \theta^0 = o_p(1)$.
        \item\label{lemma:clt:ghat_consistent} If Assumption \ref{ass:ghat_consistent} also holds, for all $\xi > 0$, $\Pr(\sup_{i\in \{1,\dots,N\}} \lvert \widehat{k}_{i} - k_i^0 \rvert > 0) = o(1) + o(NT^{-{\xi}})$,
        \item\label{lemma:clt:equivalence} $\hat{\theta}(\widehat{\mathcal{C}}) - \hat{\theta}(\mathcal{C}^0) = o_p(T^{-{\xi}})$.
        \item\label{lemma:clt:clt_kmeans} If also $N/T^{\xi} \to 0$,
        $
        \widehat{\Sigma}_{k,g}(\mathcal{C}^0)^{-1/2}
        \sqrt{T}\,\bigl[\hat{\theta}_k(\widehat{\mathcal{C}}) - \hat{\theta}_g(\widehat{\mathcal{C}}) - \theta^0_k + \theta^0_g\bigr]
        \overset{d}{\longrightarrow}
        \mathcal{N}(0,I_P)
        $.
    \end{enumerate}
\end{lemma}

Based on this result, a naive attempt to test the C-EPA null would estimate the unknown clusters using the Panel Kmeans estimator and then use these estimates to construct a Wald test statistic.
Let $W(\widehat{\mathcal{C}})$ be the usual Wald test statistic calculated using the Panel Kmeans estimates obtained using the above algorithm.
Consider the test which rejects the associated null if $p [ w(\widehat{\mathcal{C}}) ] \leq \alpha$ for some $\alpha \in (0,1)$.
The problem is that the same data are used both to estimate the clusters and to test the null. It is now well known that testing the null hypothesis of homogeneity (that is, that no clusters exist) following a clustering method such as Kmeans or hierarchical clustering leads to extremely anti-conservative test statistics \citep{gao24,patton23,chen23}.
This occurs because cluster selection is a data-dependent procedure that implicitly favors detecting heterogeneity even under the null: the algorithm partitions the data to minimize within-cluster loss, producing artificially separated cluster means and hence a selection bias that severely inflates Type I error rates if not accounted for.
As explained in Section \ref{sec:tests} below, the null hypothesis of these studies is a sub-hypothesis of the null in our paper, hence, the naive tests of EPA suffer from the same problem.
We demonstrate the consequences of this naive approach with simulations in Section \ref{sec:monte}.

\section{Tests of Generalized C-EPA with Unknown Clusters}\label{sec:tests}
In this section, we develop a valid test for the C-EPA null hypothesis when clusters are estimated via the Panel Kmeans algorithm.
As noted in the previous section, naive use of the estimated clusters for testing over-rejects the null, so we employ a selective conditional inference approach that controls the Type I error rate by conditioning on the estimated clusters.
We consider an approach based on sample splitting in Section S.2 of the supplement.
We first break down the C-EPA hypothesis into its sub-hypotheses of homogeneity and O-EPA. The implication \eqref{eq:h0rewritten} of the null of interest \eqref{eq:null} can be written as $\mathcal{H}'_{0}: \mathcal{H}^{homo}_{0} \bigcap \mathcal{H}^{oepa}_{0}$,
with
\begin{equation}\label{eq:homonull}
\mathcal{H}^{homo}_{0}: \{ \theta^0_k(\mathcal{C}) = \theta^0_{g}(\mathcal{C}) \} \text{ for all } k, g \in \{1, \dots, K\}, \; k \neq g,
\end{equation}
being the homogeneity hypothesis and
\begin{equation}\label{eq:oepanull}
\mathcal{H}^{oepa}_{0}: \frac{1}{N} \sum_{k=1}^K |\mathcal{C}_k| \theta^0_k(\mathcal{C}) = 0,
\end{equation}
the O-EPA hypothesis, as named by APUY where the overall predictive performance is represented as a weighted average of cluster means.
We note that the parameter of interest in the O-EPA hypothesis is invariant to the clusters chosen.
Both $\mathcal{H}^{homo}_{0}$ and $\mathcal{H}^{oepa}_{0}$ are of particular empirical relevance.
The unconditional O-EPA hypothesis is studied by APUY under different assumptions on the dependence structure of the loss differentials under known clusters, while the importance of testing the homogeneity hypothesis $\mathcal{H}^{homo}_{0}$ goes beyond EPA testing \citep[see, in particular, the applications of][and the discussion therein]{patton23}.
Section \ref{sec:selective} develops a selective conditional inference framework to test the homogeneity of a pair of clusters selected by Panel Kmeans, then proposes a $p$-value combination test of $\mathcal{H}^{homo}_{0}$. Section \ref{sec:maintests} presents an O-EPA test and the main test statistic of $\mathcal{H}_{0}$, and Section \ref{sec:bic} covers the case of an unknown number of clusters and develops a method for its estimation.

\subsection{Testing the null of homogeneity}\label{sec:selective}

We develop a test for \eqref{eq:homonull}. {We first develop} tests for each pairwise equality sub-hypothesis {and present} their theoretical properties{, then build} a homogeneity test via a $p$-value combination method.

\vspace{2mm}
{\em Testing pairwise equality.}\; The homogeneity null $\mathcal{H}^{homo}_{0}$ is the intersection of $K(K-1)/2$ unique pairwise equality hypotheses.
For each of these nulls, we define the test statistic $D_{k,g}(\widehat{\mathcal{C}})$ as the square root of the associated Wald test statistic{:}
% That is,
\begin{equation}
D^2_{k,g}(\widehat{\mathcal{C}}) = T [ \hat{\theta}_{k}(\widehat{\mathcal{C}}) - \hat{\theta}_{g}(\widehat{\mathcal{C}}) ]' \widehat{\Sigma}^{-1}_{k,g}(\widehat{\mathcal{C}}) [ \hat{\theta}_{k}(\widehat{\mathcal{C}}) - \hat{\theta}_{g}(\widehat{\mathcal{C}}) ],
\end{equation}
where
$\widehat{\Sigma}_{k,g}(\widehat{\mathcal{C}}) = \widehat{\omega}_{k,k}(\widehat{\mathcal{C}}) + \widehat{\omega}_{g,g}(\widehat{\mathcal{C}}) - 2\widehat{\omega}_{k,g}(\widehat{\mathcal{C}})
$
with $\widehat{\omega}_{k,g}(\widehat{\mathcal{C}})$ being the $\{k,g\}$th $P \times P$ block of $\widehat{{\Omega}}(\widehat{\mathcal{C}})$.
Under appropriate conditions, $D_{k,g}(\mathcal{C}) \overset{d}{\longrightarrow} \chi_{\scriptscriptstyle P}$ as $(T,N) \longrightarrow \infty$, where $\chi_{\scriptscriptstyle P}$ is a {$\chi$ variate} with $P$ degrees of freedom. {As} discussed in the previous section{, however,} the associated critical values lose their validity when used with estimated clusters.

We define the following asymptotic selective Type I error rate which will be the basis for valid C-EPA testing with unknown clusters.
\begin{definition}\normalfont\label{definition:selectivepvalue}
For a pair of clusters $k, g \in \{1, \dots, K\}, \; k \neq g$ a test of
$
\mathcal{H}^{k,g}_0 : \{ \theta^0_k(\mathcal{C}) = \theta^0_{g}(\mathcal{C}) \}
$
controls the selective Type I error rate asymptotically as $(T,N) \to \infty$ at level $\alpha \in (0,1)$ if
\begin{equation}\label{eq:selective_pvalue_def}
    \lim_{(T,N) \to \infty} \Pr_{\mathcal{H}_0} \left[ \text{Reject } \mathcal{H}_0 \text{ at level } \alpha \;\middle|\; \bigcap_{i=1}^N \lbrace \hat{k}_i(Z) = \hat{k}_i(z) \rbrace \right] \leq \alpha,
\end{equation}
where $\hat{k}_i(Z)$, $i=1,\dots,N$ is the output of Algorithm \ref{algo} and $\hat{k}_i(z)$ is its realization in the observed sample.
\end{definition}
A valid test of {$\mathcal{H}^{k,g}_0$ is thus one} that controls the selective Type I error rate $\alpha$ given the clusters estimated by {Panel Kmeans}. The conditioning event in \eqref{eq:selective_pvalue_def} implies that $\mathcal{H}^{k,g}_0$ should be rejected if the probability of obtaining a statistic as large as the one in hand does not exceed $\alpha$ among all realizations of $Z$ {that yield} the same clustering as {the} realization $z$.

As noted by \cite{chen23}, characterizing this condition is not trivial, but we can instead condition on the clusters estimated at all $m=1,\dots,M$ steps of the algorithm. Two further terms to condition on emerge from decomposing the random matrix $Z$ into a term associated with $D_{k,g}(\mathcal{C})$ and a term orthogonal to it:
\begin{equation}\label{eq:zrewritten_inthetext}
Z = \Pi_{k,g} Z + D_{k,g}(\mathcal{C}) \frac{ \sqrt{T} \nu_{k,g}}{ \lVert \nu_{k,g} \rVert^2} \{ \mathrm{dir}[\widehat{\Sigma}^{-1/2}_{k,g}(\mathcal{C})Z'\nu_{k,g}] \}'\widehat{\Sigma}^{1/2}_{k,g}(\mathcal{C}),
\end{equation}
where
$
\Pi_{k,g} = I - \hat{\nu}_{k,g} \hat{\nu}_{k,g}' / \lVert \hat{\nu}_{k,g} \rVert^2
$
is the orthogonal projection matrix onto the subspace orthogonal to
$
\hat{\nu}_{k,g} = (\hat{\nu}'_{k,g,1},\dots,\hat{\nu}'_{k,g,N})'
$
with $\hat{\nu}_{k,g,i} = \iota_{T}\hat{\delta}_{k,g,i}$, $\iota_{T}$ being a $T \times 1$ vector of ones and $\hat{\delta}_{k,g,i} = \mathbf{1} \{ \hat{k}_i = k \}/|\widehat{\mathcal{C}}_k| - \mathbf{1} \{ \hat{k}_i = g \}/|\widehat{\mathcal{C}}_g|$.
This equality is derived in Equation \eqref{eq:zrewritten} of Appendix \ref{sec:proofs}{, and} the conditional distribution of $D_{k,g}(\widehat{\mathcal{C}})$ given $\widehat{\mathcal{C}}$ {is based on it.}

Now we define the following asymptotic $p$-value
\begin{equation}\label{eq:pairwise_p}
p_{\infty} [ d_{k,g}(\widehat{\mathcal{C}}) ] = \lim_{(T,N) \to \infty} P_{\mathcal{H}_0} \left[ D_{k,g}(\widehat{\mathcal{C}}) \geq d_{k,g}(\widehat{\mathcal{C}}) \;\middle|\; \mathcal{A}
\right],
\end{equation}
for $k,g \in \{1,\dots,K\}$, where
\begin{equation}\label{eq:conditions}
\begin{aligned}
\mathcal{A}
=
\Bigg\lbrace
&
\bigcap_{m=1}^M
\bigcap_{i=1}^N
\left\lbrace
k^{(m)}_i(Z) = k^{(m)}_i(z)
\right\rbrace,
\quad
\Pi_{k,g} Z = \Pi_{k,g} z,
\\
&
\mathrm{dir}
\left[
\widehat{\Sigma}^{-1/2}_{k,g}(\widehat{\mathcal{C}})
Z' \hat{\nu}_{k,g}
\right]
=
\mathrm{dir}
\left[
\widehat{S}^{-1/2}_{k,g}(\widehat{\mathcal{C}})
z' \hat{\nu}_{k,g}
\right]
\Bigg\rbrace,
\end{aligned}
\end{equation}
with $\widehat{S}_{k,g}(\mathcal{C})$ being a realization of $\widehat{\Sigma}_{k,g}(\widehat{\mathcal{C}})$ associated with the realization $z$ of $Z$.

The first condition in \eqref{eq:conditions} is crucial to the {framework: as required by} Definition \ref{definition:selectivepvalue}, {it states that} the cluster {assigned to} each {unit} $i$ {at} every iteration $m$ {under the} realization $z$, namely $k^{(m)}_i(z)$, {matches the one obtained under} $Z$, that is $k^{(m)}_i(Z)$. The next two conditions {remove} the nuisance parameters $\Pi_{k,g} Z$ and $\mathrm{dir} [\widehat{\Sigma}^{-1/2}_{k,g}(\widehat{\mathcal{C}})Z' \hat{\nu}_{k,g}]$ {in} \eqref{eq:zrewritten_inthetext}{, without which} the conditional distribution of $D_{k,g}(\widehat{\mathcal{C}})$ given $\widehat{\mathcal{C}}$ is not tractable.
These are standard conditions in {the} selective conditional inference literature \citep[see][]{chen23}.

The asymptotic $p$-value $p_{\infty} [ d_{k,g}(\widehat{\mathcal{C}}) ]$ is based on the {methodology} of \cite{chen23} but generalizes it in several ways. {Here the} random variables $Z_{it}$ {are double-indexed,} $i=1,\dots,N$, $t=1,\dots,T$. {Their study allows dependence only} across different variables of the same observation, {that is between the $p$-th and $c$-th elements} $Z_{p,it}$ {and} $Z_{c,it}$, {and not between} $Z_{it}$ {and} $Z_{js}$ {for} $i \neq j$ or $t \neq s$, whereas we allow arbitrary autocorrelation and CD as well as {dependence between elements} of $Z_{it}$. {Finally,} their method depends crucially on the normality of the data generating process, whereas we {exploit the time series dimension through the CLT in} Lemma \ref{lemma:clt}.

The following lemma shows how to calculate a $p$-value in observed samples following this definition.

\begin{lemma}\normalfont\label{lemma:perturbation}
    Let $k,g \in \{1,\dots,K\}$ with $k\neq g$ and $K \geq 2$ given, and $B \to \infty$ as $(T,N) \to \infty$ such that $B/T \to 0$.
    Under Assumptions \ref{ass:alternative}-\ref{ass:clt} and $\mathcal{H}^{k,g}_0$, a $p$-value following the asymptotic principle \eqref{eq:pairwise_p} can be calculated using
    \begin{equation}
        p[d_{k,g}(\widehat{\mathcal{C}})] = 1 - F_{\chi_{\scriptscriptstyle P}} [\, d_{k,g}(\widehat{\mathcal{C}});\mathcal{T} \,],
    \end{equation}
    where $F_{\chi_{\scriptscriptstyle P}}( \ \cdot \ ;\mathcal{T})$ denotes the cumulative distribution function of a $\chi_{\scriptscriptstyle P}$ random variable truncated to the set $\mathcal{T}$ with
    \begin{equation}\label{eq:set_s}
        \mathcal{T} = \left\lbrace \phi \in \mathbb{R}_{\geq 0} : \bigcap_{m=1}^M \bigcap_{i=1}^N \{ k^{(m)}_i[z(\phi)] = k^{(m)}_i(z) \} \right\rbrace,
    \end{equation}
    and
    \begin{equation}\label{eq:perturbed}
        z(\phi) = \Pi_{k,g} z + \phi \frac{ \sqrt{T} \hat{\nu}_{k,g}}{ \lVert \hat{\nu}_{k,g} \rVert^2} \{ \mathrm{dir}[\widehat{S}^{-1/2}_{k,g}(\widehat{\mathcal{C}})z'\hat{\nu}_{k,g}] \}'\widehat{S}^{1/2}_{k,g}(\widehat{\mathcal{C}}).
    \end{equation}
\end{lemma}

{Equation} \eqref{eq:perturbed} defines a perturbation $z(\phi)$ of the {data} $z$ {in which the} clusters $k$ and $g$ are {pushed together or pulled apart} in the direction of $\widehat{S}^{-1/2}_{k,g}(\widehat{\mathcal{C}})z'\hat{\nu}_{k,g}${:} $z(\phi)=z$ {when} $\phi=d_{k,g}(\widehat{\mathcal{C}})${, the clusters are pulled apart when} $\phi > d_{k,g}(\widehat{\mathcal{C}})${, and pushed together when} $\phi < d_{k,g}(\widehat{\mathcal{C}})${, their centers coinciding in the extreme case} $\phi = 0$. {Thus} $\phi$ measures the degree of perturbation \citep[see {Figure} 2 of][]{chen23}. The formulae for the truncation sets $\mathcal{T}$ for Panel Kmeans is documented in Section S.4 of the supplement. The following result establishes the validity of $p[ d_{k,g}(\widehat{\mathcal{C}}) ]$ for the pairwise null {$\mathcal{H}^{k,g}_0$ of} Definition \ref{definition:selectivepvalue}.

\begin{proposition}\normalfont\label{proposition:pairwise_p}
Let $k,g \in \{1,\dots,K\}$ with $k\neq g$ and $K \geq 2$ given, and $B \to \infty$ as $(T,N) \to \infty$ such that $B/T \to 0$.
\begin{enumerate}[label=(\alph*)]
    \item\label{proposition:pairwise_p:partb} Under Assumptions \ref{ass:alternative}-\ref{ass:clt}, and $\mathcal{H}^{k,g}_0$,
    \begin{equation*}
    \lim\limits_{(T,N) \to \infty} \Pr \{\, p [ d_{k,g}(\widehat{\mathcal{C}}) ] \leq \alpha \,\} = \alpha, \; \forall \alpha \in (0,1).
    \end{equation*}
    \item\label{proposition:pairwise_p:partc} Suppose now that $K = K^0 \geq 2$, and $N/T^{\xi} \to 0$ for some $\xi > 0$. Under Assumptions \ref{ass:alternative}-\ref{ass:ghat_consistent}, and if $\mathcal{H}^{k,g}_0$ fails,
    \begin{equation*}
    \lim_{(T,N) \to \infty} \Pr \{\, p [ d_{k,g}(\widehat{\mathcal{C}}) ] \leq \alpha \,\} = 1, \; \forall \alpha \in (0,1).
    \end{equation*}
\end{enumerate}
\end{proposition}

Part \ref{proposition:pairwise_p:partb} states that{, under the null of pairwise cluster equality,} $p [ d_{k,g}(\widehat{\mathcal{C}}) ]$ {asymptotically satisfies} the definition of a $p$-variable of \cite{vovk20}{, of which} the $p$-value $p [ d_{k,g}(\widehat{\mathcal{C}}) ]$ is a realization{; following common practice, we refer to both as} $p$-values.
Part \ref{proposition:pairwise_p:partc} shows that $D_{k,g}(\widehat{\mathcal{C}})$ is consistent whenever $\mathcal{H}^{k,g}_0$ fails{, provided} $K$ is correctly chosen {equal to} $K^0${; we relax this requirement} in Section \ref{sec:bic} by proposing an information criterion to estimate $K^0$.

\begin{remark}\normalfont
The framework described here can be modified to test the null of significance of each cluster center.
Namely, to test $\mathcal{H}^{k}_0 : \theta^0_k(\mathcal{C}) = 0 \text{ for } k \in \{1, \dots, K\}$, one can consider $D^2_{k}(\widehat{\mathcal{C}}) = T \hat{\theta}_{k}(\widehat{\mathcal{C}})' \widehat{\omega}_{k,k}(\widehat{\mathcal{C}})^{-1}\hat{\theta}_{k}(\widehat{\mathcal{C}})$ and set 
$\Pi_{k} = I - \hat{\nu}_{k} \hat{\nu}_{k}'/\lVert \hat{\nu}_{k} \rVert^2$ where $\hat{\nu}_{k} = (\hat{\nu}'_{k,1},\dots,\hat{\nu}'_{k,N})'$, $\hat{\nu}_{k,i} = \iota_{T}\hat{\delta}_{k,i}$ and $\hat{\delta}_{k,i} = \mathbf{1} \{ \hat{k}_i = k \}/|\widehat{\mathcal{C}}_k|$.
The results concerning the statistical properties of the test statistic, in particular the asymptotic truncated distribution, remain seemingly unchanged.
\end{remark}

{\em Testing homogeneity.}\; We construct a $p$-value combination test for the homogeneity null \eqref{eq:homonull} by aggregating the $n_p$ selective $p$-values $p[d_{k,g}(\widehat{\mathcal{C}})]$ from all unique pairwise equality tests. Define
\[
\mathcal P_K=\{(k,g):1\leq k<g\leq K\}, 
\quad 
n_p=|\mathcal P_K|=\frac{K(K-1)}{2}.
\]
Following the recent studies of \citet{vovk20} and \citet{vovk2022admissible} on the M-family of merging functions, our proposed test is based on the generalized mean of order $r \in \mathbb{R} \setminus {0}$ defined as:
\begin{equation*}
F_{r} = b_{r,n_p} \left\lbrace \frac{1}{n_p} \sum_{(k,g)\in\mathcal P_K} \{ p[d_{k,g}(\widehat{\mathcal{C}})] \}^r \right\rbrace^{1/r} \wedge 1
\end{equation*}
where $b_{r,n_p}$ is a calibration constant chosen to ensure that $F_{r,n_p}$ is a valid $p$-value under arbitrary dependence among the $p$-values.

{The} M-family nests classical combination rules as special cases{:} $r=1$, $r \to 0$ and $r = -1$ {give the} arithmetic{,} geometric and harmonic {means}, respectively{, and} $r \to -\infty$ {gives the} Bonferroni $p$-merging function \citep{vovk20}. Not all of these{, however,} preserve the merging or precision properties under arbitrary dependence, especially for small numbers of $p$-values.
In our selective inference framework where the $p$-values are dependent due to overlapping clustering and shared data, we choose a value of $r$ within the admissible range $r \in [-\infty,-1)$ to ensure that the resulting M-mean is a valid $p$-merging function under dependence.
{Simulations show that} this choice provides the best finite-sample accuracy among the {many} considered by \citet{vovk20} and \citet{vovk2022admissible}.
Following Proposition 5 of \citet{vovk20}, we set $b_{r,n_p} = [r/(r+1)] n_p^{1+1/r}$ for this choice of the interval of $r$.
The resulting homogeneity test statistic is given by:
\begin{equation}\label{eq:Mhomo}
F_{homo,r} = \frac{r}{r+1} n_p^{1+1/r} \left\lbrace \frac{1}{n_p} \sum_{(k,g)\in\mathcal P_K} \{ p[d_{k,g}(\widehat{\mathcal{C}})] \}^r \right\rbrace^{1/r} \wedge 1
\end{equation}
with $r \in [-\infty,-1)$.

This test statistic belongs to the class of precise merging functions, satisfying {monotonicity and sharpness} under arbitrary dependence of the input $p$-values{, and the} normalization factor $[r/(r+1)] n_p^{1+1/r}$ guarantees that \eqref{eq:Mhomo} defines a valid $p$-value under the global null. This is shown in Theorem 2 of \citet{vovk20} and generalized in Theorem 3 of \citet{vovk2022admissible}, {which} establish the admissibility and optimality of such M-family {merging} functions{; in particular,} $F_{homo,r}$ controls the family-wise Type I error under any {dependence} between the constituent $p$-values.

\begin{remark}\normalfont
Unlike Fisher's method \citep{fisher32}, which assumes independence, or Bonferroni's $p$-merging function, which is conservative, this choice of merging function maintains optimal Type I control under general dependence structures.
\end{remark}

\begin{remark}\normalfont
A similar $p$-merging function was recently used by \cite{spreng23} in a multiple forecast comparison setting. The difference {lies in the} constant $b_{r,n_p}${:} while \cite{spreng23} sets $b_{r,n_p} = r/(r+1)$, we follow exactly the constant {of} Proposition 5 of \citet{vovk20}, $b_{r,n_p} = [r/(r+1)] n_p^{1+1/r}${,} which we found to {yield} smaller size distortions in our {framework} with a small number of $p$-values combined.
\end{remark}

Now we state the asymptotic properties of the test statistic $F_{homo,r}$.

\begin{theorem}\normalfont\label{theorem:homo}
Let $K \geq 2$ be given, and $B \to \infty$ as $(T,N) \to \infty$ such that $B/T \to 0$.
\begin{enumerate}[label=(\alph*)]
    \item\label{theorem:homo:parta} Under Assumptions \ref{ass:alternative}-\ref{ass:clt}, and $\mathcal{H}^{homo}_{0}$, 
    \begin{equation*}
    \limsup\limits_{(T,N) \to \infty} F_{homo,r} \leq \alpha, \; \forall \alpha \in (0,1).
    \end{equation*}
    \item\label{theorem:homo:partb} Suppose now that $K = K^0 \geq 2$ and $N/T^{\xi} \to 0$ for some $\xi > 0$. Under Assumptions \ref{ass:alternative}-\ref{ass:ghat_consistent}, and if $\mathcal{H}^{homo}_{0}$ fails,
    \begin{equation*}
    \lim_{(T,N) \to \infty} \Pr [\, F_{homo,r} \leq \alpha \,] = 1, \; \forall \alpha \in (0,1).
    \end{equation*}
\end{enumerate}
\end{theorem}

{Although not crucial for} our C-EPA test statistic with unknown clusters, $F_{homo,r}$ is of particular empirical importance as a strong alternative to the Split Sample homogeneity test {of} \cite{patton23}. Part \ref{theorem:homo:parta} shows that {it controls the} Type I error rate asymptotically{, while} Part \ref{theorem:homo:partb} shows that it is consistent if at least one {pairwise equality null} $\mathcal{H}^{k,g}_0$ fails.

\subsection{The overall EPA test and the main result}\label{sec:maintests}

To test the second sub-hypothesis of \eqref{eq:null}, the O-EPA hypothesis $\mathcal{H}^{oepa}_{0}$, which states that the two forecasts are equally good on average given past information. Consider the statistic
\begin{equation}
W_{oepa} = \frac{B-P+1}{PB} T\bar{Z}'_{o} \widehat{\Omega}^{-1}_{o} \bar{Z}_{o},
\end{equation}
where $\bar{Z}_{o} = T^{-1} \sum_{t=1}^T \bar{Z}_{t}$, 
$\bar{Z}_{t} = N^{-1} \sum_{i=1}^N Z_{it}$, and
$\widehat{\Omega}_{o} = B^{-1}\sum_{j=1}^B \widehat{\Lambda}_{o,j}\widehat{\Lambda}_{o,j}'$, with
\begin{equation}
\widehat{\Lambda}_{o,j}
=
\sqrt{\frac{2}{T}} \sum_{t=1}^T
(\bar{Z}_{t} - \bar{Z}_{o})
\cos \left[
\pi j \left( \frac{t-1/2}{T} \right)
\right].
\end{equation}

The asymptotic properties of $W_{oepa}$ are summarized in the following proposition.
\begin{proposition}\normalfont\label{proposition:overall}
Suppose that Assumptions \ref{ass:alternative} and \ref{ass:clt} hold with $\mathcal{C} = (1,\dots,1)$, that is $K=1$.
Then, for $B$ fixed as $(T,N) \to \infty$, the following results hold.
\begin{enumerate}[label=(\alph*)]
    \item\label{proposition:overall:parta} Under $\mathcal{H}^{oepa}_{0}$, $W_{oepa} \overset{d}{\longrightarrow} \mathbb{F}_{P,B-P+1}$.
    \item\label{proposition:overall:partb} Suppose that $\mathcal{H}^{oepa}_{0}$ fails. Let $C>0$ be a fixed real number. Then $\Pr[\; W_{oepa}>C \;] \to 1$.
\end{enumerate}
\end{proposition}

The test rejects the null of O-EPA if $p ( w_{oepa} ) = \Pr \left[\; \mathbb{F}_{P,B-P+1} \geq w_{oepa} \;\right] \leq \alpha$ where $\alpha \in (0,1)$ is the predetermined Type I error rate.
When $B=T$ and $P=1$, the statistic becomes a Wald-type statistic robust to arbitrary CD but not controlling for autocorrelation{, a} special case of the $S^{(3)}$ test of APUY {with} the kernel bandwidth chosen to ignore potential autocorrelation.

We now turn to our main test statistic for the C-EPA null $\mathcal{H}_{0}$. As in the previous section, we propose a $p$-value combination statistic {built from} the $p$-values {of} the $n_p$ pairwise equality tests and the O-EPA test:
\begin{equation}\label{eq:selective_comb}
F_{SI,r} = \frac{r}{r+1} (n_p+1)^{1+1/r} \left\lbrace \frac{1}{n_p+1} \sum_{(k,g)\in\mathcal P_K} \{ p[d_{k,g}(\widehat{\mathcal{C}})] \}^r + \frac{1}{n_p+1} p ( W_{oepa} )^r \right\rbrace^{1/r} \wedge 1
\end{equation}
where $r \in [-\infty,-1)$.

The following main result summarizes the desired properties of \eqref{eq:selective_comb}.
\begin{theorem}\normalfont\label{theorem:main}
Let $K \geq 2$ be given, and $B \to \infty$ as $(T,N) \to \infty$ such that $B/T \to 0$.
\begin{enumerate}[label=(\alph*)]
    \item\label{theorem:main:parta} Under Assumptions \ref{ass:alternative}-\ref{ass:clt}, and $\mathcal{H}_{0}$,
    \begin{equation*}
    \limsup\limits_{(T,N) \to \infty} F_{SI,r} \leq \alpha, \; \forall \alpha \in (0,1).
    \end{equation*}
    \item\label{theorem:main:partb} Suppose now that $K = K^0 \geq 2$ and $N/T^{\xi} \to 0$ for some $\xi > 0$. Under Assumptions \ref{ass:alternative}-\ref{ass:ghat_consistent}, and if either $\mathcal{H}^{homo}_{0}$ or $\mathcal{H}^{oepa}_{0}$ fails, then,
    \begin{equation*}
    \lim_{(T,N) \to \infty} \Pr [\, F_{SI,r} \leq \alpha \,] = 1, \; \forall \alpha \in (0,1).
    \end{equation*}
\end{enumerate}
\end{theorem}

The result shows that the proposed selective conditional inference test controls the Type I error rate and is consistent, its power approaching one when either $\mathcal{H}^{homo}_{0}$ or $\mathcal{H}^{oepa}_{0}$ fails. The finite sample properties are investigated in Section \ref{sec:monte}, where the simulations confirm these theoretical expectations.

\subsection{Estimating the number of clusters under the alternative}\label{sec:bic}

When the number of clusters under the alternative is unknown, it can be estimated from the sample in hand. For this purpose, \cite{patton23} suggest a multiple testing procedure based on the Bonferroni correction. {Adapting their proposal, one would compute} the $p$-value {of} \eqref{eq:selective_comb} for $K = 2,\dots,K_{max}$, apply the usual Bonferroni correction{, and reject} $\mathcal{H}_{0}$ if the Bonferroni $p$-value does not exceed the predetermined Type I error rate.
As an alternative, we propose an information criterion (IC) to estimate the number of clusters.
Consider the following IC:
\begin{equation*}
IC(K) = \log \left[ \mathrm{det} \left( \frac{1}{NT} \sum_{i=1}^N \sum_{t=1}^T \widehat{V}_{it}(K) \widehat{V}'_{it}(K) \right) \right] + (KP + N) \frac{\varsigma \log(NT)}{NT},
\end{equation*}
where $\widehat{V}_{it}(K) = Z_{it} - \hat{\theta}_{K,\hat{k}_i}$ with $\hat{\theta}_{K,\hat{k}_i}$ being the solution to \eqref{eq:kmeans} with $K$ clusters, and $\varsigma$ is a tuning constant.
The IC estimate of the number of clusters is given by
\begin{equation}\label{eq:kmeans_khat}
\widehat{K}_{IC} = \argmin_{K \in \{2,\dots,K_{max}\}} IC(K).
\end{equation}

For the Split Sample test, this IC can be adapted using only the training portion of the data, and other penalty functions can be employed \citep[see, for instance,][for different penalties for estimating the number of factors in factor models]{bai02}. Our IC adapts the one used by \cite{lumsdaine2023estimation} to our multivariate framework. {It is readily seen that} $\widehat{K}_{IC}$ is consistent for $K^0 \geq 2$ under Assumptions \ref{ass:alternative}-\ref{ass:separation} if $N$ and $T$ diverge at the same rate.

In our simulations, values $\varsigma \in [1.5, 3]$ {work} well, {with} smaller values {over-estimating the number of clusters when the signal is weak; the upper bound} $\varsigma = 3$ is also suggested by \cite{lumsdaine2023estimation}. {Since} homogeneity testing is embedded in our framework, we set $\varsigma=1.5$, {sacrificing some precision by over-estimating the true number.}

The main advantage of the IC over a Bonferroni $p$-value is computational efficiency. The extra burden is negligible for Split Sample tests but quite important for the selective conditional inference tests, {because computing} the conditioning set $\mathcal{T}$ is time consuming and, {unlike} the Bonferroni $p$-value, the IC requires only the Panel Kmeans estimates for different $K$ and not $\mathcal{T}$.

An alternative is cross-validation (CV) \citep{li2022consistent}{: the} data are repeatedly split into training and validation sets, and for each $K$ the within-cluster prediction error on the validation set is evaluated using parameters estimated from the training set. The $K$ minimizing the average out-of-sample prediction error across folds is the estimate, denoted $\widehat{K}_{CV}$.
Unlike the IC \eqref{eq:kmeans_khat}, CV is data-driven and requires no tuning parameters, but it is computationally more intensive, especially together with procedures like Selective Inference. We therefore employ CV only in our empirical application, relying on the IC estimates in our simulations.

A concern with a data-dependent choice of the number of clusters is that it might invalidate the selective inference procedure by requiring further conditioning on the choice of the information criterion. For instance, in valid inference on Lasso, choosing the tuning parameter of the objective function requires extra conditioning \citep{markovic2017unifying}. The following result shows that this is not the case in our framework.

\begin{proposition}\normalfont\label{proposition:bic}
Let $\widehat{\mathcal{C}}$ be a clustering with $K$ clusters and assume that $\widehat{\mathcal{C}}$ is the unique output of Algorithm \ref{algo}.
Then, the inference procedures that condition on the clustering assignment $\widehat{\mathcal{C}}$ implicitly condition on $\widehat{K}_{IC}$ as well.
That is,
\[
\Pr\left[D_{k,g}(\widehat{\mathcal{C}}) \in \mathcal{T} \;\middle|\; \bigcap_{i=1}^N \lbrace \hat{k}_i = k_i \rbrace \right] = \Pr\left[D_{k,g}(\widehat{\mathcal{C}}) \in \mathcal{T} \;\middle|\; \widehat{K}_{IC} = K,\; \bigcap_{i=1}^N \lbrace \hat{k}_i = k_i \rbrace \right],
\]
where $\widehat{K}_{IC}$ is given by \eqref{eq:kmeans_khat}.
\end{proposition}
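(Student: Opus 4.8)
The plan is to reduce the claim to an elementary identity for conditional probabilities, once one observes that $\widehat K_{IC}$ is a deterministic function of the reported clustering $\widehat{\mathcal{C}}$. This is precisely where our setting departs from post-selection inference after the LASSO: there the selected support does not pin down the penalty level, so an additional layer of conditioning on the tuning parameter is genuinely required \citep{markovic2017unifying}, whereas here the number of groups in the output \emph{is} $\widehat K_{IC}$ by construction, so conditioning on $\widehat{\mathcal{C}}$ already fixes it.

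First I would make the measurability bookkeeping explicit. The procedure producing $\widehat{\mathcal{C}}$ consists of: running Panel Kmeans (Algorithm~\ref{algo}) for each $K \in \{2,\dots,K_{\max}\}$ to obtain the minimizing partition and centers; evaluating $IC(K)$; setting $\widehat K_{IC} = \argmin_{K} IC(K)$ as in \eqref{eq:kmeans_khat}; and reporting the Panel Kmeans partition with $\widehat K_{IC}$ groups. Under the hypothesis that Algorithm~\ref{algo} has a unique output at the realization in hand, every step above is a well-defined measurable map of $Z$, so $\widehat{\mathcal{C}}$ and $\widehat K_{IC}$ are both functions of $Z$; moreover $\widehat K_{IC}$ is $\sigma(\widehat{\mathcal{C}})$-measurable, since the reported partition has exactly $\widehat K_{IC}$ non-empty groups with labels attaining every value in $\{1,\dots,\widehat K_{IC}\}$, so the number of clusters of $\widehat{\mathcal{C}}$ equals $\widehat K_{IC}$.

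Next I would combine this with the conditioning event. Let $E = \bigcap_{i=1}^N \{\hat k_i(Z) = k_i\}$ be the event that $\widehat{\mathcal{C}}$ equals the fixed partition into $K$ clusters specified by $(k_1,\dots,k_N)$. By the previous step, on $E$ the number of clusters of $\widehat{\mathcal{C}}$ is $K$, hence $\widehat K_{IC} = K$; that is, $E \subseteq \{\widehat K_{IC} = K\}$. For any event $A$ and events $E \subseteq G$ with $\mathbb{P}(E) > 0$ one has $\mathbb{P}(A \mid E) = \mathbb{P}(A \cap E \cap G)/\mathbb{P}(E \cap G) = \mathbb{P}(A \mid E \cap G)$. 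Taking $A = \{D_{k,g}(\widehat{\mathcal{C}}) \in \mathcal{T}\}$ and $G = \{\widehat K_{IC} = K\}$ gives exactly the displayed equality. The same inclusion holds verbatim when $E$ is replaced by the finer selective conditioning set $\mathcal{A}$ of \eqref{eq:pairwise_p} (which in addition fixes every iteration $k^{(m)}_i$ and the nuisance directions $\widehat{\Pi}_{k,g} Z$, $\hat J_Z$); arguing through regular conditional distributions then shows that the selective $p$-value built from the truncation set $\mathcal{T}$ of Proposition~\ref{lemma:perturbation} already incorporates the data-driven choice of $K$ and needs no further correction.

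The step requiring the most care is not a computation but the precise justification that ``$\widehat{\mathcal{C}}$ is a clustering with $K$ clusters'' forces $\widehat K_{IC} = K$: one must exclude degenerate Lloyd iterations that leave a label unused, so that the cardinality of the reported partition is genuinely $K$. This is exactly what the ``unique output'' hypothesis, together with the convention that a $K$-cluster solution uses all $K$ labels, secures; with that in hand the rest is the one-line conditional-probability identity above.
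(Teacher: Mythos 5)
Your proposal is correct and follows essentially the same route as the paper: both argue that the composite map $Z \mapsto \widehat{\mathcal{C}}$ (IC selection followed by Panel Kmeans) is deterministic, establish the set inclusion $\{\widehat{\mathcal{C}} = \mathcal{C}^\ast\} \subseteq \{\widehat{K}_{IC} = K^\ast\}$ from the unique-output assumption, and then invoke the elementary identity $\mathbb{P}(A \mid E) = \mathbb{P}(A \mid E \cap G)$ for $E \subseteq G$. Your added remark that the reported $K$-cluster partition must use all $K$ labels makes the inclusion slightly more explicit than the paper's appeal to uniqueness, but the argument is the same in substance.
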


\begin{remark}\normalfont\label{rem:prop_ic}
The key assumption of the proposition is that $\widehat{\mathcal{C}}$ is the unique output of Algorithm \ref{algo} for given $K$. As noted in the discussion following Algorithm \ref{algo}, the iterative optimization does not guarantee this uniqueness, and in practice converging to the global minimum requires a large number of initializations.
This justifies the IC estimate $\widehat{K}_{IC}$ but raises a further question: is extra conditioning needed when multiple initializations are used to find the best-fitting partition? Intuitively not, because our selective conditional framework provides Type I error control in the sense of Definition \ref{definition:selectivepvalue} uniformly over the space of initial partitions. We leave the formal treatment to future work, noting that our simulation study supports this conjecture.
\end{remark}

\section{Monte Carlo Study}\label{sec:monte}

We study the finite-sample properties of the proposed test statistics in terms of size and power.
We generate observations from a panel AR(1) process:
\begin{equation}\label{eq:sim_dgp}
Y_{it} = \alpha(1-\rho_{k_i}) + \rho_{k_i} Y_{i,t-1} + U_{it}, \quad
U_{it} \sim iid \, \mathbb{N}(0, 1).
\end{equation}
This DGP and the setup described below are similar to that of \cite{hoga2023testing} except that their focus is on measurement errors in the target variable whereas ours is on clustered heterogeneity.

Two forecasters, indexed by \( a = 1,2 \), do not observe the true data-generating process but aim to construct one-step-ahead forecasts of \( Y_{it} \). Forecaster 1 includes an intercept but makes noisy forecasts, whereas Forecaster 2 omits the intercept altogether:
\begin{equation}\label{eq:size_forecasts}
\begin{split}
\text{Forecaster 1: } \widehat{Y}_{1,it} &= \alpha(1-\rho_{k_i}) + \rho_{k_i} Y_{i,t-1} + \varepsilon_{it}, \\
\text{Forecaster 2: } \widehat{Y}_{2,it} &= \rho_{k_i} Y_{i,t-1},
\end{split}
\end{equation}
for \( t = 1,\dots,T \) and \( i = 1,\dots,N \), where \( k_i \in \{1,2,3\} \) denotes the latent cluster membership of unit \( i \).
For computational efficiency and following \cite{hoga2023testing}, we assume they use the true slope parameter, and the true intercept if included. This is justified by the noise term in the first set of forecasts, which might reflect over-parametrization of heterogeneity in real applications, and by the omission of the intercept in the second set.

The noise term \( \varepsilon_{it} \) preserves zero mean and a cluster-specific forecast variance, and evolves according to the stationary process
\[
\varepsilon_{it} = \phi \varepsilon_{i,t-1} + \lambda F_t + \sqrt{ \sigma^2_{\varepsilon,k_i} (1 - \phi^2) - \lambda^2 } \cdot \xi_{it}, \quad \xi_{it} \sim iid \, \mathbb{N}(0,1),
\]
where \( F_t \sim iid \, \mathbb{N}(0,1) \) is a common factor independent of \( \xi_{it} \).
Here \( \phi \in (-1,1) \) governs the AR(1) persistence of \( \varepsilon_{it} \) and \( \lambda \) controls the strength of CD through \( F_t \), while the noise variance of Forecaster 1 for cluster \( k_i \) is \( \sigma^2_{\varepsilon,k_i} = \alpha^2 (1 - \rho_{k_i})^2 + \psi_{k_i} \). The forecast noise thus incorporates both time dependence via \( \phi \) and CD via \( \lambda \).

We implement unconditional and conditional EPA tests, associated with the choices $H_{i,t-1} = 1$ and $H_{i,t-1} = (1, Y_{i,t-1})'$, respectively.
Set $\Delta L_{it} = (Y_{it} - \widehat{Y}_{it}^{(1)})^2 - (Y_{it} - \widehat{Y}_{it}^{(2)})^2$.
Straightforward calculations \citep[see Appendix C of][]{hoga2023testing} show that the expected quadratic loss differentials in these two cases are given by
$$
\mathbb{E} (H_{i,t-1}\Delta L_{it}) = \left\{\begin{array}{ll}
    \psi_{k_i}, & \text{if } H_{i,t-1} = 1, \\
    (\psi_{k_i}, \mu \cdot \psi_{k_i})', & \text{if } H_{i,t-1} = (1, Y_{i,t-1})'. \\
    \end{array}\right.
$$
Thus, the magnitude of the expected loss differential depends solely on the noise variance for unconditional EPA testing, and on the noise variance and the unconditional mean $\mu$ for conditional EPA testing.

\vspace{2mm}
{\em Parameter choices.}\; In all experiments, we set $\mu = 1$, $\phi = 0.2$ and $\lambda = 0.2$.
For the AR(1) dynamics of $Y_{it}$, the panel units form three latent clusters of heterogeneous sizes corresponding to the clusters of the loss differentials:
\begin{equation}\label{eq:true_k}
k^0_i = \left\{\begin{array}{ll}
    1, & \text{if } i \in \{1,\dots,N/4\}, \\
    2, & \text{if } i \in \{N/4+1,\dots,N/2\}, \\
    3, & \text{if } i \in \{N/2+1,\dots,N\},
    \end{array}\right.
\end{equation}
and $(\rho_1, \rho_2, \rho_3) = (0.1, 0.2, 0.3)$, so the third cluster is twice the size of each of the first two.

To investigate the empirical size of the tests, we set $(\psi_1, \psi_2, \psi_3) = (0, 0, 0)$.
The power is analyzed under two cases both with $K^0 = 3$:
\vspace{1mm}
\begin{itemize} \itemsep0.2em
    \item \textbf{Case 1-- O-EPA hypothesis fails:} $(\psi_1, \psi_2, \psi_3) = \psi/2 \cdot (1,1,1) + \psi \cdot (-1.2, -0.8, 1)$,
    \item \textbf{Case 2-- O-EPA hypothesis holds:} $(\psi_1, \psi_2, \psi_3) = \psi \cdot (-1.2, -0.8, 1)$.
\end{itemize}
\vspace{1mm}
The parameter $\psi$ measures the deviation from the null hypotheses, and we consider $\psi \in \{ 0.125, 0.25, 0.375, 0.5 \}$.

We investigate the size of the tests for all pairs $(T,N)$ with $N \in \{80,120,160\}$ and $T \in \{20,50,100,200\}$. Since the proposed procedures are computationally costly, we analyze the power for $N = 80$ and $T \in \{50,200\}$; because the loss differentials carry strong CD, the number of cross-sectional units {does} not affect the power of the tests. For the same reason, we set the number of replications {to} 1000.

\vspace{2mm}
{\em Implementation of the tests.}\;
We implement four types of tests, labeled Predetermined, Naive, Split Sample, and Selective Inference, each in unconditional and conditional form. The implementation details are as follows.
\vspace{2mm}
\begin{itemize}
\itemsep0.3em 
    \item \textbf{Predetermined:} As described in Section \ref{sec:known} by setting $k_i = k_i^0$ for all $i = 1,\dots,N$.
    \item \textbf{Naive:} As described in Section \ref{sec:known} by setting $k_i = \hat{k}_i$ for all $i = 1,\dots,N$ where $\hat{k}_i$ is the output of Algorithm \ref{algo}.
    \item \textbf{Split Sample:} As described in Section S.2 of the supplement by setting $\mathcal{S}_1 = \{1,\dots,0.2 \cdot T\}$ and $\mathcal{S}_2 = \{0.2 \cdot T + 1 + l,\dots,T\}$ with $l = \lfloor \sqrt{0.2 \cdot T} \rfloor$ to minimize the statistical dependence between the two portions of the sample, and $k_i = \hat{k}_i$ with $\hat{k}_i$ being estimated from $\mathcal{S}_1$ using Algorithm \ref{algo}.
    \item \textbf{Selective Inference:} As described in Section \ref{sec:tests} by setting $k_i = \hat{k}_i$ for all $i = 1,\dots,N$ where $\hat{k}_i$ is the output of Algorithm \ref{algo}.
\end{itemize}
\vspace{2mm}

All tests are robust to arbitrary autocorrelation and CD.
The number of cosines in the OS estimator of the LRV is $B = \mathrm{min} (\lfloor PT^{2/3} \rfloor, T)$ for full sample tests and $B = \mathrm{min} (\lfloor P |\mathcal{S}_2|^{2/3} \rfloor, |\mathcal{S}_2|)$ for the Split Sample tests.
As the latent nature of the clusters is central to our framework, all tests except Predetermined are implemented using $\widehat{K}_{IC}$ given in \eqref{eq:kmeans_khat}.
When relevant, Algorithm \ref{algo} is run with 10 random initialization and maximum 100 iterations.

\vspace{2mm}
{\em Size properties.}\; Table \ref{tab:size_main} reports the rejection rates of four C-EPA testing procedures under the null hypothesis, evaluated at the 5\% nominal level{, separately for} unconditional and conditional tests.

The Naive test, which conducts inference as if cluster assignments were known and fixed, rejects at a rate of exactly 1.00 in every design, both unconditional and conditional. This complete failure to control size reflects the well-known danger of ignoring model selection when clusters are estimated from the data.

By contrast, the Predetermined test, which uses exogenous, fixed clusters, delivers rejection rates close to the nominal level, ranging from 0.04 to 0.07. It offers a benchmark, but its feasibility is limited by the requirement of a pre-specified cluster structure.

\begin{table}
  \centering
  \begin{threeparttable}
  \caption{Rejection rates of C-EPA tests under the null}
  \label{tab:size_main}%
  \renewcommand{\arraystretch}{0.9}
\begin{tabular}{C{1cm}C{1cm}C{2.5cm}C{1.5cm}C{1.8cm}C{2cm}}
\toprule
$N$ & $T$ & Predetermined & Naive
& \shortstack[c]{Split\\Sample}
& \shortstack[c]{Selective\\Inference} \\
\midrule
    \multicolumn{6}{c}{Unconditional tests ($H_{i,t-1}=1$)} \\
    \midrule
    80    & 20    & 0.07  & 1.00  & 0.07  & 0.05 \\
    80    & 50    & 0.05  & 1.00  & 0.05  & 0.05 \\
    80    & 100   & 0.06  & 1.00  & 0.06  & 0.07 \\
    80    & 200   & 0.05  & 1.00  & 0.03  & 0.05 \\
    120   & 20    & 0.07  & 1.00  & 0.07  & 0.04 \\
    120   & 50    & 0.05  & 1.00  & 0.07  & 0.03 \\
    120   & 100   & 0.05  & 1.00  & 0.06  & 0.04 \\
    120   & 200   & 0.05  & 1.00  & 0.06  & 0.04 \\
    160   & 20    & 0.06  & 1.00  & 0.07  & 0.04 \\
    160   & 50    & 0.06  & 1.00  & 0.06  & 0.04 \\
    160   & 100   & 0.06  & 1.00  & 0.06  & 0.04 \\
    160   & 200   & 0.05  & 1.00  & 0.04  & 0.03 \\
    \midrule
    \multicolumn{6}{c}{Conditional tests ($H_{i,t-1}=(1,Y_{i,t-1})'$)} \\
    \midrule
    80    & 20    & 0.05  & 1.00  & 0.11  & 0.05 \\
    80    & 50    & 0.04  & 1.00  & 0.06  & 0.06 \\
    80    & 100   & 0.06  & 1.00  & 0.06  & 0.05 \\
    80    & 200   & 0.05  & 1.00  & 0.05  & 0.05 \\
    120   & 20    & 0.06  & 1.00  & 0.10  & 0.03 \\
    120   & 50    & 0.06  & 1.00  & 0.07  & 0.04 \\
    120   & 100   & 0.06  & 1.00  & 0.07  & 0.04 \\
    120   & 200   & 0.05  & 1.00  & 0.06  & 0.05 \\
    160   & 20    & 0.06  & 1.00  & 0.09  & 0.04 \\
    160   & 50    & 0.06  & 1.00  & 0.07  & 0.04 \\
    160   & 100   & 0.05  & 1.00  & 0.06  & 0.02 \\
    160   & 200   & 0.05  & 1.00  & 0.04  & 0.03 \\
    \bottomrule
    \end{tabular}%
     \begin{tablenotes}
      \footnotesize
      \item Note: Rejection rates are calculated from 1000 Monte Carlo replications under the null hypothesis with nominal size: $\alpha = 0.05$.
      Predetermined tests are described in Section \ref{sec:known} and calculated with $k_i = k_i^0$ given in Equation \eqref{eq:true_k}.
      Naive tests are similar except they use the estimated clusters.
      Split Sample tests are described in Section S.2 of the supplement and Selective Inference tests in Section \ref{sec:tests}.
      All tests are robust to arbitrary autocorrelation and CD.
      The number of clusters for Naive, Split Sample and Selective Inference tests is determined using Equation \eqref{eq:kmeans_khat}.
    \end{tablenotes}
  \end{threeparttable}
\end{table}%

The Split Sample test also shows relatively accurate size control, with rejection rates between 0.03 and 0.11, the upper end occurring at the smallest $T$. Its reliance on data splitting reduces bias from re-using the same sample, at the cost of potential power loss due to the reduced sample size for both estimation and testing.

Finally, the Selective Inference test, which corrects for the randomness introduced by cluster estimation via truncation-based conditioning, consistently delivers rejection rates very close to the nominal 0.05. This confirms that the proposed procedure successfully accounts for the data-dependent nature of cluster formation, without relying on sample splitting or external cluster information.

In summary, naive inference leads to massive over-rejection, while both the sample splitting and selective approaches control size effectively; among the feasible procedures, the selective test offers the most robust and accurate size behavior across panel dimensions and test designs.

\vspace{2mm}
{\em Power properties.}\; Table~\ref{tab:power_overall_fails} reports the rejection frequencies of the four C-EPA procedures under the alternative hypothesis, in a setting where the O-EPA hypothesis fails.
As expected, all tests exhibit increasing power as $\psi$ and $T$ grow, but there are important differences in how quickly this increase occurs.

The Naive test continues to reject 100\% of the time, regardless of the alternative or the sample size. The Predetermined test, which uses the true cluster assignments, performs well but is infeasible, achieving for instance 87\% rejection unconditionally and 74\% conditionally already at $T = 50$, $\psi = 0.125$, and reaching 100\% throughout at $T = 200$. This illustrates that, with cluster assignments correctly specified in advance, even small violations of EPA are reliably detected.

The Split Sample test shows lower power in small samples with weak signals (20\% unconditional and 15\% conditional at $T = 50$, $\psi = 0.125$), but improves substantially with longer time series and stronger alternatives, reaching 100\% at $T = 200$, $\psi = 0.25$. This reflects the trade-off inherent in data splitting: robust size control comes at the cost of efficiency in small samples.

The Selective Inference test behaves similarly, rejecting in only 19\% (unconditional) and 16\% (conditional) of simulations at $T = 50$, $\psi = 0.125$, but increasing rapidly to 100\% at $T = 200$, $\psi = 0.25$; its power is consistently higher than the Split Sample test in conditional testing. Thus, while conservative in small samples, it detects meaningful deviations when sufficient information is available.

In summary, when the O-EPA assumption fails, both the Split Sample and Selective Inference procedures deliver high power while preserving valid size, with the Predetermined test providing an upper bound on power when the cluster structure is known. The selective procedure adapts to increasing signal strength without inflating false positives.

\begin{table}[ht]
  \centering
  \begin{threeparttable}
  \caption{Rejection rates of C-EPA tests under the alternative: Case 1-- O-EPA fails}
  \label{tab:power_overall_fails}%
  \renewcommand{\arraystretch}{0.9}
\begin{tabular}{C{1cm}C{1cm}C{2.5cm}C{1.5cm}C{1.8cm}C{2cm}}
\toprule
$T$ & $\psi$ & Predetermined & Naive
& \shortstack[c]{Split\\Sample}
& \shortstack[c]{Selective\\Inference} \\
\midrule
    50    & 0.125 & 0.87  & 1.00  & 0.20  & 0.19 \\
    200   & 0.125 & 1.00  & 1.00  & 0.79  & 0.72 \\
    50    & 0.250 & 1.00  & 1.00  & 0.68  & 0.62 \\
    200   & 0.250 & 1.00  & 1.00  & 1.00  & 1.00 \\
    50    & 0.375 & 1.00  & 1.00  & 0.98  & 0.91 \\
    200   & 0.375 & 1.00  & 1.00  & 1.00  & 1.00 \\
    50    & 0.500 & 1.00  & 1.00  & 1.00  & 0.99 \\
    200   & 0.500 & 1.00  & 1.00  & 1.00  & 1.00 \\
    \midrule
    \multicolumn{6}{c}{Conditional tests ($H_{i,t-1}=(1,Y_{i,t-1})'$)} \\
    \midrule
    50    & 0.125 & 0.76  & 1.00  & 0.15  & 0.16 \\
    200   & 0.125 & 1.00  & 1.00  & 0.68  & 0.71 \\
    50    & 0.250 & 1.00  & 1.00  & 0.50  & 0.58 \\
    200   & 0.250 & 1.00  & 1.00  & 1.00  & 1.00 \\
    50    & 0.375 & 1.00  & 1.00  & 0.88  & 0.91 \\
    200   & 0.375 & 1.00  & 1.00  & 1.00  & 1.00 \\
    50    & 0.500 & 1.00  & 1.00  & 0.99  & 0.99 \\
    200   & 0.500 & 1.00  & 1.00  & 1.00  & 1.00 \\
    \bottomrule
    \end{tabular}%
    \begin{tablenotes}
      \footnotesize
      \item Note: Rejection rates are calculated from 1000 Monte Carlo replications under the alternative hypothesis for different values of $\psi$ which measures the strength of the deviation from the null.
      Nominal size: $\alpha = 0.05$ and $N = 80$.
      See Table \ref{tab:size_main} notes.
    \end{tablenotes}
  \end{threeparttable}
\end{table}%

Table \ref{tab:power_overall_holds} displays the rejection rates of the four C-EPA tests under the alternative when the overall EPA hypothesis holds, so that the deviation from the null occurs within the cluster centers while global equality of predictive ability holds across clusters. This configuration assesses whether the procedures can detect heterogeneous predictive content even when overall forecast performance is similar across clusters.

The Predetermined test again provides an upper bound on feasible power, with rejection rates reaching 1.00 in nearly every case, including small samples and weak deviations (0.78 unconditional and 0.65 conditional at $T = 50$, $\psi = 0.125$). This confirms that informative deviations are present and detectable with idealized cluster knowledge.

As before, the Naive test rejects in all cases. The Split Sample test performs notably well: although power is low for weak deviations and small samples (0.07 in both test types at $T = 50$, $\psi = 0.125$), it rises rapidly with signal strength, reaching 80\% (unconditional) and 57\% (conditional) at $T = 50$, $\psi = 0.375$, and exceeding 95\% for $\psi \geq 0.25$ when $T = 200$. This reflects the fact that the test gains power when overall EPA holds and cluster selection happens to align with the underlying structure.

By contrast, the Selective Inference test exhibits lower power here. Rejection rates remain near the nominal level for small deviations (0.06 at $T = 50$, $\psi = 0.125$) and rise only gradually, reaching 31\% (unconditional) and 53\% (conditional) at $T = 200$, $\psi = 0.375$, and 64\% and 67\% for $\psi = 0.5$. This pattern reflects three features of the procedure: by accounting for cluster estimation uncertainty, it sacrifices power when selection aligns with the true structure but the null is nearly true; the nuisance parameters in the conditional distribution require additional conditioning, lowering power relative to the Split Sample test in certain scenarios; and it uses the O-EPA test as a component of the $p$-value combination, which, since O-EPA holds here, drags down the power of the resulting C-EPA test. Nonetheless, as discussed below, it remains the only viable C-EPA procedure in most empirical settings: while robust to false positives, it may under-reject when the alternative is subtle and the overall structure is well behaved.

To sum up, the Split Sample approach dominates in power when the O-EPA assumption holds, especially for moderate to large deviations, while the Selective Inference test remains valid but conservative, trading some power for protection against size distortions. This highlights a core message of our framework: inference that accounts for model selection can be more reliable, but necessarily faces a trade-off between robustness and sensitivity to weak signals.

\begin{table}[ht]
  \centering
  \begin{threeparttable}
  \caption{Rejection rates of C-EPA tests under the alternative: Case 2-- O-EPA holds}
  \label{tab:power_overall_holds}%
  \renewcommand{\arraystretch}{0.9}
\begin{tabular}{C{1cm}C{1cm}C{2.5cm}C{1.5cm}C{1.8cm}C{2cm}}
\toprule
$T$ & $\psi$ & Predetermined & Naive
& \shortstack[c]{Split\\Sample}
& \shortstack[c]{Selective\\Inference} \\
\midrule
    \multicolumn{6}{c}{Unconditional tests ($H_{i,t-1}=1$)} \\
    \midrule
    50    & 0.125 & 0.78  & 1.00  & 0.07  & 0.06 \\
    200   & 0.125 & 1.00  & 1.00  & 0.32  & 0.07 \\
    50    & 0.250 & 1.00  & 1.00  & 0.30  & 0.07 \\
    200   & 0.250 & 1.00  & 1.00  & 1.00  & 0.18 \\
    50    & 0.375 & 1.00  & 1.00  & 0.80  & 0.10 \\
    200   & 0.375 & 1.00  & 1.00  & 1.00  & 0.31 \\
    50    & 0.500 & 1.00  & 1.00  & 0.99  & 0.14 \\
    200   & 0.500 & 1.00  & 1.00  & 1.00  & 0.64 \\
    \midrule
    \multicolumn{6}{c}{Conditional tests ($H_{i,t-1}=(1,Y_{i,t-1})'$)} \\
    \midrule
    50    & 0.125 & 0.65  & 1.00  & 0.07  & 0.06 \\
    200   & 0.125 & 1.00  & 1.00  & 0.20  & 0.08 \\
    50    & 0.250 & 1.00  & 1.00  & 0.20  & 0.07 \\
    200   & 0.250 & 1.00  & 1.00  & 0.96  & 0.27 \\
    50    & 0.375 & 1.00  & 1.00  & 0.57  & 0.11 \\
    200   & 0.375 & 1.00  & 1.00  & 1.00  & 0.53 \\
    50    & 0.500 & 1.00  & 1.00  & 0.95  & 0.20 \\
    200   & 0.500 & 1.00  & 1.00  & 1.00  & 0.67 \\
    \bottomrule
    \end{tabular}%
    \begin{tablenotes}
      \footnotesize
      \item Note: Nominal size: $\alpha = 0.05$ and $N = 80$.
      See Table \ref{tab:power_overall_fails} notes.
    \end{tablenotes}
  \end{threeparttable}
\end{table}%

\vspace{2mm}
{\em Alternative DGPs and additional results.}\;
We report two additional experiments. The first is a robustness analysis highlighting a realistic situation in which the Selective Inference test is the only available method for testing the C-EPA hypotheses with unknown clusters: when there are breaks in the cluster centers such that, even if the C-EPA hypothesis holds, the Split Sample tests grossly over-reject the true null. The second examines the small sample properties of the O-EPA and homogeneity tests, shedding light on the power of the Selective Inference test when O-EPA holds.

Figure \ref{fig:cepa_size} reports the empirical size of the C-EPA procedures under the null when the DGP includes structural breaks in relative forecast performance across clusters, with both the O-EPA and C-EPA nulls holding on average over the time period considered. Specifically, for $t \in \{ 1,\dots,T/2 \}$ we set $(\psi_1, \psi_2, \psi_3) = \psi/2 \cdot (1, 1, 1) + \psi \cdot (-1.2, -0.8, 1)$ as in Case 1 of the main design, and for $t \in \{ T+2+1,\dots,T \}$ we set $(\psi_1, \psi_2, \psi_3) = -\psi/2 \cdot (1, 1, 1) - \psi \cdot (-1.2, -0.8, 1)$, so that there is no global improvement in predictive ability.

The figure reveals a contrast between the procedures. The Selective Inference test maintains excellent size control across all sample sizes, with rejection rates consistently close to the nominal 5\% level in both settings, confirming that it appropriately accounts for the randomness of data-driven cluster estimation even under structural instability.

\begin{figure}[ht]
  \centering
  \begin{subfigure}[t]{0.48\textwidth}
    \centering
    \includegraphics[width=\textwidth]{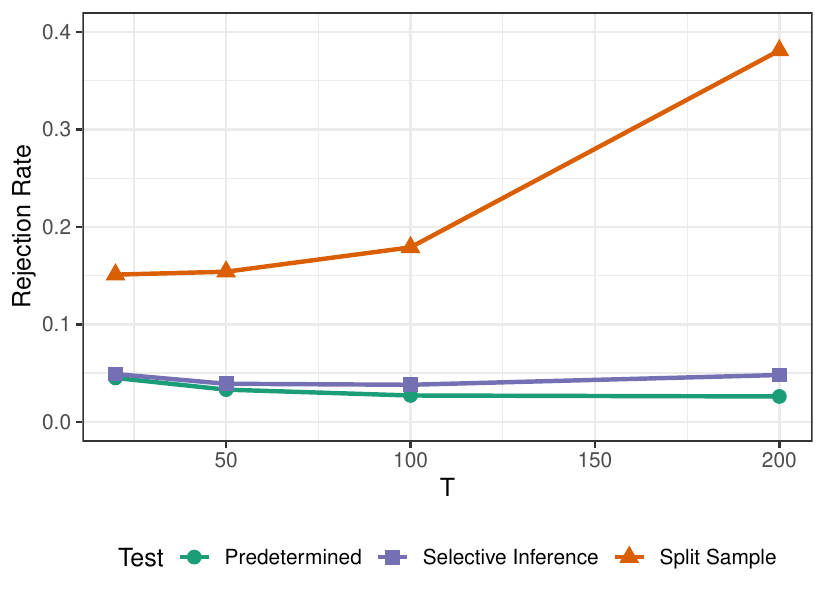}
    \caption{Unconditional tests ($H_{i,t-1}=1$)}
  \end{subfigure}
  \hfill
  \begin{subfigure}[t]{0.48\textwidth}
    \centering
    \includegraphics[width=\textwidth]{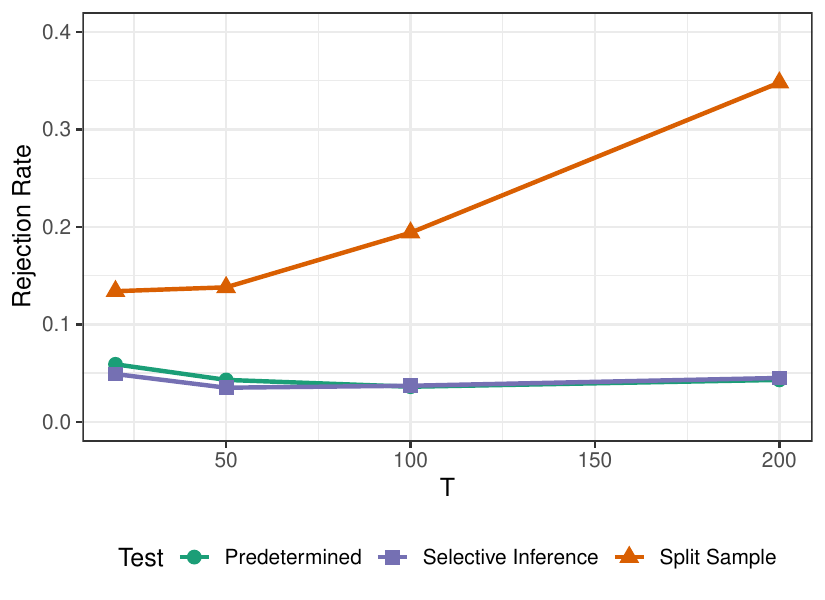}
    \caption{Conditional tests ($H_{i,t-1}=(1,Y_{i,t-1})'$)}
  \end{subfigure}
  \caption{Empirical size of C-EPA tests under the null hypothesis ($\alpha = 0.05$) for different time dimensions $T$. The tests are applied to simulated data with $N = 80$ and $\psi = 0.25$. Each line corresponds to a different version of the test procedure.}
  \label{fig:cepa_size}
\end{figure}

In contrast, the Split Sample test shows substantial over-rejection, its unconditional empirical size rising from roughly 15\% at $T = 20$ to over 35\% at $T = 200$, with the conditional version following a similar trajectory. This distortion reflects the test's inability to account for changes in the structure of predictive accuracy: by separating the sample for estimation and testing, it fails to recognize time-varying cluster centers and exaggerates evidence against the null.

The Predetermined test, which assumes known clusters, again controls size well but is infeasible when clusters are unknown. The results thus highlight the danger of Split Sample approaches under temporal instability, and the value of selective inference procedures that condition properly on the estimated cluster structure using the full sample.

In summary, when the null holds but structural breaks induce heterogeneous forecast patterns, the selective inference test is the only feasible method considered that maintains reliable control over false positives.

\begin{table}
  \centering
  \caption{Rejection rates of O-EPA and Homogeneity tests}
    \label{tab:additional_results}%
    \begin{threeparttable}
    \renewcommand{\arraystretch}{0.9}
    \begin{tabular}{ccccccc}
    \toprule
          &       &       & \multicolumn{2}{c}{Unconditional} & \multicolumn{2}{c}{Conditional} \\
    \midrule
    N     & Tobs  & $\psi$   & O-EPA & Homogeneity & O-EPA & Homogeneity \\
    \midrule
    \multicolumn{7}{c}{Size} \\
    \midrule
    80    & 20    & 0     & 0.06  & 0.05  & 0.06  & 0.04 \\
    80    & 50    & 0     & 0.07  & 0.05  & 0.06  & 0.05 \\
    80    & 100   & 0     & 0.06  & 0.08  & 0.05  & 0.05 \\
    80    & 200   & 0     & 0.04  & 0.05  & 0.04  & 0.06 \\
    120   & 20    & 0     & 0.06  & 0.03  & 0.05  & 0.03 \\
    120   & 50    & 0     & 0.05  & 0.04  & 0.04  & 0.05 \\
    120   & 100   & 0     & 0.05  & 0.03  & 0.05  & 0.05 \\
    120   & 200   & 0     & 0.06  & 0.03  & 0.06  & 0.06 \\
    160   & 20    & 0     & 0.07  & 0.02  & 0.06  & 0.03 \\
    160   & 50    & 0     & 0.05  & 0.03  & 0.05  & 0.03 \\
    160   & 100   & 0     & 0.06  & 0.04  & 0.04  & 0.03 \\
    160   & 200   & 0     & 0.04  & 0.04  & 0.04  & 0.03 \\
    \midrule
    \multicolumn{7}{c}{Power: Case 1– O-EPA hypothesis fails} \\
    \midrule
    80    & 50    & 0.125 & 0.33  & 0.27  & 0.06  & 0.05 \\
    80    & 200   & 0.125 & 0.87  & 0.81  & 0.07  & 0.11 \\
    80    & 50    & 0.250 & 0.83  & 0.73  & 0.07  & 0.10 \\
    80    & 200   & 0.250 & 1.00  & 1.00  & 0.18  & 0.28 \\
    80    & 50    & 0.375 & 0.98  & 0.97  & 0.08  & 0.11 \\
    80    & 200   & 0.375 & 1.00  & 1.00  & 0.27  & 0.52 \\
    80    & 50    & 0.500 & 1.00  & 1.00  & 0.12  & 0.19 \\
    80    & 200   & 0.500 & 1.00  & 1.00  & 0.44  & 0.67 \\
    \midrule
    \multicolumn{7}{c}{Power: Case 2– O-EPA hypothesis holds} \\
    \midrule
    80    & 50    & 0.125 & 0.07  & 0.06  & 0.06  & 0.06 \\
    80    & 200   & 0.125 & 0.04  & 0.04  & 0.07  & 0.11 \\
    80    & 50    & 0.250 & 0.06  & 0.06  & 0.07  & 0.10 \\
    80    & 200   & 0.250 & 0.04  & 0.04  & 0.19  & 0.30 \\
    80    & 50    & 0.375 & 0.06  & 0.06  & 0.11  & 0.14 \\
    80    & 200   & 0.375 & 0.04  & 0.04  & 0.33  & 0.55 \\
    80    & 50    & 0.500 & 0.06  & 0.07  & 0.15  & 0.23 \\
    80    & 200   & 0.500 & 0.05  & 0.04  & 0.65  & 0.69 \\
    \bottomrule
    \end{tabular}%
         \begin{tablenotes}
      \footnotesize
      \item Note: O-EPA test is described in Section \ref{sec:maintests} and Homogeneity test is described in Section \ref{sec:selective}.
      See Table \ref{tab:size_main}-\ref{tab:power_overall_holds} for the details on simulation design.
    \end{tablenotes}
  \end{threeparttable}
\end{table}

Table~\ref{tab:additional_results} presents additional results on the O-EPA and homogeneity tests. The size results confirm that all procedures maintain appropriate control, with rejection rates close to 5\%, and in the first power scenario, where O-EPA fails, both tests gain power with larger signal strength and time dimensions. The final block reports a setting where the O-EPA null holds but clusters are heterogeneous: here the conditional homogeneity test consistently rejects, especially for large $T$, remaining powerful for detecting latent heterogeneity even when O-EPA is valid, while the O-EPA test retains correct Type I error control. This explains the relatively poor performance of the Selective Inference C-EPA test in this case: since it combines the $p$-values of the pairwise homogeneity tests with that of the O-EPA test, the second component lowers its power.

\section{Empirical Application: Forecasting Exchange Rate Returns}\label{sec:app}

This section applies a variety of forecasting methods to monthly exchange rate returns, using conventional time series models as well as modern machine learning methods with and without macroeconomic predictors from the FRED-MD database. The objective is to predict one-month-ahead log returns of a panel of 131 bilateral exchange rates from January 1999 to December 2023.

\vspace{2mm}
{\em Data preparation.}\; We used monthly bilateral exchange rates from the IMF, spanning January 1999 to December 2023. Although the IMF Exchange Rates data set provides a longer history on some series, we focus on this period to obtain a balanced panel, the starting date reflecting the availability of the Euro/Dollar exchange rate.
Log returns are computed as first differences of the natural logarithm of the exchange rate levels, multiplied by 100, and each series is then standardized; since our objective is model comparison rather than real forecasts, we do not de-standardize before presenting the results. Excluding series with missing observations or near-zero standard deviation leaves 131 monthly bilateral exchange rates against the US Dollar.

We obtain monthly macroeconomic indicators from the FRED-MD dataset, transformed using the \texttt{tw\_apc} procedure with \texttt{kmax}$= 8$ of the \texttt{fbi} package \citep{Chan2022fbi}, which uses the Tall-Wide method to impute missing values. To avoid look-ahead bias, each predictor matrix is lagged appropriately within the recursive forecasting window, and we remove exchange rate variables that overlap with the dependent variables (\texttt{EXSZUSx}, \texttt{EXJPUSx}, \texttt{EXUSUKx}, \texttt{EXCAUSx}).

Forecasts use a recursive window of length $r = 60$ months: for each forecast date $t = r+1, \ldots, T-1$, we re-estimate model parameters and predict the return in $t+1$. The final sample of forecast errors covers February 2004 to December 2023, that is $T = 238$ and $N = 131$.

\vspace{2mm}
{\em Forecasting methods.}\; We compare five forecasting models spanning linear and nonlinear approaches, with and without macroeconomic predictors, classified into data-poor methods (AR($p$) selected by BIC, Elastic Net, and XGBoost on lags of the target) and data-rich methods (Support Vector Machines and Random Forests on macroeconomic predictors). {Their detailed descriptions are given in Section S.3 of the supplement.} All models are estimated separately for each series using the recursive design above with a one-month-ahead horizon, forecast accuracy is evaluated via the quadratic loss, and for EPA tests we use quadratic loss differentials relative to the AR(1) benchmark. All other implementation details correspond exactly to those of the Monte Carlo simulations.

\vspace{2mm}
{\em Descriptive analysis.}\;
Figure \ref{fig:scatter_losses} presents log-log scatterplots of forecast losses across the panel of prediction tasks, with the horizontal axis showing the loss under the AR(1) benchmark and the vertical axis the loss under a competing method. Each point is a unique predictive task, allowing a granular comparison of relative performance.

Panel (a) compares the AR(1) model with an AR($p$) model selected via BIC. While the AR($p$) model occasionally outperforms the benchmark, evidenced by points below the 45-degree line, a substantial share of forecasts perform worse, illustrating the trade-off between model flexibility and estimation uncertainty \citep{InoueKilian2006}, especially under limited sample sizes or structural instability.

Panel (b) reports the Elastic Net estimator \citep{ZouHastie2005} applied to a broad set of macroeconomic predictors. Most points lie below the 45-degree line, suggesting that regularized linear models consistently outperform the benchmark, and the tight distribution around the diagonal indicates a favorable bias-variance balance, likely due to its dual shrinkage mechanism.

Panels (c) through (e) show the nonlinear methods, XGBoost \citep{ChenGuestrin2016}, Support Vector Machines, and Random Forests, which also outperform in most tasks, particularly where the AR(1) model yields high losses. The scatter is more dispersed than under Elastic Net, reflecting the higher variance of flexible, nonparametric learners \citep{AtheyImbens2019}, but the lower-left clustering of many points suggests these models excel in regimes where linear benchmarks fail.

Collectively, the evidence underscores three findings: the AR(1) model is difficult to outperform uniformly but can be outperformed substantially in specific environments; macro predictors, when guided by regularization or adaptive learning, can materially improve forecast accuracy; and more flexible methods, while incurring higher variance, exhibit considerable upside, especially when benchmark models are misspecified or under-fit. These results contribute to a growing literature on the potential of machine learning in macroeconomic and financial forecasting \citep{Medeiros2021,GoyalWelch2008}.

\begin{figure}
    \centering
    \begin{subfigure}[t]{0.41\textwidth}
        \centering
        \includegraphics[width=\linewidth]{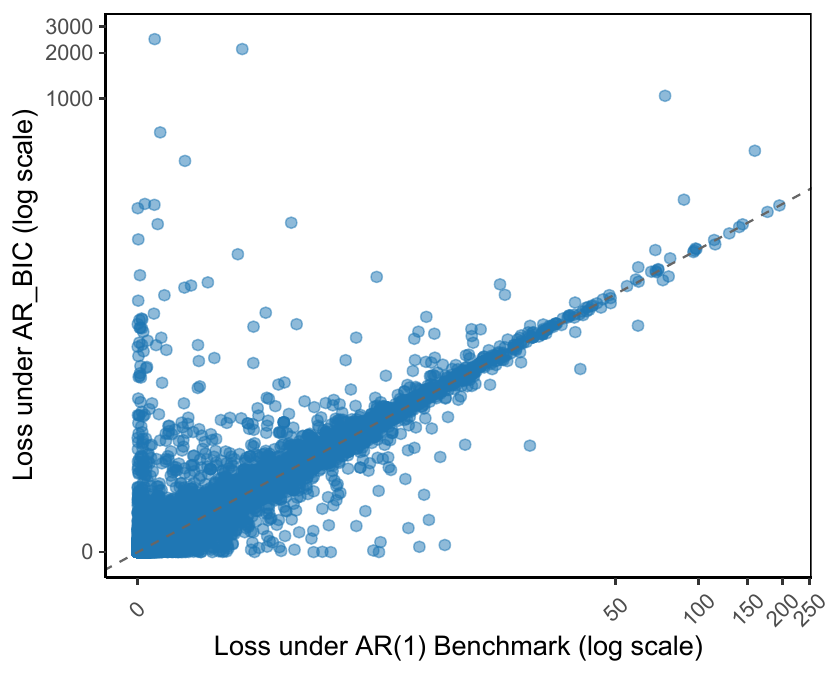}
        \caption{AR($p$) chosen by BIC}
    \end{subfigure}
    \hspace{0.5cm}
    \begin{subfigure}[t]{0.41\textwidth}
        \centering
        \includegraphics[width=\linewidth]{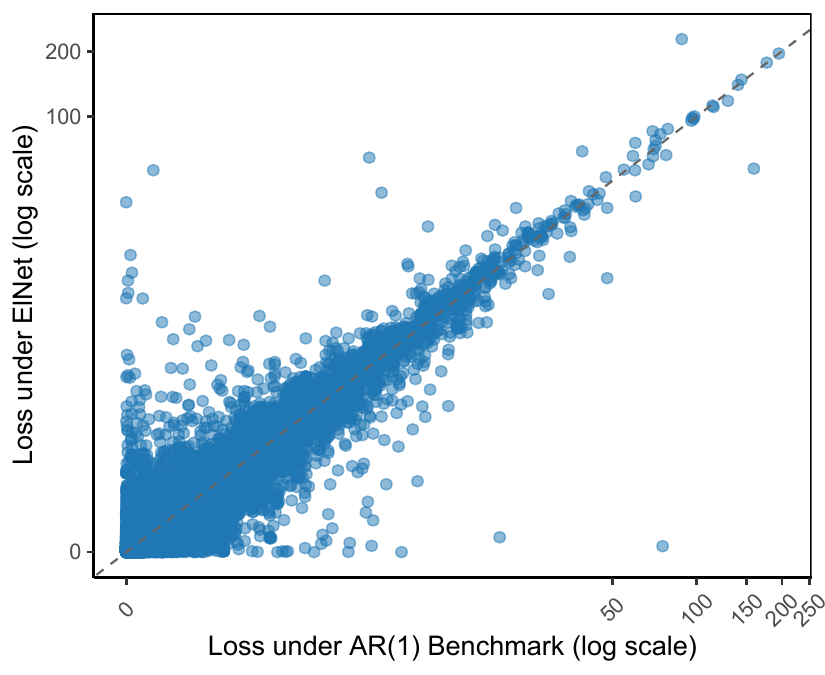}
        \caption{Elastic Net}
    \end{subfigure}

    \vspace{0.5cm}

    \begin{subfigure}[t]{0.41\textwidth}
        \centering
        \includegraphics[width=\linewidth]{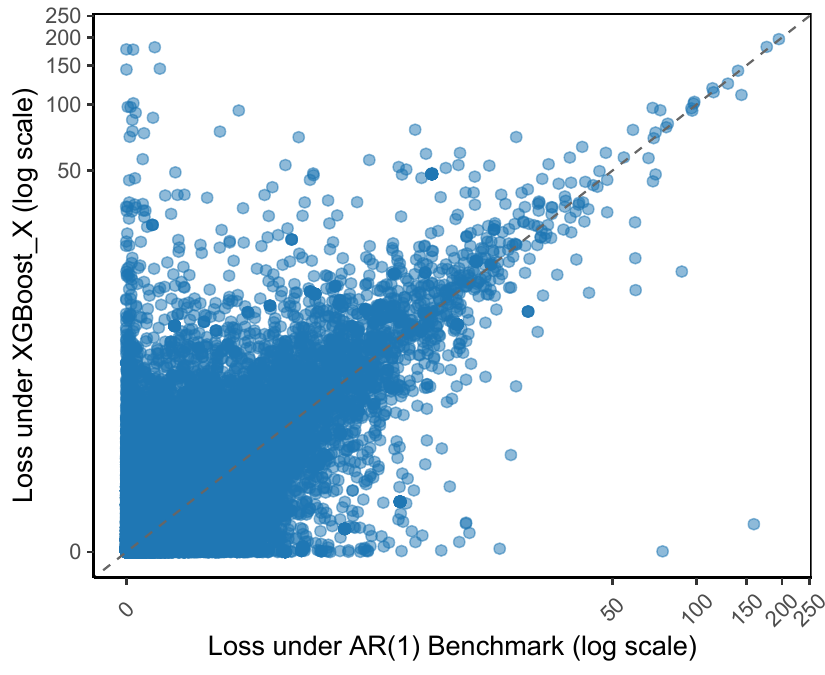}
        \caption{XGBoost}
    \end{subfigure}
    \hspace{0.5cm}
    \begin{subfigure}[t]{0.41\textwidth}
        \centering
        \includegraphics[width=\linewidth]{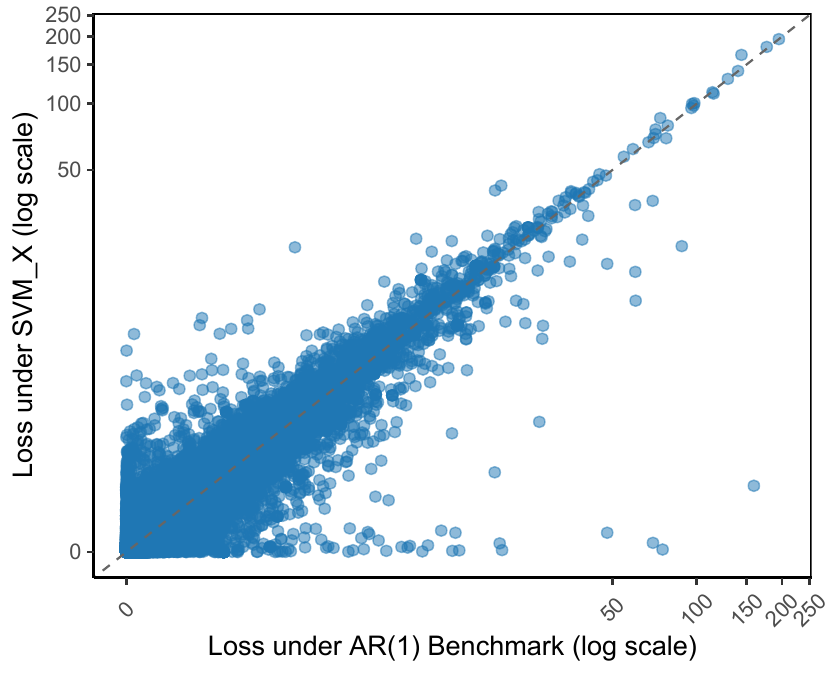}
        \caption{SVM with Macro Predictors}
    \end{subfigure}

    \vspace{0.5cm}

    \begin{subfigure}[t]{0.41\textwidth}
        \centering
        \includegraphics[width=\linewidth]{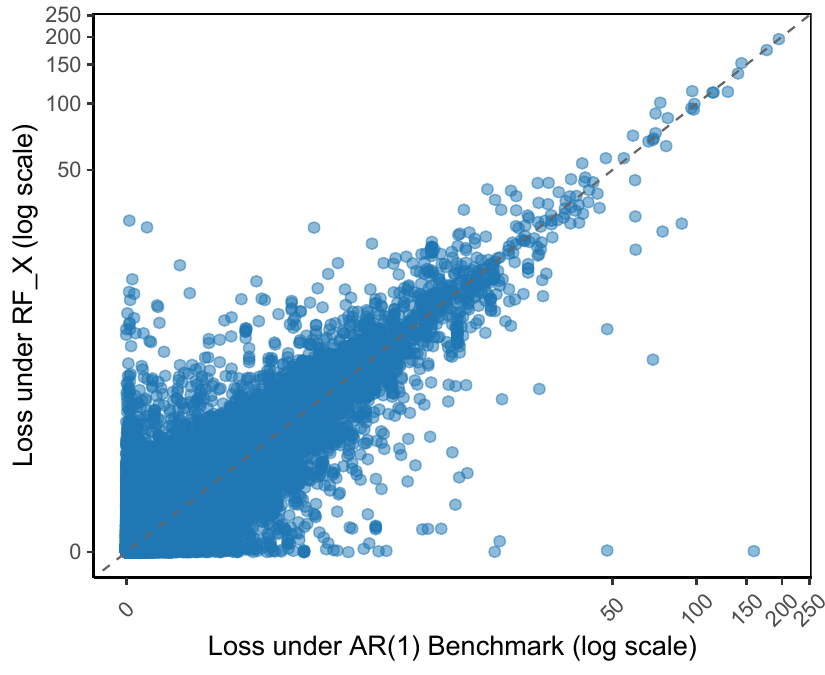}
        \caption{Random Forest with Macro Predictors}
    \end{subfigure}

    \caption{Scatter plots of quadratic forecast losses of alternative forecasting models vs. the benchmark AR(1) model. The 45-degree dashed line indicates equality in forecast performance.}
    \label{fig:scatter_losses}
\end{figure}

\vspace{2mm}
{\em Test results.}\; Table \ref{tab:app_results} reports the $p$-values from C-EPA tests applied to loss differentials between the five models and the AR(1) benchmark, aiming to detect whether the models improve predictive accuracy overall or within specific clusters of currency pairs.

Starting with the O-EPA test, for all models but SVM the O-EPA hypothesis is rejected at least at the 10\% level in all settings. In the summary statistics given in Table 1 of the supplement, we see that AR($p$), XGBoost and Random Forest perform better overall than AR(1) while Elastic Net is worse; the O-EPA results thus confirm, in the unconditional setting, the superiority of the first three and the inferiority of the last.

SVM stands out as the only method consistently associated with very high O-EPA $p$-values (e.g., 0.90, 0.95, 0.97), indicating no significant improvement over AR(1) on average over all units and periods. These high $p$-values do not imply poor performance, but rather that gains are not homogeneous across units, as supported by the rejection of the homogeneity test at the 10\% level in the conditional test with the lagged target when $K$ is chosen by CV ($p$-value = 0.07). SVM's performance is thus heterogeneous conditional on the past realization of the target, and the selective inference C-EPA test is correspondingly significant at the 10\% level ($p$-value = 0.09).

\begin{table}
  \centering
    \caption{$p$-values from C-EPA tests across models and conditioning variables}
      \label{tab:app_results}%
    \begin{threeparttable}
    \renewcommand{\arraystretch}{0.9}
    \begin{tabular}{rlccccc}
    \toprule
          & \multicolumn{1}{c}{Test} & AR($p$) & Elastic Net & XGBoost & SVM   & Random Forest \\
    \midrule
    \multicolumn{7}{c}{Unconditional Tests} \\
    \midrule
          & O-EPA & 0.01  & 0.03  & 0.00  & 0.90  & 0.00 \\
    \addlinespace
    \multicolumn{1}{l}{$K = \widehat{K}_{CV}$} & Homogeneity & 1.00  & 0.93  & 0.45  & 1.00  & 1.00 \\
          & Naive & 0.00  & 0.00  & 0.00  & 0.00  & 0.00 \\
          & Split Sample & 0.00  & 0.09  & 0.00  & 0.15  & 0.13 \\
          & Selective Inference & 0.03  & 0.11  & 0.00  & 1.00  & 0.00 \\
    \addlinespace
    \multicolumn{1}{l}{$K = \widehat{K}_{IC}$} & Homogeneity & 0.45  & 0.01  & 0.91  & 0.99  & 0.67 \\
          & Naive & 0.00  & 0.01  & 0.00  & 0.14  & 0.00 \\
          & Split Sample & 0.00  & 0.11  & 0.00  & 0.17  & 0.00 \\
          & Selective Inference & 0.01  & 0.02  & 0.00  & 1.00  & 0.00 \\
    \midrule
    \multicolumn{7}{c}{Conditional Tests - Lagged Target} \\
    \midrule
          & O-EPA & 0.01  & 0.09  & 0.00  & 0.95  & 0.00 \\
    \addlinespace
    \multicolumn{1}{l}{$K = \widehat{K}_{CV}$} & Homogeneity & 0.00  & 1.00  & 1.00  & 0.07  & 0.21 \\
          & Naive & 0.00  & 0.02  & 0.00  & 0.51  & 0.02 \\
          & Split Sample & 0.00  & 0.02  & 0.00  & 0.13  & 0.01 \\
          & Selective Inference & 0.00  & 0.40  & 0.00  & 0.09  & 0.00 \\
    \addlinespace
    \multicolumn{1}{l}{$K = \widehat{K}_{IC}$} & Homogeneity & 0.15  & 0.67  & 0.31  & 0.80  & 0.21 \\
          & Naive & 0.00  & 0.04  & 0.00  & 0.72  & 0.02 \\
          & Split Sample & 0.00  & 0.20  & 0.00  & 0.16  & 0.08 \\
          & Selective Inference & 0.03  & 0.20  & 0.00  & 1.00  & 0.00 \\
    \midrule
    \multicolumn{7}{c}{Conditional Tests - Post Global Financial Crisis Dummy} \\
    \midrule
          & O-EPA & 0.00  & 0.09  & 0.00  & 0.97  & 0.00 \\
    \addlinespace
    \multicolumn{1}{l}{$K = \widehat{K}_{CV}$} & Homogeneity & 0.23  & 1.00  & 0.35  & 0.19  & 0.93 \\
          & Naive & 0.00  & 0.03  & 0.00  & 0.00  & 0.01 \\
          & Split Sample & 0.00  & 0.08  & 0.00  & 0.11  & 0.10 \\
          & Selective Inference & 0.01  & 0.37  & 0.00  & 0.25  & 0.00 \\
     \addlinespace
    \multicolumn{1}{l}{$K = \widehat{K}_{IC}$} & Homogeneity & 0.23  & 0.36  & 0.02  & 0.07  & 0.97 \\
          & Naive & 0.00  & 0.03  & 0.00  & 0.34  & 0.01 \\
          & Split Sample & 0.00  & 0.08  & 0.00  & 0.11  & 0.10 \\
          & Selective Inference & 0.01  & 0.18  & 0.00  & 0.14  & 0.00 \\
    \bottomrule
    \end{tabular}%
        \begin{tablenotes}
      \footnotesize
      \item Note: The results are based on 31178 observations ($T = 238$, $N = 131$) on loss differentials.
      All tests are robust to arbitrary autocorrelation and CD.
      Panel Kmeans tests use 10000 initializations.
      $\widehat{K}_{CV}$ denotes the 10-fold cross-validated estimate of $K$. $\widehat{K}_{CV} = 2$ in all cases.
      $\widehat{K}_{IC}$ uses $K_{max} = 5$.
      Training portion for Split Sample tests is $\gamma = 0.1$.
      O-EPA test is described in Section \ref{sec:maintests}.
      See Table \ref{tab:size_main} for details on all other testing procedures.
    \end{tablenotes}
  \end{threeparttable}
\end{table}%

More generally, the rejection of the homogeneity null in several cases justifies our Selective Inference C-EPA procedure. Conditioning on the lagged target, for instance, the homogeneity test rejects for SVM and XGBoost depending on the clustering method, and the selective C-EPA $p$-values are very low in many cases (e.g., Random Forest yields 0.00 in all settings). These results confirm that forecast gains may vary across clusters, making clustered tests essential to discover such patterns.

Finally, the choice of method for estimating the number of clusters plays a crucial role: CV tends to yield more frequent rejection of the homogeneity null and higher power in the selective C-EPA test than the IC approach. This is particularly evident for SVM, where the IC-based procedure fails to detect group heterogeneity but CV-based clustering leads to a borderline or significant result, underscoring the importance of flexible, data-driven clustering for the sensitivity of selective forecast evaluation procedures.

Overall, these results highlight that clustered inference can detect model improvements missed by aggregate tests, and that conditioning and clustering are both essential for evaluating forecast performance in panel settings with heterogeneous effects.

\section{Conclusions}\label{sec:conc}

This paper developed a statistical framework for testing a linear hypothesis on the cluster centers of a panel process after estimating these clusters with the Panel Kmeans estimator, and applied it to conditional C-EPA testing to compare the forecast performance of agents or predictive models. {To address the problem of ``double dipping'', using the same data to both select the clusters and test hypotheses on them, we developed two distinct strategies.}

The first is a conditional testing procedure based on recent developments in selective conditional inference. {Its} $p$-value for the C-EPA hypothesis can be thought {of} as the percentage of rejections of a true null among all realizations of the panel process that yield the same clustering as the realization in hand. {The second leads to a set of more straightforward Split Sample tests. We compared the two strategies both theoretically, establishing size control and consistency under arbitrary cross-sectional dependence, and in Monte Carlo experiments.}

Both strategies work very well in small samples: they are correctly sized even in small samples and have power against viable alternatives. {The selective conditional inference tests perform especially well and, unlike sample splitting, use the full sample and remain valid under structural breaks in relative forecast performance; together with these theoretical and practical advantages, they stand out as the preferred methodology.}

Finally, to illustrate the empirical relevance of our tests, we compared a battery of time series models and more modern machine learning methods with the AR(1) benchmark in terms of predictive ability, using a large data set of exchange rates. {Accounting for the latent clusters in the loss differentials between the alternative methods and the AR(1) reveals forecast gains that aggregate tests miss, helping the practitioner improve their forecasts.}

\begin{appendix}

\numberwithin{theorem}{section}
\renewcommand{\thetheorem}{\thesection.\arabic{theorem}}

\section{Proofs}\label{sec:proofs}

\paragraph*{Proof of Lemma \ref{lemma:clt}}

{To prove Part \ref{lemma:clt:lln}, it suffices to show that each component satisfies $\hat{\theta}_{k}(\mathcal{C}) - \theta^0_{k}(\mathcal{C}) = o_p(1)$. Since $Z_{it} = \mu_i^0 + V_{it}$, $\mathbb{E}(V_{it}) = 0$, $\hat{\theta}_{k}(\mathcal{C}) = (|\mathcal{C}_k|T)^{-1} \sum_{i \in \mathcal{C}_k} \sum_{t=1}^T Z_{it}$, and $\theta^{0}_{k}(\mathcal{C}) = |\mathcal{C}_k|^{-1} \sum_{i \in \mathcal{C}_k} \mu_i^0$,}
\[
{
\hat{\theta}_{k}(\mathcal{C}) - \theta^0_{k}(\mathcal{C})
=
\frac{1}{|\mathcal{C}_k|T}
\sum_{i \in \mathcal{C}_k}
\sum_{t=1}^T V_{it}.
}
\]
{Hence $\mathbb{E}[\hat{\theta}_{k}(\mathcal{C}) - \theta^0_{k}(\mathcal{C})] = 0$.}
Let
\[
a_k
=
\frac{\mathbf 1\{i\in\mathcal C_k\}_{i=1}^N}
{|\mathcal C_k|}
\in\mathbb R^N.
\]
Then
\[
\hat\theta_k(\mathcal C)-\theta_k^0(\mathcal C)
=
(a_k\otimes I_P)'\bar V
\]
and hence
\[
\operatorname{Var}
\left[
\hat\theta_k(\mathcal C)-\theta_k^0(\mathcal C)
\right]
=
\frac{1}{T}
(a_k\otimes I_P)'
\Xi
(a_k\otimes I_P).
\]
Assumption~\ref{ass:alternative}\ref{ass:weakdepts} implies that the
diagonal $P\times P$ blocks of $\Xi$ have uniformly bounded trace, so
$\operatorname{tr}(\Xi)=O(N)$ and
$\lambda_{\max}(\Xi)\leq\operatorname{tr}(\Xi)=O(N)$. Moreover,
Assumption~\ref{ass:clusternumbers} gives
\[
\lVert a_k\rVert^2
=
\frac{1}{|\mathcal C_k|}
=
O(N^{-1}).
\]
Therefore,
\[
\left\|
\operatorname{Var}
\left[
\hat\theta_k(\mathcal C)-\theta_k^0(\mathcal C)
\right]
\right\|
\leq
\frac{
\lambda_{\max}(\Xi)\lVert a_k\rVert^2
}{T}
=
O(T^{-1}),
\]
and Chebyshev's inequality proves Part~\ref{lemma:clt:lln}.

{For Part~\ref{lemma:clt:clt}, observe that}
$
{
\sqrt{T}
\bigl[
\hat{\theta}_k(\mathcal{C})
-
\hat{\theta}_g(\mathcal{C})
-
\theta^0_k(\mathcal{C})
+
\theta^0_g(\mathcal{C})
\bigr]
=
(\delta^N_{k,g} \otimes I_P)' \sqrt{T}\bar{V},
}
$
{where $\delta^N_{k,g} \in \mathbb{R}^N$ is the pairwise difference weight vector with $(\delta^N_{k,g})_i = \mathbf{1}\{i \in \mathcal{C}_k\}/|\mathcal{C}_k| - \mathbf{1}\{i \in \mathcal{C}_g\}/|\mathcal{C}_g|$, and $\bar{V} = (\bar{V}_1',\dots,\bar{V}_N')'$ is the $NP$-vector of time-averaged demeaned observations.}

Let $L_{k,g} = \delta^N_{k,g} \otimes I_P$. For any $x \in \mathbb{R}^{P}$, define
$
{
A_x
=
\left\{
w \in \mathbb{R}^{NP}:
L_{k,g}'w \leq x
\right\},
}
$
where the inequality is componentwise. The set $A_x$ is an intersection of $P$ half-spaces, hence a convex polyhedron. Therefore $A_x$ belongs to the simple convex class used in Assumption~\ref{ass:clt}. Since
$
{
\Pr\left(L_{k,g}'\sqrt{T}\bar{V} \leq x\right)
=
\Pr\left(\sqrt{T}\bar{V} \in A_x\right),
}
$
{Assumption~\ref{ass:clt} implies}
\[
\sup_{x \in \mathbb{R}^{P}}
\left|
\Pr\left(L_{k,g}'\sqrt{T}\bar{V} \leq x\right)
-
\Pr\left(L_{k,g}'G \leq x\right)
\right|
\longrightarrow 0.
\]
Thus the distribution of $L_{k,g}'\sqrt{T}\bar{V}$ can be approximated by that of $L_{k,g}'G$, where
$
L_{k,g}'G
\sim
\mathcal{N}\bigl(0,\Sigma_{k,g}(\mathcal{C})\bigr),
$
and
$\Sigma_{k,g}(\mathcal{C})
=
L_{k,g}'\Xi L_{k,g}.
$
The eigenvalue lower bound in Assumption~\ref{ass:clt} ensures that $\Sigma_{k,g}(\mathcal C)$ is positive definite. The relative consistency condition in Assumption~\ref{ass:clt} and Slutsky's theorem then yield
\[
\widehat{\Sigma}_{k,g}(\mathcal{C})^{-1/2}
\sqrt{T}
\bigl[
\hat{\theta}_k(\mathcal{C})
-
\hat{\theta}_g(\mathcal{C})
-
\theta^0_k(\mathcal{C})
+
\theta^0_g(\mathcal{C})
\bigr]
\overset{d}{\longrightarrow}
\mathcal{N}(0,I_P),
\]
which proves Part~\ref{lemma:clt:clt}.

\paragraph*{Proof of Lemma \ref{lemma:kmeansconsistent}}

After relabeling, write
$\Theta_\eta=\{\theta:\max_{1\leq k\leq K}\lVert\theta_k-\theta_k^0(\mathcal C^0)\rVert^2\leq\eta\}$.
By the definition of $\hat k_i(Z)$ in \eqref{eq:kmeans_khat},
\[
\mathbf 1\{\hat k_i(Z)=k\}
\leq
\mathbf 1
\left\{
\sum_{t=1}^{T}
\left\lVert
Z_{it}
-
\theta_k
\right\rVert^2
\leq
\sum_{t=1}^{T}
\left\lVert
Z_{it}
-
\theta_{k_i^0}
\right\rVert^2
\right\}.
\]
Therefore
$N^{-1}\sum_{i=1}^{N}\mathbf 1\{\hat k_i(Z)\neq k_i^0\}\leq \sum_{k=1}^{K}N^{-1}\sum_{i=1}^{N}Q_{ik}(\theta)$,
where
\[
Q_{ik}(\theta)
=
\mathbf 1\{k_i^0\neq k\}
\mathbf 1
\left\{
\sum_{t=1}^{T}
\left\lVert
Z_{it}
-
\theta_k
\right\rVert^2
\leq
\sum_{t=1}^{T}
\left\lVert
Z_{it}
-
\theta_{k_i^0}
\right\rVert^2
\right\}.
\]
Using $Z_{it}=\theta^0_{k_i^0}+V_{it}$,
\[
Q_{ik}(\theta)
\leq
\max_{g\neq k}
\mathbf 1
\left\{
\sum_{t=1}^{T}
\left[
2V_{it}'(\theta_g-\theta_k)
+
\lVert\theta^0_g-\theta_k\rVert^2
-
\lVert\theta^0_g-\theta_g\rVert^2
\right]
\leq 0
\right\}.
\]
Comparing the expression inside braces with its value at $\theta=\theta^0$, the triangular inequality gives, uniformly in $\theta\in\Theta_\eta$,
\[
A
\leq
TC_1\sqrt{\eta}
\left(
\frac{1}{T}
\sum_{t=1}^{T}
\lVert V_{it}\rVert^2
\right)^{1/2}
+
TC_2\eta
+
TC_3\sqrt{\eta},
\]
for constants $C_1,C_2,C_3$ independent of $\eta$ and $T$. Hence
\[
\begin{aligned}
\sup_{\theta\in\Theta_\eta}Q_{ik}(\theta)
&\leq
\widetilde Q_{ik}
\\
&\leq
\max_{g\neq k}
\mathbf 1
\Bigg\{
\sum_{t=1}^{T}
2V_{it}'
(\theta^0_g-\theta^0_k)
\\
&\leq
-
T
\lVert
\theta^0_g-\theta^0_k
\rVert^2
+
TC_1\sqrt{\eta}
\left(
\frac{1}{T}
\sum_{t=1}^{T}
\lVert V_{it}\rVert^2
\right)^{1/2}
+
TC_2\eta
+
TC_3\sqrt{\eta}
\Bigg\}.
\end{aligned}
\]
Let
$
c_0=\min_{g\neq k}\lVert\theta^0_g-\theta^0_k\rVert^2>0,
$
where the strict inequality follows from Assumption~\ref{ass:separation}. For any $C>0$,
\[
\begin{aligned}
\Pr(\widetilde Q_{ik}=1)
&\leq
\sum_{g\neq k}
\Pr
\Bigg(
\sum_{t=1}^{T}
2V_{it}'
(\theta^0_g-\theta^0_k)
\\
&\leq
-Tc_0
+
TC_1\sqrt{\eta}\sqrt C
+
TC_2\eta
+
TC_3\sqrt{\eta}
\Bigg)
+
\sum_{g\neq k}
\Pr
\left(
\frac{1}{T}
\sum_{t=1}^{T}
\lVert V_{it}\rVert^2>C
\right).
\end{aligned}
\]
By Lemma B.5 of \cite{bonhomme15a} and Assumption~\ref{ass:ghat_consistent}, the second probability is $o(T^{-\xi})$ for all $\xi>0$ when $C$ is large. Choosing $\eta$ small enough gives
\[
\Pr
\left(
\frac{1}{T}
\sum_{t=1}^{T}
V_{it}'
(\theta^0_g-\theta^0_k)
\leq
-\frac{c_0}{4}
\right)
=
o(T^{-\xi}),
\]
again by Lemma B.5 of \cite{bonhomme15a}, applied to
$z_{it}^{gk}=V_{it}'(\theta^0_g-\theta^0_k)$.
Therefore, for all $\xi>0$,
$N^{-1}\sum_{i=1}^{N}\sum_{k=1}^{K}\Pr(\widetilde Q_{ik}=1)=o(T^{-\xi})$.
Markov's inequality yields, for any $\tilde \xi>0$,
\[
\begin{aligned}
&
\Pr
\left(
\sup_{\theta\in\Theta_\eta}
\frac{1}{N}
\sum_{i=1}^{N}
\mathbf 1\{\hat k_i(Z)\neq k_i^0\}
>
\tilde \xi T^{-\xi}
\right)
\\
&\leq
\Pr
\left(
\frac{1}{N}
\sum_{i=1}^{N}
\sum_{k=1}^{K}
\widetilde Q_{ik}
>
\tilde \xi T^{-\xi}
\right)
\leq
\frac{
\mathbb E
\left[
N^{-1}
\sum_{i=1}^{N}
\sum_{k=1}^{K}
\widetilde Q_{ik}
\right]
}{
\tilde \xi T^{-\xi}
}
=
o(1).
\end{aligned}
\]

\paragraph*{Proof of Lemma \ref{lemma:perturbation}}

Let $\mathcal C$ be a partition of the panel units with $K\geq2$, and let $\nu_{k,g}$ be the associated $NT\times 1$ vector. Recall that
$\nu_{k,g}=(\nu'_{k,g,1},\dots,\nu'_{k,g,N})'$, $\nu_{k,g,i}=\iota_T\delta_{k,g,i}$, where $\iota_T$ is a $T$-vector of ones and
$
\delta_{k,g,i}
=
\mathbf 1\{k_i=k\} / |\mathcal C_k|
-
\mathbf 1\{k_i=g\} / |\mathcal C_g|.
$
As in the main text, $\Pi_{k,g}=I-\nu_{k,g}\nu_{k,g}'/\lVert\nu_{k,g}\rVert^2$. The following lemmas are used below.

\begin{lemma}\normalfont\label{lemma:varianceconsistent}
Suppose that Assumptions~\ref{ass:alternative}--\ref{ass:clt} hold. Let
\[
\Sigma_{k,g}(\mathcal C)
=
(\delta^N_{k,g}\otimes I_P)'\Xi(\delta^N_{k,g}\otimes I_P)
\]
be the population pairwise difference covariance. Then, as $(T,N)\to\infty$ and $B\to\infty$ with $B/T\to0$,
$
\left\|
\Sigma_{k,g}(\mathcal C)^{-1/2}
\widehat{\Sigma}_{k,g}(\mathcal C)
\Sigma_{k,g}(\mathcal C)^{-1/2}
-
I_P
\right\|
=
o_p(1).
$
\end{lemma}

\begin{proof}
The OS estimator $\widehat{\Sigma}_{k,g}(\mathcal C)$ is computed from the time series of cluster-level pairwise differences
$\hat{\theta}_k(\mathcal C;t)-\hat{\theta}_g(\mathcal C;t)$, which are cross-sectional averages of $V_{it}$.
Consistency for the long-run variance $\Sigma_{k,g}(\mathcal C)$ follows from the same argument as in \cite{sun13}, since Assumption~\ref{ass:clt}, together with Assumption~\ref{ass:alternative}\ref{ass:weakdepts}, ensures the required temporal covariance bound for the pairwise difference process.
\end{proof}

\begin{lemma}\normalfont\label{lemma:wald}
Suppose Assumptions~\ref{ass:alternative}--\ref{ass:clt} and $\mathcal H^{k,g}_0:\theta^0_k(\mathcal C)=\theta^0_g(\mathcal C)$ hold. Then, as $B\to\infty$ and $(T,N)\to\infty$ with $B/T\to0$,
$
D_{k,g}(\mathcal C)
\overset{d}{\longrightarrow}
\chi_P.
$
\end{lemma}

\begin{proof}
It suffices to show $D^2_{k,g}(\mathcal C)\overset{d}{\to}\chi^2_P$. Under $\mathcal H^{k,g}_0$, Lemma~\ref{lemma:clt}\ref{lemma:clt:clt} gives
\[
\widehat{\Sigma}_{k,g}(\mathcal C)^{-1/2}
\sqrt{T}
\bigl[
\hat{\theta}_k(\mathcal C)-\hat{\theta}_g(\mathcal C)
\bigr]
\overset{d}{\longrightarrow}
\mathcal N(0,I_P).
\]
Squaring the Euclidean norm and applying the continuous mapping theorem yields
$D^2_{k,g}(\mathcal C)\overset{d}{\to}\chi^2_P$.
\end{proof}

\begin{lemma}\normalfont\label{lemma:average_equivalence}
Let
$
\bar Z_i
=
\frac{1}{T}\sum_{t=1}^T Z_{it}
$.
For any center $\theta\in\mathbb R^P$,
$
\sum_{t=1}^T
\lVert Z_{it}-\theta\rVert^2
=
\sum_{t=1}^T
\lVert Z_{it}-\bar Z_i\rVert^2
+
T\lVert\bar Z_i-\theta\rVert^2.
$
Consequently, conditional on the same initial partition and the same
tie-breaking rule, every assignment and center update produced by
Algorithm~\ref{algo} is identical to the corresponding update of
Lloyd's Kmeans algorithm applied to
$\bar Z_1,\dots,\bar Z_N$. In particular, the complete algorithmic
path and the final partition are measurable functions of
$\bar Z=(\bar Z_1',\dots,\bar Z_N')'$.
\end{lemma}

\begin{proof}
For any $\theta\in\mathbb R^P$,
$
Z_{it}-\theta
=
Z_{it}-\bar Z_i+\bar Z_i-\theta.
$.
Expanding the squared norm and summing over $t$ gives
$
\sum_{t=1}^T
\lVert Z_{it}-\theta\rVert^2
=
\sum_{t=1}^T
\lVert Z_{it}-\bar Z_i\rVert^2
+
T\lVert\bar Z_i-\theta\rVert^2
$
because
$
\sum_{t=1}^T(Z_{it}-\bar Z_i)=0.
$
The first term does not depend on the candidate cluster. Hence the
assignment minimizing the Panel Kmeans criterion is exactly the
assignment minimizing
$\lVert\bar Z_i-\theta_k\rVert^2$ 
Moreover, for any partition $\mathcal C$,
$
(|\mathcal C_k|T)^{-1}
\sum_{i\in\mathcal C_k}
\sum_{t=1}^T Z_{it}
=
(|\mathcal C_k|)^{-1}
\sum_{i\in\mathcal C_k}\bar Z_i.
$
Thus the center updates are also identical. The result follows inductively over the iterations of the two algorithms.

Finally, let $\Pi_{k,g}$ denote the full-panel projection defined from $\nu_{k,g}$. Since $\nu_{k,g}$ assigns the same scalar weight to all dates of a given unit,
$
\overline{\Pi_{k,g}Z}
=
(\Pi^N_{k,g}\otimes I_P)\bar Z
$.
Thus the full-panel projection and the projection of the unit averages induce exactly the same Panel Kmeans path.
\end{proof}

Lemma~\ref{lemma:average_equivalence} is used only in the proofs. The test and its perturbation remain formulated for Panel Kmeans in the original panel space, while the selection-relevant probabilistic argument can be conducted using the unit averages.

\begin{lemma}\normalfont\label{lemma:independence}
Suppose Assumptions~\ref{ass:alternative}--\ref{ass:clt} and
$\mathcal H^{k,g}_0:
\theta_k^0(\mathcal C)=\theta_g^0(\mathcal C)$ hold. Let
$
L_{k,g}
=
\delta^N_{k,g}\otimes I_P
$,
$
R_{k,g}
=
\Pi^N_{k,g}\otimes I_P.
$
Then
\[
R_{k,g}\bar Z,
\quad
D_{k,g}(\mathcal C),
\quad
\mathrm{dir}
\left[
\widehat\Sigma_{k,g}(\mathcal C)^{-1/2}
Z'\nu_{k,g}
\right]
\]
are asymptotically mutually independent. In particular, conditionally on the first and third quantities, $D_{k,g}(\mathcal C)$ converges to a $\chi_P$ random variable.
\end{lemma}

\begin{proof}
Under $\mathcal H^{k,g}_0$,
$
\frac{1}{\sqrt T}Z'\nu_{k,g}
=
L_{k,g}'\sqrt T\bar V
$.
By the joint Gaussian approximation in Assumption~\ref{ass:clt}, the joint law of
$ 
(
L_{k,g}'\sqrt T\bar V,\,
R_{k,g}\sqrt T\bar V
)
$
is approximated by that of $(L_{k,g}'G,R_{k,g}G)$. Their cross-covariance is $L_{k,g}'\Xi R_{k,g}=0$ by Assumption~\ref{ass:clt}. Since the Gaussian components are jointly Gaussian, they are independent.

The relative consistency condition in Assumption~\ref{ass:clt} gives
\[
X_{k,g}
=
\frac{1}{\sqrt T}
\widehat\Sigma_{k,g}(\mathcal C)^{-1/2}
Z'\nu_{k,g}
\overset{d}{\longrightarrow}
\mathcal N(0,I_P),
\]
jointly with $R_{k,g}\bar Z$, with independent limiting components.
Moreover,
$
D_{k,g}(\mathcal C)=\lVert X_{k,g}\rVert
$
and
$
\mathrm{dir}
\left[
\widehat\Sigma_{k,g}(\mathcal C)^{-1/2}
Z'\nu_{k,g}
\right]
=
\mathrm{dir}(X_{k,g})
$.
The norm and direction of a standard Gaussian vector are independent, and the conclusion follows.
\end{proof}

The proof of Lemma \ref{lemma:perturbation} follows lines similar to the proof of Theorem 1 of \cite{gao24} and Proposition 1 of \cite{chen23}. First,
\begin{align}
Z
&=
\Pi_{k,g}Z
+
(I-\Pi_{k,g})Z
\notag
\\
&=
\Pi_{k,g}Z
+
\frac{
\nu_{k,g}\nu_{k,g}'Z
\widehat{\Sigma}^{-1/2}_{k,g}(\mathcal C)
\widehat{\Sigma}^{1/2}_{k,g}(\mathcal C)
}{
\lVert\nu_{k,g}\rVert^2
}
\notag
\\
&=
\Pi_{k,g}Z
+
\frac{
\left\lVert
\widehat{\Sigma}^{-1/2}_{k,g}(\mathcal C)Z'\nu_{k,g}
\right\rVert
}{
\lVert\nu_{k,g}\rVert^2
}
\nu_{k,g}
\left[
\mathrm{dir}
\left\{
\widehat{\Sigma}^{-1/2}_{k,g}(\mathcal C)Z'\nu_{k,g}
\right\}
\right]'
\widehat{\Sigma}^{1/2}_{k,g}(\mathcal C)
\notag
\\
&=
\Pi_{k,g}Z
+
D_{k,g}(\mathcal C)
\frac{\sqrt{T}
\nu_{k,g}
}{
\lVert\nu_{k,g}\rVert^2
}
\left[
\mathrm{dir}
\left\{
\widehat{\Sigma}^{-1/2}_{k,g}(\mathcal C)Z'\nu_{k,g}
\right\}
\right]'
\widehat{\Sigma}^{1/2}_{k,g}(\mathcal C).
\label{eq:zrewritten}
\end{align}
Substituting \eqref{eq:zrewritten} into \eqref{eq:pairwise_p} gives
\[
\begin{aligned}
p_{\infty}[d_{k,g}(\mathcal C)]
&=
\lim_{(T,N)\to\infty}
\Pr_{\mathcal H_0}
\Bigg[
D_{k,g}(\mathcal C)
\geq
d_{k,g}(\mathcal C)
\,\Big|\,
\\
&\quad
\bigcap_{m=1}^{M}
\bigcap_{i=1}^{N}
\Bigg\{
k_i^{(m)}
\Bigg(
\Pi_{k,g}Z
+
D_{k,g}(\mathcal C)
\frac{\sqrt{T} \nu_{k,g}}
{\lVert\nu_{k,g}\rVert^2}
\\
&\quad\times
\left[
\mathrm{dir}
\left\{
\widehat{\Sigma}^{-1/2}_{k,g}(\mathcal C)
Z'\nu_{k,g}
\right\}
\right]'
\widehat{\Sigma}^{1/2}_{k,g}(\mathcal C)
\Bigg)
=
k_i^{(m)}(z)
\Bigg\},
\\
&\quad
\Pi_{k,g}Z=\Pi_{k,g}z,
\\
&\quad
\mathrm{dir}
\left[
\widehat{\Sigma}^{-1/2}_{k,g}(\mathcal C)
Z'\nu_{k,g}
\right]
=
\mathrm{dir}
\left[
\widehat{\Sigma}^{-1/2}_{k,g}(\mathcal C)
z'\nu_{k,g}
\right]
\Bigg].
\end{aligned}
\]
Using the conditioning restrictions $\Pi_{k,g}Z=\Pi_{k,g}z$ and
\[
\mathrm{dir}
\left[
\widehat{\Sigma}^{-1/2}_{k,g}(\mathcal C)Z'\nu_{k,g}
\right]
=
\mathrm{dir}
\left[
\widehat{\Sigma}^{-1/2}_{k,g}(\mathcal C)z'\nu_{k,g}
\right],
\]
and then applying Lemma~\ref{lemma:independence}, this becomes
\[
\begin{aligned}
p_{\infty}[d_{k,g}(\mathcal C)]
&=
\lim_{(T,N)\to\infty}
\Pr_{\mathcal H_0}
\Bigg[
D_{k,g}(\mathcal C)
\geq
d_{k,g}(\mathcal C)
\,\Big|\,
\\
&\quad
\bigcap_{m=1}^{M}
\bigcap_{i=1}^{N}
\Bigg\{
k_i^{(m)}
\Bigg(
\Pi_{k,g}z
+
D_{k,g}(\mathcal C)
\frac{\sqrt{T}\nu_{k,g}}
{\lVert\nu_{k,g}\rVert^2}
\\
&\quad\times
\left[
\mathrm{dir}
\left\{
\widehat{\Sigma}^{-1/2}_{k,g}(\mathcal C)
z'\nu_{k,g}
\right\}
\right]'
\widehat{\Sigma}^{1/2}_{k,g}(\mathcal C)
\Bigg)
=
k_i^{(m)}(z)
\Bigg\}
\Bigg].
\end{aligned}
\]
Next, substituting the definition of $\Pi_{k,g}$ into \eqref{eq:zrewritten}, define
\begin{align}
z(\phi)
&\equiv
z
-
\frac{
\lVert z'\nu_{k,g}\rVert
}{
\lVert\nu_{k,g}\rVert^2
}
\nu_{k,g}
\left[
\mathrm{dir}(z'\nu_{k,g})
\right]'
\notag
\\
&\quad
+
D_{k,g}(\mathcal C)
\frac{\sqrt{T}\nu_{k,g}}
{\lVert\nu_{k,g}\rVert^2}
\left[
\mathrm{dir}
\left\{
\widehat{\Sigma}^{-1/2}_{k,g}(\mathcal C)
z'\nu_{k,g}
\right\}
\right]'
\widehat{\Sigma}^{1/2}_{k,g}(\mathcal C)
\notag
\\
&=
z
-
\frac{
\lVert z'\nu_{k,g}\rVert
}{
\lVert\nu_{k,g}\rVert^2
}
\nu_{k,g}
\left[
\mathrm{dir}(z'\nu_{k,g})
\right]'
\notag
\\
&\quad
+
\phi
\frac{\sqrt{T}\nu_{k,g}}
{\lVert\nu_{k,g}\rVert^2}
\frac{
\lVert z'\nu_{k,g}\rVert
}{
\left\lVert
\widehat{\Sigma}^{-1/2}_{k,g}(\mathcal C)
z'\nu_{k,g}
\right\rVert
}
\left[
\mathrm{dir}(z'\nu_{k,g})
\right]'.
\label{eq:perturbed2}
\end{align}
Here $\phi\sim\chi_P$ under $\mathcal H_0$ by Lemma~\ref{lemma:wald}. Therefore,
\[
p_{\infty}[d_{k,g}(\mathcal C)]
=
\Pr_{\mathcal H_0}
\left[
\phi
\geq
d_{k,g}(\mathcal C)
\,\middle|\,
\bigcap_{m=1}^{M}
\bigcap_{i=1}^{N}
\left\{
k_i^{(m)}[z(\phi)]
=
k_i^{(m)}(z)
\right\}
\right].
\]
Thus $p_{\infty}[d_{k,g}(\mathcal C)]$ is the survival function of a $\chi_P$ variable truncated to
\[
\mathcal T
=
\left\{
\phi\in\mathbb R_{\geq0}:
\bigcap_{m=1}^{M}
\bigcap_{i=1}^{N}
\left\{
k_i^{(m)}[z(\phi)]
=
k_i^{(m)}(z)
\right\}
\right\}.
\]
Equivalently,
$
p[d_{k,g}(\mathcal C)]
=
1
-
F_{\chi_P}
\left[
d_{k,g}(\mathcal C);
\mathcal T
\right].
$
This completes the proof.

\paragraph*{Proof of Proposition \ref{proposition:pairwise_p}}

\begin{lemma}\normalfont\label{lemma:wald_diverges}
Suppose that Assumptions \ref{ass:alternative}-\ref{ass:ghat_consistent} and
$\mathcal H^{k,g}_1:\theta^0_k(\mathcal C)\neq\theta^0_g(\mathcal C)$ hold.
Then $D_{k,g}(\mathcal C)\to\infty$ as $B\to\infty$ and $(T,N)\to\infty$ with $B/T\to0$.
\end{lemma}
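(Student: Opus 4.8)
The plan is to show that $D_{k,g}(\widehat{\mathcal{C}})$ diverges to infinity in probability, i.e.\ $\mathbb{P}[D_{k,g}(\widehat{\mathcal{C}}) > C] \to 1$ for every $C > 0$. Writing $\hat{\Delta}_{k,g} = \hat{\theta}_{k}(\widehat{\mathcal{C}}) - \hat{\theta}_{g}(\widehat{\mathcal{C}})$, whenever $\widehat{\Sigma}_{k,g}(\widehat{\mathcal{C}})$ is symmetric positive definite we have $D_{k,g}^2(\widehat{\mathcal{C}}) = T\,\hat{\Delta}'_{k,g}\widehat{\Sigma}^{-1}_{k,g}(\widehat{\mathcal{C}})\hat{\Delta}_{k,g} \geq T \lVert \hat{\Delta}_{k,g} \rVert^2 \big/ \lambda_{\max}[\widehat{\Sigma}_{k,g}(\widehat{\mathcal{C}})]$. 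It therefore suffices to prove that (i) $\lVert \hat{\Delta}_{k,g} \rVert^2$ is bounded away from zero in probability, and (ii) $\lambda_{\max}[\widehat{\Sigma}_{k,g}(\widehat{\mathcal{C}})] = O_p(1)$ with $\widehat{\Sigma}_{k,g}(\widehat{\mathcal{C}})$ positive definite with probability approaching one. Under Assumptions~\ref{ass:clustercenters}--\ref{ass:separation} with $K = K^0$ the alternative $\mathcal{H}^{k,g}_1$ holds automatically for every $k \neq g$, so nothing is lost by working directly with the well-separated true centers.

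For step (i), Assumptions~\ref{ass:alternative}--\ref{ass:separation} with $K = K^0$ place us in the setting of Lemma~\ref{lemma:kmeansconsistent}\ref{lemma:clt:lln_kmeans}: $\hat{\theta}(\widehat{\mathcal{C}})$ is consistent for $\theta^0$ up to relabeling, i.e.\ there is a bijection $\upsilon$ of $\{1,\dots,K\}$ with $\lVert \hat{\theta}_k(\widehat{\mathcal{C}}) - \theta^0_{\upsilon(k)} \rVert = o_p(1)$ for each $k$. Since the true centers are separated (Assumption~\ref{ass:separation}), $\upsilon$ is one-to-one, so $\upsilon(k) \neq \upsilon(g)$ and $\hat{\Delta}_{k,g} = \theta^0_{\upsilon(k)} - \theta^0_{\upsilon(g)} + o_p(1)$. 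Hence $\lVert \hat{\Delta}_{k,g} \rVert^2 \geq \min_{k' \neq g'} C_{k',g'} - o_p(1)$, which is bounded away from zero.

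Step (ii) is the crux and I expect it to be the only nontrivial part. The difficulty is that $\widehat{\mathcal{C}}$ is data-driven, so $\widehat{\Sigma}_{k,g}(\widehat{\mathcal{C}})$ is a quadratic form in cross-sectional averages taken over random, selected index sets; the selection precludes a direct appeal to the orthonormal-series consistency result, and crude Cauchy--Schwarz/Jensen bounds only deliver $\widehat{\Sigma}_{k,g}(\widehat{\mathcal{C}}) = O_p(T)$, which is not enough to make $D_{k,g}^2$ diverge (it is $o_p(T)$ that is needed). The resolution is super-consistency of the Panel Kmeans memberships: under Assumption~\ref{ass:ghat_consistent} together with the rate condition $N/T^{\xi}\to 0$ in force where this lemma is applied (Theorem~\ref{proposition:pairwise_p}\ref{proposition:pairwise_p:partc}), Lemma~\ref{lemma:kmeansconsistent}\ref{lemma:clt:ghat_consistent} gives $\mathbb{P}(\sup_i |\hat{k}_i(Z) - k_i^0| > 0) \to 0$, so the event $\mathcal{G}$ on which $\widehat{\mathcal{C}}$ coincides with $\mathcal{C}^0$ up to the relabeling $\upsilon$ has probability tending to one. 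On $\mathcal{G}$, $\widehat{\Sigma}_{k,g}(\widehat{\mathcal{C}}) = \widehat{\Sigma}_{\upsilon(k),\upsilon(g)}(\mathcal{C}^0)$, a block combination of $\widehat{\Omega}(\mathcal{C}^0)$; Lemma~\ref{lemma:varianceconsistent} then gives $\widehat{\Omega}(\mathcal{C}^0) - \Omega(\mathcal{C}^0) = o_p(1)$, with $\Omega(\mathcal{C}^0)$ finite (its norm is uniformly bounded by Assumption~\ref{ass:alternative}\ref{ass:weakdepts}) and the limiting block $\Sigma_{\upsilon(k),\upsilon(g)}(\mathcal{C}^0)$ positive definite as in Lemma~\ref{lemma:wald}. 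Hence on $\mathcal{G}$, $\lambda_{\max}[\widehat{\Sigma}_{k,g}(\widehat{\mathcal{C}})] = O_p(1)$ and $\widehat{\Sigma}_{k,g}(\widehat{\mathcal{C}})$ is positive definite with probability approaching one; since $\mathbb{P}(\mathcal{G}) \to 1$ these conclusions also hold unconditionally.

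Combining (i) and (ii), on $\mathcal{G}$ we get $D_{k,g}^2(\widehat{\mathcal{C}}) \geq T\big[\min_{k' \neq g'} C_{k',g'} - o_p(1)\big]\big/ O_p(1) \overset{p}{\longrightarrow} \infty$, and since $\mathbb{P}(\mathcal{G}) \to 1$ this yields $\mathbb{P}[D_{k,g}(\widehat{\mathcal{C}}) > C] \to 1$ for every $C > 0$, as claimed. To summarize the structure of the argument: the numerator and the distributional scaling are immediate from Lemmas~\ref{lemma:kmeansconsistent} and~\ref{lemma:varianceconsistent} once applied to the oracle partition, and the whole point of the stronger Assumption~\ref{ass:ghat_consistent} is to ensure that the data-driven partition equals the oracle one with probability one in the limit, so that the selection effect in $\widehat{\Sigma}_{k,g}(\widehat{\mathcal{C}})$ disappears asymptotically.
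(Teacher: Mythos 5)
Your proof is correct and follows essentially the same route as the paper: consistency of the Panel Kmeans centers (Lemma~\ref{lemma:kmeansconsistent}) combined with Assumption~\ref{ass:separation} keeps $\hat{\theta}_{k}(\widehat{\mathcal{C}})-\hat{\theta}_{g}(\widehat{\mathcal{C}})$ bounded away from zero while the variance estimator remains $O_p(1)$ and positive definite, so $T^{-1/2}D_{k,g}(\widehat{\mathcal{C}})$ is bounded below in probability by a strictly positive quantity and $D_{k,g}(\widehat{\mathcal{C}})$ diverges. The paper phrases this as direct convergence $T^{-1/2}D_{k,g}(\widehat{\mathcal{C}}) \overset{p}{\longrightarrow} \lVert \Sigma^{-1/2}_{k,g}(\mathcal{C}^0)[\theta^0_{k}-\theta^0_{g}]\rVert>0$ rather than via your $\lambda_{\max}$ lower bound, and is terser on the denominator, which you justify more explicitly through membership super-consistency and Lemma~\ref{lemma:varianceconsistent}.
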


\begin{proof}
By Assumption~\ref{ass:clt},
$
\Sigma_{k,g}(\mathcal C^0)
=
(\delta^N_{k,g}\otimes I_P)'\Xi(\delta^N_{k,g}\otimes I_P)
$
is positive definite. Since $\theta^0_k-\theta^0_g\neq0$ under $\mathcal H^{k,g}_1$,
$
\left\lVert
\Sigma_{k,g}(\mathcal C^0)^{-1/2}
(\theta^0_k-\theta^0_g)
\right\rVert
>0.
$
Moreover,
\[
T^{-1/2}D_{k,g}(\mathcal C)
=
\left\lVert
\widehat{\Sigma}_{k,g}(\widehat{\mathcal C})^{-1/2}
[
\hat{\theta}_k(\widehat{\mathcal C})
-
\hat{\theta}_g(\widehat{\mathcal C})
]
\right\rVert .
\]
By Lemma~\ref{lemma:clt}\ref{lemma:clt:lln}, Lemma~\ref{lemma:kmeansconsistent}, and Lemma~\ref{lemma:varianceconsistent},
\[
T^{-1/2}D_{k,g}(\mathcal C)
\overset{p}{\longrightarrow}
\left\lVert
\Sigma_{k,g}(\mathcal C^0)^{-1/2}
(\theta^0_k-\theta^0_g)
\right\rVert
>0.
\]
Hence $D_{k,g}(\mathcal C)\to\infty$ in probability.
\end{proof}

We first prove Part (a). By Lemma~\ref{lemma:perturbation},
$
p[d_{k,g}(\mathcal C)]
=
1-
F_{\chi_P}
\left[
D_{k,g}(\mathcal C);
\mathcal T
\right],
$
where
\[
\mathcal T
=
\left\{
\phi\in\mathbb R_{\geq0}:
\bigcap_{m=1}^{M}
\bigcap_{i=1}^{N}
\left\{
k_i^{(m)}[z(\phi)]
=
k_i^{(m)}(z)
\right\}
\right\},
\]
and
\[
\begin{aligned}
z(\phi)
&=
\Pi_{k,g}z
+
\phi
\frac{\sqrt T \nu_{k,g}}
{\lVert\nu_{k,g}\rVert^2}
\left[
\mathrm{dir}
\left\{
\widehat{\Sigma}^{-1/2}_{k,g}(\mathcal C)
z'\nu_{k,g}
\right\}
\right]'
\widehat{\Sigma}^{1/2}_{k,g}(\mathcal C).
\end{aligned}
\]
By Lemma S.3, $\mathcal T$ has positive $\chi_P$ measure with probability approaching one, so the truncated distribution function
$F_{\chi_P}(\cdot;\mathcal T)$ is well-defined with probability approaching one. Under $\mathcal H^{k,g}_0$, Lemmas~\ref{lemma:wald} and \ref{lemma:independence} imply that, conditional on the selection path, $\Pi_{k,g}Z$, and the direction
$
\mathrm{dir}
\left[
\widehat{\Sigma}^{-1/2}_{k,g}(\mathcal C)
Z'\nu_{k,g}
\right],
$
the limiting distribution of $D_{k,g}(\mathcal C)$ is $\chi_P$ truncated to $\mathcal T$. Therefore,
\[
\Pr
\left[
1-
F_{\chi_P}
\left[
D_{k,g}(\mathcal C);
\mathcal T
\right]
\leq \alpha
\,\middle|\,
\mathcal T
\right]
\longrightarrow
\alpha.
\]

It remains only to pass from the refined conditioning event to the conditioning event that defines the selected pair. Let
$
\mathcal M
=
\bigcap_{i=1}^{N}
\left\{
k_i^{(M)}(Z)=k_i^{(M)}(z)
\right\}
$
and let $\mathcal E$ denote the refined conditioning event consisting of the full algorithmic path, the projection $\Pi_{k,g}Z=\Pi_{k,g}z$, and the direction condition above. Since $\mathcal E$ refines $\mathcal M$, the tower property gives
\[
\begin{aligned}
\limsup_{(T,N)\to\infty}
\Pr
\left[
p[d_{k,g}(\mathcal C)]\leq\alpha
\,\middle|\,
\mathcal M
\right]
=
\limsup_{(T,N)\to\infty}
\mathbb E
\left[
\Pr
\left(
p[d_{k,g}(\mathcal C)]\leq\alpha
\,\middle|\,
\mathcal E
\right)
\,\middle|\,
\mathcal M
\right]
=
\alpha.
\end{aligned}
\]
This proves Part (a).

Part (b) follows from Lemma~\ref{lemma:wald_diverges}. Under $\mathcal H^{k,g}_1$,
$D_{k,g}(\mathcal C)\to\infty$ in probability. Since $p[d_{k,g}(\mathcal C)]$ is the upper tail probability of a truncated $\chi_P$ distribution evaluated at $D_{k,g}(\mathcal C)$,
$$
\lim_{(T,N)\to\infty}
\Pr
\left\{
p[d_{k,g}(\mathcal C)]
\leq
\alpha
\right\}
=
1
$$
for every $\alpha\in(0,1)$. By Lemma~\ref{lemma:kmeansconsistent}\ref{lemma:clt:ghat_consistent}, the conditioning event holds with probability approaching one. This proves Part (b).

\paragraph*{Proof of Theorem \ref{theorem:homo}}

\begin{lemma}\normalfont\label{lemma:cmt}
    Let $G_{NT} = (G_{1,NT},\dots,G_{n,NT})'$ be a random $n$-vector such that $G_{NT} \overset{d}{\longrightarrow} G$ as $(T,N) \to \infty$.
    Define
    $$
    f(x_1,\dots,x_n) = \frac{r}{r+1} n^{1+1/r} \left( \frac{1}{n} \sum_{i = 1}^n x_i^r \right)^{1/r}
    $$
    where $x_i > 0$ for all $i = 1,\dots,n$ and $r \in [-\infty,-1)$.
    Then
    $
    f(G_{NT}) \overset{d}{\longrightarrow} f(G).
    $
\end{lemma}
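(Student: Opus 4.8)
The plan is to reduce everything to the continuous mapping theorem (CMT), so the entire task is to verify that $f$ is continuous on a set carrying full mass of the limit $G$. Recall the CMT: if $G_{NT}\overset{d}{\longrightarrow} G$ and $f$ is continuous on a Borel set $C$ with $\mathbb{P}(G\in C)=1$, then $f(G_{NT})\overset{d}{\longrightarrow} f(G)$. In the intended application $G_{NT}$ collects the pairwise and O-EPA $p$-values (and $G$ their joint limit), so $G_{NT},G$ lie in $[0,1]^n\subset[0,\infty)^n$; more generally, the standing convention $x_i>0$ makes the nonnegative orthant $C=[0,\infty)^n$ the natural domain. I would therefore take $C=[0,\infty)^n$ and show that $f$ admits a continuous extension to $C$ in both regimes $r\in(-\infty,-1)$ and $r=-\infty$.

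For finite $r\in(-\infty,-1)$: on the open orthant $(0,\infty)^n$ the coordinatewise map $x\mapsto x^{r}$ is continuous and strictly positive, hence $x\mapsto \tfrac1n\sum_{i=1}^n x_i^{r}$ is continuous and strictly positive there, and composing with the continuous map $y\mapsto y^{1/r}$ on $(0,\infty)$ and with the constant $\tfrac{r}{r+1}n^{1+1/r}$ shows that $f$ is continuous on $(0,\infty)^n$. For the boundary I would argue that if $x^{(m)}\to x$ with $x_j=0$ for some $j$, then $(x^{(m)}_j)^{r}\to+\infty$ because $r<0$, so $\tfrac1n\sum_i(x^{(m)}_i)^{r}\to+\infty$, and since $1/r<0$ this forces $\big(\tfrac1n\sum_i(x^{(m)}_i)^{r}\big)^{1/r}\to 0$; hence setting $f(x)=0$ whenever $\min_i x_i=0$ yields a continuous extension of $f$ to all of $[0,\infty)^n$.

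For $r=-\infty$: here $f$ is read as the limit $\lim_{r\to-\infty}$ of the displayed expression, and since $\tfrac{r}{r+1}\to 1$, $n^{1+1/r}\to n$, and the power mean satisfies $\big(\tfrac1n\sum_i x_i^{r}\big)^{1/r}\to\min_i x_i$ as $r\to-\infty$ (with value $0$ if some $x_i=0$), we obtain $f(x)=n\min_i x_i$, which is continuous on $\mathbb{R}^n$. In either regime $f$ is continuous on $C=[0,\infty)^n$ and $\mathbb{P}(G\in C)=1$, so the CMT delivers $f(G_{NT})\overset{d}{\longrightarrow} f(G)$, as claimed.

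The one genuinely delicate point -- the ``hard part,'' though it remains routine -- is the boundary behavior: $f$ as literally written is defined only for strictly positive arguments, and a priori one might worry about a discontinuity set of positive $G$-mass. What rescues this is precisely the sign restriction $r<0$ (equivalently $r\in[-\infty,-1)$), which makes $x^{r}$ blow up at $0$ and hence forces the unique continuous extension of $f$ to vanish on $\partial[0,\infty)^n$. Everything else is a composition of continuous maps together with the standard limiting identity for power means.
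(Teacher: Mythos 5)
Your proof is correct and follows essentially the same route as the paper, whose entire argument is the one-line observation that the result ``follows from the Continuous Mapping Theorem noting that $f$ is continuous.'' Your additional care about the boundary behavior (the continuous extension of $f$ by $0$ when some coordinate vanishes, forced by $r<0$) and the $r=-\infty$ Bonferroni limit $f(x)=n\min_i x_i$ simply fills in the continuity check the paper leaves implicit.
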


\begin{lemma}\normalfont\label{lemma:portmanteau}
    Let $G_{NT} = (G_{1,NT},\dots,G_{n,NT})'$ be a random $n$-vector such that $G_{NT} \overset{d}{\longrightarrow} G$ as $(T,N) \to \infty$.
    Define
    $
    \mathcal{R}_{\alpha} = \left\lbrace (x_1,\dots,x_n) \in [0,1]^{n} : F(x_1,\dots,x_n) \leq \alpha \right\rbrace
    $
    for all $\alpha \in (0,1)$, where
    $
    F(x_1,\dots,x_n) = f(x_1,\dots,x_n) \wedge 1
    $
    for some continuous function $f : [0,1]^n \to \mathbb{R}$.
    Then
    $
    \lim_{(T,N) \to \infty} \Pr \left( G_{NT} \in \mathcal{R}_{\alpha} \right) \leq \Pr \left( G \in \mathcal{R}_{\alpha} \right).
    $
\end{lemma}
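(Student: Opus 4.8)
The plan is to recognize Lemma \ref{lemma:portmanteau} as an immediate consequence of the Portmanteau theorem for weak convergence, once we verify that $\mathcal{R}_q$ is a closed subset of $\mathbb{R}^n$. The only substantive step is this topological verification; the probabilistic content is entirely carried by the hypothesis $G_{NT} \overset{d}{\longrightarrow} G$, and the rest is bookkeeping.

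First I would observe that $F = f \wedge 1$ is continuous on $[0,1]^n$: it is the pointwise minimum of $f$, continuous by assumption, and the constant function $1$, and the minimum of two continuous functions is continuous. Consequently $\mathcal{R}_q = \{ x \in [0,1]^n : F(x) \leq q \} = F^{-1}((-\infty,q])$ is, being the preimage of a closed set under a map continuous on $[0,1]^n$, closed in the subspace $[0,1]^n$; since $[0,1]^n$ is itself closed in $\mathbb{R}^n$, $\mathcal{R}_q$ is closed in $\mathbb{R}^n$. Note also that $\mathcal{R}_q \subseteq [0,1]^n$ by definition, so the event $\{ G_{NT} \in \mathcal{R}_q \}$ is unambiguous irrespective of whether $G_{NT}$ may take values outside $[0,1]^n$.

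Next, since $N = N(T)$, the joint limit $(T,N) \to \infty$ is a limit along a sequence indexed by $T$, so the Portmanteau theorem applies: $G_{NT} \overset{d}{\longrightarrow} G$ implies that for every closed $C \subseteq \mathbb{R}^n$, $\limsup_{(T,N) \to \infty} \mathbb{P}(G_{NT} \in C) \leq \mathbb{P}(G \in C)$. Taking $C = \mathcal{R}_q$ gives $\limsup_{(T,N) \to \infty} \mathbb{P}(G_{NT} \in \mathcal{R}_q) \leq \mathbb{P}(G \in \mathcal{R}_q)$; whenever the limit on the left exists (as it does in the application, where $G_{NT}$ is the vector of pairwise and O-EPA $p$-values and $G$ its limit with $\mathbb{P}(G \in \partial\mathcal{R}_q)=0$) it coincides with the $\limsup$, which is the stated inequality.

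The main obstacle — such as it is — is merely the continuity of $F$ on the full closed cube, including its boundary. In the application the relevant $f$ is the normalized generalized mean $f(x_1,\dots,x_n) = [r/(r+1)]\, n^{1+1/r} \big(n^{-1}\sum_{i=1}^n x_i^r\big)^{1/r}$ with $r \in [-\infty,-1)$, which requires the conventions $x^r = +\infty$ for $x = 0$ and $(+\infty)^{1/r} = 0$ when $r < 0$; under these conventions $f$ extends continuously to all of $[0,1]^n$, since as $x_i \downarrow 0$ the inner average diverges and its $1/r$-power tends to $0$, which is exactly the value assigned at the boundary. Abstractly, however, the lemma simply postulates $f$ continuous on $[0,1]^n$, so no further work is needed for the statement as phrased, and the proof reduces to the two displays above.
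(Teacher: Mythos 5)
Your argument is correct and follows essentially the same route as the paper's proof: continuity of $F = f \wedge 1$, closedness of $\mathcal{R}_q$ in $[0,1]^n$ and hence in $\mathbb{R}^n$, and then the Portmanteau theorem for closed sets. Your added remarks—that the stated ``$\lim$'' is really the Portmanteau $\limsup$, and that the application's generalized mean with $r<0$ needs boundary conventions to be continuous on all of $[0,1]^n$—are careful touches the paper passes over, but they do not change the argument.
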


\begin{proof}
    Since $f$ is continuous and bounded above by construction, the function $F = f \wedge 1$ is also continuous.
    Then the set $\mathcal{R}_\alpha = \{ x \in [0,1]^n : F(x) \leq \alpha \}$ is closed.
    The result follows from the Portmanteau Theorem \citep[see, Section 3.4 of][]{gasparin2024combining}.
\end{proof}

Define $p^*[D_{k,g}(\widehat{\mathcal{C}})]$ as the limit of the random variable $p[d_{k,g}(\widehat{\mathcal{C}})]$ which satisfies $p[d_{k,g}(\widehat{\mathcal{C}})] \overset{d}{\longrightarrow} p^*[D_{k,g}(\widehat{\mathcal{C}})] \sim \mathbb{U}[0,1]$ as $(T,N)\to \infty$ for all $k,g \in \{1,\dots,K\}$, $k \neq g$, which holds by Proposition \ref{proposition:pairwise_p}\ref{proposition:pairwise_p:partb}.
By Theorem 1 of \citet{spreng23}, we have
\begin{equation*}
    \Pr \left[ \frac{r}{r+1} n_p^{1+1/r} \left\lbrace \frac{1}{n_p} \sum_{(k,g)\in\mathcal P_K} \{ p^*[D_{k,g}(\widehat{\mathcal{C}})] \}^r \right\rbrace^{1/r} \leq \alpha \right] \leq \alpha,
\end{equation*}
Then, part \ref{theorem:main:parta} is proved directly by Lemma \ref{lemma:portmanteau}.

Part \ref{theorem:main:partb} now follows from Proposition \ref{proposition:pairwise_p}\ref{proposition:pairwise_p:partb} under which at least for one pair $k,g \in \{ 1,\dots,K \}$, $k \neq g$, the $p$-value satisfies $p(D_{k,g}) \overset{p}{\longrightarrow} 0$.

\paragraph*{Proof of Proposition \ref{proposition:overall}}

Part \ref{proposition:overall:parta} follows directly from Theorem 3.1 of \cite{sun13} under our Assumptions \ref{ass:alternative} and \ref{ass:clt} by setting $\mathcal{C} = (1,\dots,1)'$.
Part \ref{proposition:overall:partb} follows from Section 4.1 of \cite{sun11} under the same assumptions. 

\paragraph*{Proof of Theorem \ref{theorem:main}}

Part (a) follows the same lines as the proof of Theorem \ref{theorem:homo} and noting that the $p$-value associated to the O-EPA test statistic is asymptotically uniform by Proposition \ref{proposition:overall}.
Similarly, Part (b) follows from the fact that under the alternative hypothesis, either at least for one $k,g \in \{ 1,\dots,K \}$, $k \neq g$, the $p$-value satisfies $p(D_{k,g}) \overset{p}{\longrightarrow} 0$ and the conditioning event holds w.p.a. 1 by Lemma \ref{lemma:kmeansconsistent}\ref{lemma:clt:ghat_consistent}, or the O-EPA test statistic diverges.

\paragraph*{Proof of Proposition \ref{proposition:bic}}

Consider the mapping $Z \mapsto \widehat{\mathcal{C}}$ where $Z$ is the input of Algorithm \ref{algo} and $\widehat{\mathcal{C}}$ is the partition of the panel units output by the algorithm.
Notice that $Z \mapsto \widehat{\mathcal{C}}$ is the composition of two deterministic procedures:
\begin{enumerate*}
\item selection of the number of clusters $\widehat{K}_{IC}$ via the minimization of $IC(K)$ in \eqref{eq:kmeans_khat}, and
\item estimation of the clustering assignment $\widehat{\mathcal{C}}$ by solving the Panel Kmeans problem \eqref{eq:kmeans} with $K = \widehat{K}_{IC}$.
\end{enumerate*}
Since both steps are deterministic functions of the data, the composite map $Z \mapsto \widehat{\mathcal{C}}$ is itself deterministic.

Now fix a particular realization $\mathcal{C}^\ast$ of the clustering.
The number of clusters in $\mathcal{C}^\ast$ is fixed.
Denote this number by $K^\ast$.
Then,
$
\{ \widehat{\mathcal{C}} = \mathcal{C}^\ast \} \subseteq \{ \widehat{K}_{IC} = K^\ast \},
$
by the uniqueness of the output $\mathcal{C}^\ast$ for a given $K^\ast$.
Hence, conditioning on the event $\{ \widehat{\mathcal{C}} = \mathcal{C}^\ast \}$ implicitly restricts us to the subset of the sample space where $\widehat{K}_{IC} = K^\ast$. This yields
\[
\Pr\left[D_{k,g}(\widehat{\mathcal{C}}) \in \mathcal{T} \;\middle|\; \widehat{\mathcal{C}} = \mathcal{C}^\ast \right] = \Pr\left[D_{k,g}(\widehat{\mathcal{C}}) \in \mathcal{T} \;\middle|\; \widehat{K}_{IC} = K^\ast,\; \widehat{\mathcal{C}} = \mathcal{C}^\ast \right],
\]
as claimed.
\end{appendix}

\title{Supplement to ``Testing Clustered Equal Predictive Ability with Unknown Clusters''}

\author{%
O\u{g}uzhan Akg\"{u}n\thanks{LEDi UR 7467, Universit\'{e} Bourgogne Europe, France. Email: oguzhan.akgun@u-bourgogne.fr}
\and Alain Pirotte\thanks{CRED, Paris-Panth\'{e}on-Assas University, France. Email: alain.pirotte@assas-universite.fr}
\and Giovanni Urga\thanks{Bayes Business School (formerly Cass), London, United Kingdom. Email: g.urga@city.ac.uk}
\and Zhenlin Yang\thanks{School of Economics, Singapore Management University, Singapore. Email: zlyang@smu.edu.sg}}

\date{\today}

\maketitle

\section{Derivations of the Loss Differentials in Section 2.2}\label{sec:examples_derivations}

\subsection{Details on Example 1 of the main paper}

We begin by showing that, under the setup described in Example 1 of the main paper, we have
\begin{align}\label{eq:loss_diff_ex1}
&\frac{1}{|\mathcal{C}_k|}
\sum_{i \in \mathcal{C}_k}
\left\{
\mathbb{E}\left[
\left(\widehat{Y}^{(1)}_{i,R+1} - Y_{i,R+1}\right)^2
\right]
-
\mathbb{E}\left[
\left(\widehat{Y}^{(2)}_{i,R+1} - Y_{i,R+1}\right)^2
\right]
\right\} \nonumber \\
&\hspace{1cm}
=
\begin{cases}
\displaystyle
\frac{1}{|\mathcal{C}_1|}
\sum_{i \in \mathcal{C}_1}
\left[
\mathbb{V}(\hat{\alpha}_i)
+
\mathbb{B}(\hat{\alpha}_i)^2
-
\alpha_i^2
+
\Delta_i
\right],
& k = 1, \\[2ex]
\displaystyle
\frac{1}{|\mathcal{C}_2|}
\sum_{i \in \mathcal{C}_2}
\left[
\mathbb{V}(\hat{\alpha}_i)
+
\mathbb{B}(\hat{\alpha}_i)^2
+
\Delta_i
\right],
& k = 2.
\end{cases}
\end{align}
where
%\begin{equation*}
$\Delta_i = 
[ \mathbb{V}(\hat{\beta}_i) - \mathbb{V}(\tilde{\beta}_i) + 
\mathbb{B}(\hat{\beta}_i)^2 - \mathbb{B}(\tilde{\beta}_i)^2 ] X_{i,R}^2
+ 2 X_{i,R} \text{Cov}(\hat{\alpha}_i, \hat{\beta}_i)$ with $\mathbb{B}(\cdot)$ denoting the bias of an estimator.
%\end{equation*}

We recall the setup of the example. We assume that $X_{i,R}$ be known and fixed at the time of forecasting. The true data-generating process is given by
\[
Y_{i,R+1} = 
\begin{cases}
\alpha_i + \beta_i X_{i,R} + U_{i,R+1}, & i \in \mathcal{C}_1, \\
\beta_i X_{i,R} + U_{i,R+1}, & i \in \mathcal{C}_2.
\end{cases}
\]
We assume here for simplicity that $U_{i,R+1} \sim iid (0, \sigma^2)$ and is independent of all other variables. The predictors $\hat{\alpha}_i$, $\hat{\beta}_i$, and $\tilde{\beta}_i$ are estimated from a fixed window of past observations and are thus random, while $X_{i,R}$ is treated as fixed.
We analyze the two clusters separately.

\paragraph*{Case 1: $i \in \mathcal{C}_1$ (True DGP with intercept).} 
In this case, Forecaster 1 correctly includes both an intercept and a slope, whereas Forecaster 2 omits the intercept and thus suffers from misspecification bias.
The one-step-ahead forecast errors can be written as
\begin{align*}
\widehat{Y}^{(1)}_{i,R+1} - Y_{i,R+1} &= (\hat{\alpha}_i - \alpha_i) + (\hat{\beta}_i - \beta_i) X_{i,R} + U_{i,R+1}, \\
\widehat{Y}^{(2)}_{i,R+1} - Y_{i,R+1} &= -\alpha_i + (\tilde{\beta}_i - \beta_i) X_{i,R} + U_{i,R+1}.
\end{align*}
The expectation of squared forecast error of Forecaster 1 is, by a bias–variance decomposition:
\begin{align*}
\mathbb{E}
\left[
\left(
\widehat{Y}^{(1)}_{i,R+1}
-
Y_{i,R+1}
\right)^2
\right]
&=
\mathbb{E}
\left[
(\hat{\alpha}_i-\alpha_i)^2
\right]
+
X_{i,R}^2
\mathbb{E}
\left[
(\hat{\beta}_i-\beta_i)^2
\right]
\\
&\quad
+
2X_{i,R}
\mathbb{E}
\left[
(\hat{\alpha}_i-\alpha_i)
(\hat{\beta}_i-\beta_i)
\right]
+
\mathbb{E}
\left[
U_{i,R+1}^2
\right]
\\
&=
\mathbb{V}(\hat{\alpha}_i)
+
\mathbb{B}(\hat{\alpha}_i)^2
+
X_{i,R}^2
\left[
\mathbb{V}(\hat{\beta}_i)
+
\mathbb{B}(\hat{\beta}_i)^2
\right]
\\
&\quad
+
2X_{i,R}
\operatorname{Cov}(\hat{\alpha}_i,\hat{\beta}_i)
+
\sigma^2 .
\end{align*}

Now, let us turn to Forecaster 2, which omits the intercept. This model is misspecified for units in the cluster $\mathcal{C}_1$. Since $\tilde{\beta}_i$ is the OLS estimator from a regression without intercept, it absorbs some of the variation of the omitted constant. The resulting forecast error has a fixed bias term $-\alpha_i$, in addition to the slope estimation error and innovation.
Taking the expectation of its square, we have
\vspace{-1mm}\[
\mathbb{E}[(\widehat{Y}^{(2)}_{i,R+1} - Y_{i,R+1})^2] 
= \alpha_i^2 + X_{i,R}^2 [ \mathbb{V}(\tilde{\beta}_i) + \mathbb{B}(\tilde{\beta}_i)^2 ] + \sigma^2.
\vspace{-1mm}\]
Subtracting these two expressions yields the expected forecast loss differential:
\begin{align*}
&
\mathbb{E}
\left[
\left(
\widehat{Y}^{(1)}_{i,R+1}
-
Y_{i,R+1}
\right)^2
\right]
-
\mathbb{E}
\left[
\left(
\widehat{Y}^{(2)}_{i,R+1}
-
Y_{i,R+1}
\right)^2
\right]
\\
&=
\mathbb{V}(\hat{\alpha}_i)
+
\mathbb{B}(\hat{\alpha}_i)^2
-
\alpha_i^2
+
\left[
\mathbb{V}(\hat{\beta}_i)
-
\mathbb{V}(\tilde{\beta}_i)
\right]
X_{i,R}^2
\\
&\quad
+
\left[
\mathbb{B}(\hat{\beta}_i)^2
-
\mathbb{B}(\tilde{\beta}_i)^2
\right]
X_{i,R}^2
+
2X_{i,R}
\operatorname{Cov}(\hat{\alpha}_i,\hat{\beta}_i)
\\
&=
\mathbb{V}(\hat{\alpha}_i)
+
\mathbb{B}(\hat{\alpha}_i)^2
-
\alpha_i^2
+
\Delta_i .
\end{align*}
where $\Delta_i = [ \mathbb{V}(\hat{\beta}_i) - \mathbb{V}(\tilde{\beta}_i) + \mathbb{B}(\hat{\beta}_i)^2 - \mathbb{B}(\tilde{\beta}_i)^2 ] X_{i,R}^2 + 2 X_{i,R} \, \text{Cov}(\hat{\alpha}_i, \hat{\beta}_i)$.
Averaging over $i \in \mathcal{C}_1$ establishes the first line of (4).

\paragraph*{Case 2: $i \in \mathcal{C}_2$ (True DGP without intercept).} 
Here, Forecaster 2 correctly specifies the model by excluding the intercept. Forecaster 1, on the contrary, includes an unnecessary intercept term, which leads to overparameterization.
The forecast errors are:
\begin{align*}
\widehat{Y}^{(1)}_{i,R+1} - Y_{i,R+1} &= \hat{\alpha}_i + (\hat{\beta}_i - \beta_i) X_{i,R} + U_{i,R+1}, \\
\widehat{Y}^{(2)}_{i,R+1} - Y_{i,R+1} &= (\tilde{\beta}_i - \beta_i) X_{i,R} + U_{i,R+1}.
\end{align*}
Again, we compute the expected squared forecast errors under each model. For Forecaster 1, who estimates both an intercept and slope, we have
\begin{align*}
\mathbb{E}
\left[
\left(
\widehat{Y}^{(1)}_{i,R+1}
-
Y_{i,R+1}
\right)^2
\right]
&=
\mathbb{V}(\hat{\alpha}_i)
+
\mathbb{B}(\hat{\alpha}_i)^2
\\
&\quad
+
X_{i,R}^2
\left[
\mathbb{V}(\hat{\beta}_i)
+
\mathbb{B}(\hat{\beta}_i)^2
\right]
\\
&\quad
+
2X_{i,R}
\operatorname{Cov}(\hat{\alpha}_i,\hat{\beta}_i)
+
\sigma^2 .
\end{align*}

Now, we turn to Forecaster 2, which correctly omits the intercept. The expected forecast loss is:
\[
\mathbb{E}[(\widehat{Y}^{(2)}_{i,R+1} - Y_{i,R+1})^2] 
= \mathbb{V}(\tilde{\beta}_i) X_{i,R}^2 + \mathbb{B}(\tilde{\beta}_i)^2 X_{i,R}^2 + \sigma^2.
\]
Subtracting the two, we obtain the loss differential:
\[
\mathbb{E}[(\widehat{Y}^{(1)}_{i,R+1} - Y_{i,R+1})^2] 
- \mathbb{E}[(\widehat{Y}^{(2)}_{i,R+1} - Y_{i,R+1})^2] 
= \mathbb{V}(\hat{\alpha}_i) + \mathbb{B}(\hat{\alpha}_i)^2 + \Delta_i,
\]
where $\Delta_i$ is the same as previously defined.
Averaging over $i \in \mathcal{C}_2$ yields the second line of (4), completing the derivation.

\subsection{Details on Example 2 of the main paper}

We will now show that, under the setup described in Example 2 and standard regularity conditions, the cluster-level expected loss differential is
\begin{align}
\frac{1}{|\mathcal{C}_k|}&
\sum_{i \in \mathcal{C}_k}
\left\{
\mathbb{E}
\left[
(\widehat{Y}^{\mathrm{pooled}}_{i,R+1}-Y_{i,R+1})^2
\right]
-
\mathbb{E}
\left[
(\widehat{Y}^{\mathrm{het}}_{i,R+1}-Y_{i,R+1})^2
\right]
\right\}
\notag\\
&=
\left[
\mathbb{E}(\hat{\beta})-\beta_k
\right]'
\Sigma_X
\left[
\mathbb{E}(\hat{\beta})-\beta_k
\right] +
\mathrm{tr}
\left\{
\left[
\mathbb{V}(\hat{\beta})
-
\overline{\mathbb{V}(\hat{\beta})}
\right]
\Sigma_X
\right\}.
\label{eq:loss_diff_ex2}
\end{align}
where $\Sigma_X = |\mathcal{C}_k|^{-1} \sum_{i \in \mathcal{C}_k} X_{i,R} X_{i,R}'$ is the empirical second moment matrix of the regressors in cluster $\mathcal{C}_k$, and $\overline{\mathbb{V}(\hat{\beta})} = |\mathcal{C}_k|^{-1} \sum_{i \in \mathcal{C}_k} \mathbb{V}(\hat{\beta}_i)$ is the average variance of the unit-specific estimators.

The forecast error under pooled estimation is
\[
\widehat{Y}^{\text{pooled}}_{i,R+1} - Y_{i,R+1} = (\hat{\beta} - \beta_{k_i})' X_{i,R} - U_{i,R+1}.
\]
Squaring and taking expectation:
\begin{align*}
\mathbb{E}[(\widehat{Y}^{\text{pooled}}_{i,R+1} - Y_{i,R+1})^2] 
&= \mathbb{E}\{[(\hat{\beta} - \beta_{k_i}]' X_{i,R})^2 \} + \mathbb{E}(U_{i,R+1}^2) \\
&= \mathbb{E}[(\hat{\beta} - \beta_{k_i})' X_{i,R} X_{i,R}' (\hat{\beta} - \beta_{k_i})] + \sigma^2.
\end{align*}
Using the bias–variance decomposition:
\[
\mathbb{E}[(\hat{\beta} - \beta_{k_i})(\hat{\beta} - \beta_{k_i})'] = \mathbb{V}(\hat{\beta}) + [\mathbb{E}(\hat{\beta}) - \beta_{k_i}][\mathbb{E}(\hat{\beta}) - \beta_{k_i}]',
\]
we obtain
\[
\mathbb{E}[(\widehat{Y}^{\text{pooled}}_{i,R+1} - Y_{i,R+1})^2] 
= [\mathbb{E}(\hat{\beta}) - \beta_{k_i}]' X_{i,R} X_{i,R}' [\mathbb{E}(\hat{\beta}) - \beta_{k_i}] 
+ \mathrm{tr}[\mathbb{V}(\hat{\beta}) X_{i,R} X_{i,R}'] + \sigma^2.
\]
For Forecaster 2, the forecast error is
\[
\widehat{Y}^{\text{het}}_{i,R+1} - Y_{i,R+1} = (\hat{\beta}_i - \beta_{k_i})' X_{i,R} - U_{i,R+1}.
\]
Assuming $\mathbb{E}(\hat{\beta}_i) = \beta_{k_i}$, the expected squared forecast error is
\[
\mathbb{E}[(\widehat{Y}^{\text{het}}_{i,R+1} - Y_{i,R+1})^2] 
= \mathrm{tr}[\mathbb{V}(\hat{\beta}_i) X_{i,R} X_{i,R}'] + \sigma^2.
\]
Taking the difference yields
\begin{align*}
\mathbb{E}[(\widehat{Y}^{\text{pooled}}_{i,R+1} - Y_{i,R+1})^2] 
- \mathbb{E}[(\widehat{Y}^{\text{het}}_{i,R+1} - Y_{i,R+1})^2] 
= \; & [\mathbb{E}(\hat{\beta}) - \beta_{k_i}]' X_{i,R} X_{i,R}' [\mathbb{E}(\hat{\beta}) - \beta_{k_i}] \\
&+ \mathrm{tr} \{ [\mathbb{V}(\hat{\beta}) - \mathbb{V}(\hat{\beta}_i)] X_{i,R} X_{i,R}'\}.
\end{align*}
Letting
%\[
$\Sigma_X = \frac{1}{|\mathcal{C}_k|} \sum_{i \in \mathcal{C}_k} X_{i,R} X_{i,R}'$ and 
$\overline{\mathbb{V}(\hat{\beta}_i)} = \frac{1}{|\mathcal{C}_k|} \sum_{i \in \mathcal{C}_k} \mathbb{V}(\hat{\beta}_i)$,
%\]
we have
\begin{equation*}
\begin{split}
\frac{1}{|\mathcal{C}_k|} &\sum_{i \in \mathcal{C}_k} 
\{ \mathbb{E}[(\widehat{Y}^{\text{pooled}}_{i,R+1} - Y_{i,R+1})^2] 
- \mathbb{E}[(\widehat{Y}^{\text{het}}_{i,R+1} - Y_{i,R+1})^2] \} \\
&= [\mathbb{E}(\hat{\beta}) - \beta_k]' \Sigma_X [\mathbb{E}(\hat{\beta}) - \beta_k] 
+ \mathrm{tr}\{ [\mathbb{V}(\hat{\beta}) - \overline{\mathbb{V}(\hat{\beta}_i)}] \Sigma_X \},
\end{split}
\end{equation*}
noting that $\beta_{k_i} = \beta_k$ for all $i \in \mathcal{C}_k$, which establishes the claimed result.

\section{Split-sample Test Statistic}\label{sec:split}

In the main text, the selective conditional inference approach was adopted to condition on the estimated cluster memberships.
An alternative and more straightforward method is sample splitting in the time dimension.
The current section develops a testing procedure similar to the homogeneity tests developed by \cite{patton23}.

Let $\mathcal{S}_1$ and $\mathcal{S}_2$ be two mutually exclusive but not necessarily exhaustive subsets of $\mathcal{S} = \{ 1,\dots,T \}$ given by $\mathcal{S}_1 = \{1,2,\dots, \lfloor \gamma \cdot T \rfloor\}$ and $\mathcal{S}_2 = \{\lfloor \gamma \cdot T \rfloor+1+l,\lfloor \gamma \cdot T \rfloor+2+l,\dots,T\}$
where $l \geq 1$ is an integer which ensures independence between the two subsets and $\gamma \in (0,1)$ is the proportion of the time series observation in the training set.
$\gamma$ is typically chosen to satisfy $\gamma < 0.5$ because the Panel Kmeans estimator of the cluster membership is super-consistent \citep{bonhomme15a} whereas the power of the test statistics crucially depend on a large number of time series observations in the test set.

Let $\widehat{\mathcal{C}}_{\mathcal{S}_1}$ be the partition of the panel units obtained from the Panel Kmeans estimator given in (8) using the sample of $N$ cross-sectional units and the training set $\mathcal{S}_1$.
We define $\hat{\theta}_{\mathcal{S}_2}(\widehat{\mathcal{C}}_{\mathcal{S}_1})= [\hat{\theta}'_{1,\mathcal{S}_2}(\widehat{\mathcal{C}}_{\mathcal{S}_1}), \ldots ,\hat{\theta}'_{K,\mathcal{S}_2}(\widehat{\mathcal{C}}_{\mathcal{S}_1})]$, and $\hat{\theta}_{k,\mathcal{S}_2}(\widehat{\mathcal{C}}_{\mathcal{S}_1}) = |\mathcal{S}_2|^{-1} \sum_{t \in \mathcal{S}_2} \bar{Z}_{k,t}(\widehat{\mathcal{C}}_{\mathcal{S}_1})$, $\bar{Z}_{k,t}(\widehat{\mathcal{C}}_{\mathcal{S}_1}) = |\widehat{\mathcal{C}}_{k,\mathcal{S}_1}|^{-1} \sum_{i \in \widehat{\mathcal{C}}_{k,\mathcal{S}_1}}^N Z_{it}$.
A Split Sample test statistic for $\mathcal{H}_0$ is
\begin{equation}\label{eq:wald_unknown_con}
W_{SS}(\widehat{\mathcal{C}}_{\mathcal{S}_1})
=
\frac{B-KP+1}{KPB}
|\mathcal{S}_2|
\hat{\theta}'_{\mathcal{S}_2}
(\widehat{\mathcal{C}}_{\mathcal{S}_1})
\widehat{\Omega}^{-1}_{\mathcal{S}_2}
(\widehat{\mathcal{C}}_{\mathcal{S}_1})
\hat{\theta}_{\mathcal{S}_2}
(\widehat{\mathcal{C}}_{\mathcal{S}_1}) .
\end{equation}
where
\begin{align*}
\widehat{\Omega}_{\mathcal{S}_2}
(\widehat{\mathcal{C}}_{\mathcal{S}_1})
&=
\frac{1}{B}
\sum_{j=1}^B
\widehat{\Lambda}_{j}
(\widehat{\mathcal{C}}_{\mathcal{S}_1})
\widehat{\Lambda}_{j}'
(\widehat{\mathcal{C}}_{\mathcal{S}_1}),
\\
\widehat{\Lambda}_{j}
(\widehat{\mathcal{C}}_{\mathcal{S}_1})
&=
\sqrt{\frac{2}{|\mathcal{S}_2|}}
\sum_{t \in \mathcal{S}_2}
\left[
\bar{Z}_{t}
(\widehat{\mathcal{C}}_{\mathcal{S}_1})
-
\hat{\theta}_{\mathcal{S}_2}
(\widehat{\mathcal{C}}_{\mathcal{S}_1})
\right]
\cos
\left[
\pi j
\left(
\frac{t-1/2}{P}
\right)
\right].
\end{align*}

Let $\mathcal{E}_{t} = \sigma(\{ V_{is} \}^N_{i=1}, s \leq t)$ be the $\sigma$-algebra generated by the past and present of $V_{it}$.
The asymptotic properties of the Split Sample test crucially depend on the following assumption.
\begin{assumptionG}\normalfont\label{ass:ss_ind}
    $V_{it}$ is independent of all measurable-$\mathcal{E}_{t-l}$ random variables for some $l \geq 1$ and for all $t=1,\dots,T$, $i=1,\dots,N$.
\end{assumptionG}
According to Assumption \ref{ass:ss_ind}, time series dependence in the process $V_{it}$ is limited such that $V_{it}$ is independent of $V_{js}$ whenever $|t-s| \geq l$ for all $i$ and $j$.
This assumption is somewhat restrictive as it rules out many mixing processes for $V_{it}$.
We can now state the following result which is similar to Theorem 6 of \cite{patton23} with the differences we discuss in the remarks below.
\begin{theorem}\normalfont\label{theorem:ss}
Suppose that Assumptions G1-G3 and \ref{ass:ss_ind} hold.
Then, for $B$ fixed, $|\mathcal{S}_1|,|\mathcal{S}_2| \to \infty$ as $(T,N) \to \infty$, the following results hold.
\begin{enumerate}[label=(\alph*)]
    \item\label{theorem:ss:parta} Under $\mathcal{H}_0$, $W_{SS}(\widehat{\mathcal{C}}_{\mathcal{S}_1}) \overset{d}{\longrightarrow} \mathbb{F}_{KP,B-KP+1}$.
    \item\label{theorem:ss:partb} Suppose now that $K = K^0 \geq 2$. Under Assumptions G1-S2 and \ref{ass:ss_ind}, and if $\mathcal{H}_0$ fails, then, for any $C>0$, $\Pr[W_{SS}(\widehat{\mathcal{C}}_{\mathcal{S}_1})>C] \to 1$.
\end{enumerate}
\end{theorem}

\begin{proof}
The proof begins algebraically similar to the proof of Lemma G1 except that we will establish a CLT conditional on $\mathcal{C}_{\mathcal{S}_1} = \sigma(\{ Z_{it} \}_{i=1}^N, t \in \mathcal{S}_1)$.
First, we will show that each $P \times 1$ sub-vector of $\hat{\theta}_{\mathcal{S}_2}(\widehat{\mathcal{C}}_{\mathcal{S}_1})$ satisfies $\hat{\theta}_{k,\mathcal{S}_2}(\widehat{\mathcal{C}}_{\mathcal{S}_1}) = \theta^0_{k}(\widehat{\mathcal{C}}_{\mathcal{S}_1}) + o_p(1)$.
By Assumption \ref{ass:ss_ind}, we have 
\begin{equation}
\begin{split}
\mathbb{E} (\hat{\theta}_{k,\mathcal{S}_2}(\widehat{\mathcal{C}}_{\mathcal{S}_1}) - \theta^0_{k}(\widehat{\mathcal{C}}_{\mathcal{S}_1}) \;|\; \mathcal{C}_{\mathcal{S}_1}) 
&= \mathbb{E} \left( \frac{1}{|\widehat{\mathcal{C}}_k| |\mathcal{S}_2|} \sum_{i=1}^N \sum_{t \in \mathcal{S}_2} V_{it} \{ \hat{k}_{i,\mathcal{S}_1} = k \} \;\middle|\; \mathcal{C}_{\mathcal{S}_1} \right) \\
&= \frac{1}{|\widehat{\mathcal{C}}_k| |\mathcal{S}_2|} \sum_{i=1}^N \sum_{t \in \mathcal{S}_2} \mathbb{E} (V_{it} \;|\; \mathcal{C}_{\mathcal{S}_1}) \{ \hat{k}_{i,\mathcal{S}_1} = k \} = 0,
\end{split}
\end{equation}

For the conditional variance, we find
\begin{equation}
\begin{aligned}
&
\left\lVert
\mathbb{E}
\left[
\left(
\hat{\theta}_{k,\mathcal{S}_2}
(\widehat{\mathcal{C}}_{\mathcal{S}_1})
-
\theta^0_k
(\widehat{\mathcal{C}}_{\mathcal{S}_1})
\right)
\left(
\hat{\theta}_{k,\mathcal{S}_2}
(\widehat{\mathcal{C}}_{\mathcal{S}_1})
-
\theta^0_k
(\widehat{\mathcal{C}}_{\mathcal{S}_1})
\right)'
\,\Big|\,
\mathcal{C}_{\mathcal{S}_1}
\right]
\right\rVert
\\
&=
\left\lVert
\mathbb{E}
\left[
\frac{1}
{\left(
|\widehat{\mathcal{C}}_k|
|\mathcal{S}_2|
\right)^2}
\sum_{i,j=1}^N
\sum_{t,s \in \mathcal{S}_2}
V_{it}V'_{js}
\mathbf{1}
\left\{
\hat{k}_{i,\mathcal{S}_1}=k
\right\}
\mathbf{1}
\left\{
\hat{k}_{j,\mathcal{S}_1}=k
\right\}
\,\Big|\,
\mathcal{C}_{\mathcal{S}_1}
\right]
\right\rVert
\\
&\leq
\frac{1}
{|\widehat{\mathcal{C}}_k|^2|\mathcal{S}_2|}
\sum_{i,j=1}^N
\left\lVert
\frac{1}{|\mathcal{S}_2|}
\sum_{t,s \in \mathcal{S}_2}
\mathbb{E}
\left[
V_{it}V'_{js}
\,\Big|\,
\mathcal{C}_{\mathcal{S}_1}
\right]
\right\rVert
\\
&\quad
\times
\mathbf{1}
\left\{
\hat{k}_{i,\mathcal{S}_1}=k
\right\}
\mathbf{1}
\left\{
\hat{k}_{j,\mathcal{S}_1}=k
\right\}
\\
&\leq
\frac{1}
{|\widehat{\mathcal{C}}_k|^2|\mathcal{S}_2|}
\sum_{i,j=1}^N
\left\lVert
\frac{1}{|\mathcal{S}_2|}
\sum_{t,s \in \mathcal{S}_2}
\mathbb{E}
\left[
V_{it}V'_{js}
\,\Big|\,
\mathcal{C}_{\mathcal{S}_1}
\right]
\right\rVert
\\
&=
O_p
\left(
\frac{1}{\pi_k^2|\mathcal{S}_2|}
\right).
\end{aligned}
\end{equation}
by Assumptions G1 and G2 from which it follows that $\hat{\theta}_{k,\mathcal{S}_2}(\widehat{\mathcal{C}}_{\mathcal{S}_1}) = \theta^0_{k}(\widehat{\mathcal{C}}_{\mathcal{S}_1}) + o_p(1)$.
Now, by Assumption~G3 applied to the split sample $\mathcal{S}_2$, conditional on $\mathcal{C}_{\mathcal{S}_1}$ and under $\mathcal{H}_0$, as $|\mathcal{S}_1|,|\mathcal{S}_2| \to \infty$, $(T,N) \to \infty$ we have
\begin{equation*}
\begin{split}
\Sigma_{\mathcal{S}_2}(\widehat{\mathcal{C}}_{\mathcal{S}_1})^{-1/2} &[\hat{\theta}_{k,\mathcal{S}_2}(\widehat{\mathcal{C}}_{\mathcal{S}_1})- \theta^0_{k}(\widehat{\mathcal{C}}_{\mathcal{S}_1})] \\
&= \Sigma_{\mathcal{S}_2}(\widehat{\mathcal{C}}_{\mathcal{S}_1})^{-1/2} |\mathcal{S}_2|^{-1/2} \sum_{t \in \mathcal{S}_2} \bar{V}_{t}(\widehat{\mathcal{C}}_{\mathcal{S}_1}) \overset{d}{\longrightarrow} \mathcal{N}(0,I_P),
\end{split}
\end{equation*}
where $\Sigma_{\mathcal{S}_2}(\widehat{\mathcal{C}}_{\mathcal{S}_1})$ is the long-run contrast covariance estimated over $\mathcal{S}_2$.
Part~(a) then follows from Theorem 1 of \cite{sun13}, noting that $\widehat{\Sigma}_{\mathcal{S}_2}(\widehat{\mathcal{C}}_{\mathcal{S}_1}) - \Sigma_{\mathcal{S}_2}(\widehat{\mathcal{C}}_{\mathcal{S}_1}) = o_p(1)$ conditional on $\mathcal{C}_{\mathcal{S}_1}$.

For Part~(b), we first write
\begin{equation*}
\hat{\theta}_{\mathcal{S}_2}(\widehat{\mathcal{C}}_{\mathcal{S}_1}) - \theta^0
= [\hat{\theta}_{\mathcal{S}_2}(\widehat{\mathcal{C}}_{\mathcal{S}_1}) - \hat{\theta}_{\mathcal{S}_1}(\widehat{\mathcal{C}}_{\mathcal{S}_1})] + [\hat{\theta}_{\mathcal{S}_1}(\widehat{\mathcal{C}}_{\mathcal{S}_1}) - \theta^0]
= [\hat{\theta}_{\mathcal{S}_2}(\widehat{\mathcal{C}}_{\mathcal{S}_1}) - \hat{\theta}_{\mathcal{S}_1}(\widehat{\mathcal{C}}_{\mathcal{S}_1})] + o_p(1),
\end{equation*}
which follows from Lemma~2(a).
The first term satisfies, for each cluster $k$,
\begin{equation*}
\begin{aligned}
&
\hat{\theta}_{k,\mathcal{S}_2}
(\widehat{\mathcal{C}}_{\mathcal{S}_1})
-
\hat{\theta}_{k,\mathcal{S}_1}
(\widehat{\mathcal{C}}_{\mathcal{S}_1})
\\
&=
\frac{1}{|\mathcal{S}_2|}
\sum_{t \in \mathcal{S}_2}
\bar{V}_{k,t}
-
\frac{1}{|\mathcal{S}_1|}
\sum_{t \in \mathcal{S}_1}
\bar{V}_{k,t}
\\
&=
O_p
\left(
\frac{1}{\sqrt{|\mathcal{S}_2|}}
\right)
+
O_p
\left(
\frac{1}{\sqrt{|\mathcal{S}_1|}}
\right)
=
o_p(1).
\end{aligned}
\end{equation*}
where $\bar{V}_{k,t} = |\mathcal{C}_k|^{-1}\sum_{i\in\mathcal{C}_k}V_{it}$ and the $O_p$ rates follow from Assumption~G1.
This in turn gives
\[
\hat{\theta}'_{\mathcal{S}_2}(\widehat{\mathcal{C}}_{\mathcal{S}_1}) \widehat{\Sigma}^{-1}_{\mathcal{S}_2}(\widehat{\mathcal{C}}_{\mathcal{S}_1})\hat{\theta}_{\mathcal{S}_2}(\widehat{\mathcal{C}}_{\mathcal{S}_1}) \overset{p}{\longrightarrow} \theta^{0\prime}\Sigma^{-1}\theta^0 > 0,
\]
by Assumptions~G3 and~S1, from which it follows that $W_{SS}(\widehat{\mathcal{C}}_{\mathcal{S}_1})$ diverges w.p.a.\,1, completing the proof.
\end{proof}

The result above leads us to the following remarks.
First, the Split Sample test statistics rely on the selection of the two sub-samples $\mathcal{S}_1$ and $\mathcal{S}_2$ which can be arbitrary in practice.
Furthermore, since inference is based on a reduced sample size, the associated test statistics may have low power.
However, we note that the selective conditional inference approach has extra conditioning due to the nuisance parameters in the conditional distribution of interest.
Hence, the comparative power of the Split Sample statistics is an empirical question that we investigate with simulations.
Second, here, we apply a small sample correction contrary to the asymptotic tests of \cite{patton23}.
Third, our framework allows for strong CD which is ruled out by the authors.
Finally, their testing procedure focuses only on homogeneity of the panel whereas we test whether each cluster has zero mean.

\section{Details on the Empirical Application}\label{sec:supp_app}

This section provides detailed descriptions of the forecasting methods used in the empirical application of Section 6, together with a descriptive analysis of the resulting forecast loss differentials.

\subsection{Forecasting methods}

We compare the performance of five forecasting models that span linear and nonlinear approaches, with and without macroeconomic predictors.
All models are estimated separately for each exchange rate series using a recursive forecasting design with a fixed window of $r = 60$ months and a one-month-ahead forecast horizon.
Forecast accuracy is evaluated via a quadratic loss function.
For EPA tests, we use quadratic loss differentials relative to the AR(1) benchmark.
All the other details on the implementation of the tests correspond exactly to those of the Monte Carlo simulations.

We classify the five methods under consideration into two categories: data-poor and data-rich methods.
We now describe these methods.

\paragraph*{Data-poor methods} 
These methods are considered ``data-poor'' in the sense that they rely solely on the history of the dependent variable.
The two models we consider are described in what follows.

\begin{itemize}
\itemsep0em 
  \item \textbf{AR($p$) selected by BIC:} An autoregressive model with lag length $p$ selected via the Bayesian Information Criterion (BIC).

  \item \textbf{Elastic Net:} A linear penalized regression combining $\ell_1$ and $\ell_2$ penalties \citep{ZouHastie2005}, applied to the lags of the dependent variable. The method balances variable selection and shrinkage, mitigating overfitting in high dimensional settings. The penalty parameters are selected via 5-fold cross-validation, which is used to jointly determine both the overall regularization strength and the mixing parameter governing the weight between LASSO and Ridge penalties. The model is implemented using the \texttt{glmnet} package \citep{Friedman2010}.
  
  \item \textbf{XGBoost:} An ensemble of gradient-boosted decision trees applied to the lags of the target variable \citep{ChenGuestrin2016}. XGBoost captures nonlinearities and interaction effects by sequentially fitting trees to the residuals of prior iterations. Forecasts are generated using the past 6 lags of the target variable as features. The model is trained for 50 boosting rounds using default hyperparameters and the squared error loss. The model is implemented using the \texttt{xgboost} package \citep{ChenGuestrin2016}.
\end{itemize}

For all three models, we allow a maximum lag length of 6.
The AR($p$) selects the optimal lag within this range using BIC while Elastic Net allows for a more general model structure such that all consecutive lags do not necessarily appear in the model.
XGBoost further allows for nonlinearities in the relationship of the target and its lags.
These data-poor approaches provide useful baselines to assess the marginal value of more flexible, data-rich machine learning methods.

\paragraph*{Data-rich methods} 
These methods are considered ``data-rich'' as they exploit high dimensional information from a large set of macroeconomic predictors.
Unlike the data-poor models, which rely primarily on univariate dynamics, these methods are designed explicitly to extract predictive signals from complex interactions and nonlinearities in the covariate space.
Their flexibility makes them particularly well-suited in environments characterized by structural change, unknown functional forms, or unstable predictor relevance.

\begin{itemize}
\itemsep0em 
  \item \textbf{Support Vector Machine (SVM):} A kernel-based machine learning method applied to macroeconomic features. The SVM solves a regularized minimization problem that fits the data within a margin of tolerance \citep{SmolaScholkopf2004}. The implementation uses an $\varepsilon$-insensitive regression formulation with a radial basis function (RBF) kernel. The design matrix includes the first lag of the target variable and the contemporaneous values of the scaled macro predictors. Hyperparameters are selected via cross-validation. We use the \texttt{e1071} package to implement the support vector regression with a radial basis function kernel \citep{Meyer2001}.

  \item \textbf{Random Forest:} A nonparametric ensemble method based on bagged decision trees \citep{Breiman2001}. The model is trained using the lagged target variable and standardized macro predictors as features. Each tree is fit on a bootstrap sample of the training data with random feature selection at each split. The implementation uses the \texttt{randomForest} package \citep{LiawWiener2002} with default hyperparameters and no tuning. Forecasts are based on the most recent observation of macroeconomic predictors.
\end{itemize}

The use of default hyperparameters reflects a deliberate emphasis on simplicity and replicability.
While further tuning could improve the performance of certain methods, our approach is conservative and avoids complication by applying standard practices such as built-in bagging in Random Forests.
The resulting forecasts serve as a benchmark for the potential gains from machine learning with a large sample of macroeconomic features.
We note that we implemented several other methods such as the factor augmented regressions following the targeted predictors methodology of \cite{bai2008forecasting} as well as the macro-feature-augmented versions of Elastic Net and XGBoost which resulted in objectively worse performance than the methods we report here.
Hence, to save space, we do not report these methods.

\subsection{Descriptive analysis of loss differentials}

Table \ref{tab:summary_stats} presents summary statistics of forecast loss differentials relative to the AR(1) benchmark.
Negative values indicate improved forecast performance relative to AR(1).
Among the methods considered, XGBoost shows the largest average improvement, with a mean loss differential of $-0.54$, and a substantial left-skew in its distribution (first quartile = $-0.52$).
This suggests that it often delivers strong gains in cases where AR(1) performs poorly.
AR($p$) also yields a negative mean ($-0.34$), but with very high variance (standard deviation = 19.98), indicating occasional large outliers likely due to overfitting in small samples.

\begin{table}[ht]
  \centering
    \caption{Summary statistics of loss differentials of different methods vs. AR(1)}
    \label{tab:summary_stats}%

\begin{threeparttable}
    \renewcommand{\arraystretch}{0.9}
    \begin{tabular}{lccccc}
    \toprule
    Variable & Mean  & Std. Dev.    & 1st Quartile    & Median & 3rd Quartile \\
    \midrule
    AR($p$) & -0.34 & 19.98 & -0.08 & 0.00  & 0.05 \\
    Elastic Net & 0.03  & 1.40  & -0.06 & 0.00  & 0.09 \\
    XGBoost & -0.54 & 3.93  & -0.52 & -0.03 & 0.08 \\
    SVM & 0.00  & 1.47  & -0.12 & 0.00  & 0.08 \\
    Random Forest & -0.07 & 1.62  & -0.18 & 0.00  & 0.09 \\
    \bottomrule
    \end{tabular}%
    \begin{tablenotes}
      \footnotesize
      \item Note: The results are based on 31178 observations ($T = 238$, $N = 131$) on loss differentials.
      A negative mean signifies an overall improvement over AR(1) forecasts.
    \end{tablenotes}
  \end{threeparttable}
\end{table}%

In contrast, the remaining methods, namely Elastic Net, SVM, and Random Forest, have mean loss differentials close to zero, but all display modest left tails.
For instance, Elastic Net has a first quartile of $-0.06$ and third quartile of $0.09$, indicating small but frequent gains over AR(1) with little risk of large deterioration.
Random Forest shows similar patterns.
Taken together, these statistics suggest that flexible methods like XGBoost can offer substantial upside at the cost of some variability, while regularized linear models such as Elastic Net deliver more stable but smaller improvements.

\section{Calculation of the Truncation Set $\mathcal{T}$}\label{sec:calc_of_p}

For convenience, we restate the expression for the truncation set $\mathcal{T}$:
\begin{equation*}
\mathcal{T} = \left\lbrace \phi \in \mathbb{R}_{\geq 0} : \bigcap_{m=1}^M \bigcap_{i=1}^N  k^{(m)}_i[z(\phi)] = k^{(m)}_i(z) \right\rbrace.
\end{equation*}
According to the second step (assignment) of Algorithm 1, the equality inside the braces holds if and only if the cluster center which is closest to $z_{it}$ in total over $t$, coincides with the cluster center of the previous iteration that is closest to $[z(\phi)]_{it}$ in total over $t$, for all $i=1,\dots,N$.
Using Proposition 2 of \cite{chen23} we can then write:
\begin{align}
\mathcal{T} = \bigcap_{m=1}^M \bigcap_{i=1}^N \bigcap_{k=1}^K 
\Bigg\lbrace \phi \in \mathbb{R}_{\geq 0} : 
& \frac{1}{T} \sum_{t=1}^T 
\left\| [z(\phi)]_{it} - 
\frac{1}{T} \sum_{t=1}^T \sum_{j=1}^N w_j^{(m-1)}(k^{(m)}_i(z)) [z(\phi)]_{jt} 
\right\|^2 \notag \\
& \leq 
\sum_{t=1}^T \left\| [z(\phi)]_{it} - 
\frac{1}{T} \sum_{t=1}^T \sum_{j=1}^N w_j^{(m-1)}(k) [z(\phi)]_{jt} 
\right\|^2 
\Bigg\rbrace
\label{eq:s_decomposed}
\end{align}
where $w^{(m)}_i(k) = \mathbf{1} \left\lbrace k^{(m)}_i(z) = k \right\rbrace / \sum_{j=1}^N \mathbf{1} \left\lbrace k^{(m)}_j(z) = k \right\rbrace$.
By Equation (32) of the main text, we see that
\begin{equation}\label{eq:perturbed2_byobs}
[z(\phi)]_{it} = z_{it} - \hat{\delta}_{k,g,i}\frac{\lVert z' \hat{\nu}_{k,g} \rVert}{\lVert \hat{\nu}_{k,g} \rVert^2} \mathrm{dir}(z' \hat{\nu}_{k,g}) + \left(\frac{\lVert z' \hat{\nu}_{k,g} \rVert }{\lVert\widehat{\Sigma}^{-1/2}_{k,g}(\mathcal{C})z' \hat{\nu}_{k,g} \rVert} \frac{\hat{\delta}_{k,g,i}}{ \sqrt{T} \lVert \hat{\nu}_{k,g} \rVert^2}  \phi \right) \mathrm{dir}(z' \hat{\nu}_{k,g}).
\end{equation}
Straightforward calculations similar to the proofs of Lemmas 15 of \cite{chen23} give
\begin{equation*}
\left\Vert [z(\phi)]_{it} - \frac{1}{T} \sum_{t=1}^T \sum_{j=1}^N w_j^{(m-1)} (k) [z(\phi)]_{jt} \right\Vert^2 = \tilde{a}_{ij} \phi^2 + \tilde{b}_{ijt} \phi + \tilde{c}_{ijt},
\end{equation*}
where
\begingroup
\allowdisplaybreaks
\begin{align*}
\tilde{a}_{ij}
&=
\left(
\frac{\lVert z' \hat{\nu}_{k,g} \rVert}
{\lVert\widehat{\Sigma}^{-1/2}_{k,g}(\mathcal{C})z' \hat{\nu}_{k,g} \rVert}
\right)^2
\left(
\frac{
\hat{\delta}_{k,g,i}
-
\sum_{j=1}^N w_j^{(m-1)}(k)\hat{\delta}_{k,g,j}
}
{\sqrt{T}\lVert \hat{\nu}_{k,g} \rVert^2}
\right)^2,
\\
\tilde{b}_{ijt}
&=
2
\left(
\frac{\lVert z' \hat{\nu}_{k,g} \rVert}
{\lVert \widehat{\Sigma}^{-1/2}_{k,g}(\mathcal{C})z' \hat{\nu}_{k,g} \rVert}
\right)
\\
&\times
\biggl\{
\frac{
\hat{\delta}_{k,g,i}
-
\sum_{j=1}^N w_j^{(m-1)}(k)\hat{\delta}_{k,g,j}
}
{\sqrt{T}\lVert \hat{\nu}_{k,g} \rVert^2}
\biggl\langle
z_{it}
-
\frac{1}{T}
\sum_{t=1}^T
\sum_{j=1}^N
w_j^{(m-1)}(k)z_{jt},
\mathrm{dir}(z'\hat{\nu}_{k,g})
\biggr\rangle
\\
&\hspace{1.5em}
-
\frac{
\left(
\hat{\delta}_{k,g,i}
-
\sum_{j=1}^N w_j^{(m-1)}(k)\hat{\delta}_{k,g,j}
\right)^2
}
{\sqrt{T}\lVert \hat{\nu}_{k,g} \rVert^4}
\lVert z'\hat{\nu}_{k,g} \rVert
\biggr\},
\\
\tilde{c}_{ijt}
&=
\biggl\lVert
z_{it}
-
\frac{1}{T}
\sum_{t=1}^T
\sum_{j=1}^N
w_j^{(m-1)}(k)z_{jt}
\\
&\hspace{1.5em}
-
\left(
\hat{\delta}_{k,g,i}
-
\sum_{j=1}^N w_j^{(m-1)}(k)\hat{\delta}_{k,g,j}
\right)
\frac{z'\hat{\nu}_{k,g}}
{\lVert \hat{\nu}_{k,g} \rVert^2}
\biggr\rVert^2.
\end{align*}
\endgroup
%Notice that $\theta^{(m-1)}_k(z) = T^{-1} \sum_{t=1}^T \sum_{j=1}^N w_j^{(m-1)} (k) z_{jt}$.
These in turn show that the truncation set $\mathcal{T}$ can be analytically calculated as the inequalities defined in the two components of \eqref{eq:s_decomposed} are all quadratic in $\phi$.

\section{Equivalence of Panel Kmeans and Kmeans on Time Averages}\label{sec:proof_equiv}

\begin{lemma}\normalfont\label{lemma:equiv}
Let $\bar{Z}_i = T^{-1}\sum_{t=1}^T Z_{it} \in \mathbb{R}^P$ denote the time average of unit $i$'s observations, and let $\bar{Z} = (\bar{Z}_1',\dots,\bar{Z}_N')' \in \mathbb{R}^{NP}$ be the stacked vector of time averages.
At each iteration $m$ of Algorithm~1, the assignment rule satisfies
\[
k_i^{(m+1)}(Z)
= \argmin_{k \in \{1,\dots,K\}} \sum_{t=1}^T \|Z_{it} - \theta_k^{(m)}\|^2
= \argmin_{k \in \{1,\dots,K\}} \|\bar{Z}_i - \theta_k^{(m)}\|^2.
\]
Consequently, the entire algorithmic path $\{k_i^{(m)}(Z)\}_{m=1,\dots,M,\,i=1,\dots,N}$, the final partition $\widehat{\mathcal{C}} = \mathcal{C}(Z)$, and the truncation set $\mathcal{T}$ defined in~(17) depend on the data $Z$ only through the time-averaged panel $\bar{Z}$.
\end{lemma}

\begin{proof}
Fix any iteration $m$ and any unit $i$.
We expand the squared distance between the trajectory of unit $i$ and a candidate cluster center $\theta_k^{(m)}$:
\begin{align*}
\sum_{t=1}^T \|Z_{it} - \theta_k^{(m)}\|^2
&= \sum_{t=1}^T \|Z_{it} - \bar{Z}_i + \bar{Z}_i - \theta_k^{(m)}\|^2 \\
&= \sum_{t=1}^T \|Z_{it} - \bar{Z}_i\|^2
  + 2\sum_{t=1}^T (Z_{it} - \bar{Z}_i)'(\bar{Z}_i - \theta_k^{(m)})
  + T\|\bar{Z}_i - \theta_k^{(m)}\|^2.
\end{align*}
The middle term vanishes because $\sum_{t=1}^T(Z_{it} - \bar{Z}_i) = 0$ by definition of the time average.
The first term $\sum_{t=1}^T \|Z_{it} - \bar{Z}_i\|^2$ is the within-unit sum of squared deviations from the time mean; it does not depend on $k$.
Therefore,
\[
\argmin_{k \in \{1,\dots,K\}} \sum_{t=1}^T \|Z_{it} - \theta_k^{(m)}\|^2
= \argmin_{k \in \{1,\dots,K\}} T\|\bar{Z}_i - \theta_k^{(m)}\|^2
= \argmin_{k \in \{1,\dots,K\}} \|\bar{Z}_i - \theta_k^{(m)}\|^2,
\]
which is exactly the assignment rule of standard Kmeans applied to the time-averaged data $\bar{Z}_i$ with centers $\theta_k^{(m)}$.

Similarly, the center update satisfies
\[
\theta_k^{(m+1)} = \frac{1}{|\mathcal{C}_k^{(m+1)}|T} \sum_{i \in \mathcal{C}_k^{(m+1)}} \sum_{t=1}^T Z_{it}
= \frac{1}{|\mathcal{C}_k^{(m+1)}|} \sum_{i \in \mathcal{C}_k^{(m+1)}} \bar{Z}_i,
\]
which is the centroid of the time averages in cluster $k$.
Since both the assignment step and the update step depend on $Z$ only through $\bar{Z}$, and the initialization $\theta_k^{(0)}$ is fixed independently of the data, by induction the entire path $\{k_i^{(m)}(Z)\}$ depends on $Z$ only through $\bar{Z}$.
The truncation set $\mathcal{T} = \{\phi \geq 0 : k_i^{(m)}[z(\phi)] = k_i^{(m)}(z)\ \forall m,i\}$ therefore also depends on $z$ only through $\bar{z}$.
\end{proof}

\begin{remark}\normalfont
Lemma~\ref{lemma:equiv} has two important consequences.
First, it reduces the dimensionality of the selective inference problem from $NTP$ (the dimension of the full panel $Z$) to $NP$ (the dimension of $\bar{Z}$).
The Gaussian approximation in Assumption~G3 operates on $\sqrt{T} \bar{V} \in \mathbb{R}^{NP}$, which is exactly the right object.
Second, it shows that the perturbation path $z(\phi)$ defined in~(18), when evaluated along the direction $\nu_{k,g}/(\sqrt{T}\|\nu_{k,g}\|^2)$, moves only the time-averaged contrast $\bar{z}'\delta^N_{k,g}$ while leaving $\Pi^N_{k,g}\bar{z}$ fixed.
This is the key structural property exploited in Lemma~13.
\end{remark}

\section{Positive Mass of the Truncation Set}\label{sec:proof_nonempty}

\begin{lemma}\normalfont\label{lemma:nonempty}
Suppose Assumption~G3 holds and that the initialization $\theta_k^{(0)}$ of Algorithm~1 is deterministic.
Then:
\begin{enumerate}[label=(\alph*)]
    \item\label{lemma:nonempty:contains} $d_{k,g}(\widehat{\mathcal{C}}) \in \mathcal{T}$ with probability one, for every realization of the data.
    \item\label{lemma:nonempty:mass} $\int_{\mathcal{T}} dF_{\chi_P}(\phi) > 0$ with probability approaching one as $(T,N) \to \infty$.
\end{enumerate}
\end{lemma}

\begin{proof}
\noindent\textit{Part~\ref{lemma:nonempty:contains}.}
By inspection of the perturbation path in~(18), setting $\phi = d_{k,g}(\widehat{\mathcal{C}})$ gives $z(\phi) = \Pi_{k,g} z + d_{k,g} \cdot [\hat{\nu}_{k,g}/(\sqrt{T}\|\hat{\nu}_{k,g}\|^2)]\{\mathrm{dir}[\cdots]\}'\widehat{S}^{1/2}_{k,g} = z$.
Since the observed realization $z$ produces the realized clustering path $\{k_i^{(m)}(z)\}$ by definition, $d_{k,g}(\widehat{\mathcal{C}}) \in \mathcal{T}$ with probability one.

\medskip
\noindent\textit{Part~\ref{lemma:nonempty:mass}.}
By Lemma~\ref{lemma:equiv}, $\mathcal{T}$ depends on $z$ only through $\bar{z}$.
The boundary of $\mathcal{T}$ consists of values of $\phi$ at which some unit $i$ is indifferent between two clusters, i.e., for some $m$, $i$, and $k \neq g$,
\[
\|\bar{z}_i(\phi) - \theta_k^{(m)}\|^2 = \|\bar{z}_i(\phi) - \theta_g^{(m)}\|^2.
\]
Since $\bar{z}_i(\phi)$ is a strictly monotone linear function of $\phi$, this equality defines at most one value of $\phi$ per triple $(m,i,(k,g))$.
Altogether there are finitely many such boundary points.
The observed statistic $d_{k,g}(\widehat{\mathcal{C}})$ lies at a boundary point if and only if the time-averaged data $\bar{Z}$ satisfies one of these finitely many equalities.

Under Assumption~G3, $\sqrt{T} \bar{V}$ converges in distribution to $G_{NT} \sim \mathcal{N}(0,\Xi_{NT})$, which is an absolutely continuous distribution with full support on $\mathbb{R}^{NP}$ (by the eigenvalue lower bound on $\Xi_{NT}$).
Each of the finitely many tie events $\{\|\bar{Z}_i - \theta_k^{(m)}\|^2 = \|\bar{Z}_i - \theta_g^{(m)}\|^2\}$ is a quadratic hypersurface in $\bar{V}$, hence a set of Lebesgue measure zero.
By the union bound over finitely many such events, $\Pr(\text{no ties at }\phi = d_{k,g}) \to 1$.

On the no-tie event, $d_{k,g}(\widehat{\mathcal{C}})$ is an interior point of $\mathcal{T}$, so there exists $\varepsilon > 0$ such that $(d_{k,g} - \varepsilon, d_{k,g} + \varepsilon) \cap \mathbb{R}_{\geq 0} \subset \mathcal{T}$.
Since $\chi_P$ is an absolutely continuous distribution, it assigns strictly positive mass to any open interval in $\mathbb{R}_{> 0}$, giving $\int_\mathcal{T} dF_{\chi_P}(\phi) > 0$ on this event.
\end{proof}

\begin{remark}\normalfont
Part~\ref{lemma:nonempty:contains} is a deterministic consequence of the perturbation path construction and requires no distributional assumptions.
Part~\ref{lemma:nonempty:mass} is the probabilistic claim that ensures $F_{\chi_P}(\cdot;\mathcal{T})$ is a well-defined conditional distribution: the normalizing constant $\int_\mathcal{T} dF_{\chi_P}(\phi)$ is strictly positive with probability approaching one.
This validates the use of $p[d_{k,g}] = 1 - F_{\chi_P}(d_{k,g};\mathcal{T})$ as a finite quantity.
The absolute continuity invoked here follows from Assumption~G3 --- not from an explicit normality assumption on $V_{it}$ --- which is why the new Assumption~G3 is strictly necessary for this result.
\end{remark}

\section{Primitive Conditions for Assumption~G3}\label{sec:hdclt_primitives}

Assumption~G3 is a high-level condition.
The following primitive conditions, based on geometric $\alpha$-mixing and sub-Weibull tails, are sufficient for it to hold, by Theorem~2.1 of \cite{chang2024central}.

Let $X_{t,h}$ denote the $h$-th coordinate of $X_t = (V_{1t}',\dots,V_{Nt}')' \in \mathbb{R}^{NP}$, so that $\sqrt{T} \bar{V} = T^{-1/2}\sum_{t=1}^T X_t$.
For $a > 0$, recall the sub-Weibull Orlicz norm $\|X\|_{\psi_a} = \inf\{c > 0: \mathbb{E}[\exp(|X|^a/c^a)] \leq 2\}$.

\begin{enumerate}
    \item\label{g3p1}\textbf{(Sub-Weibull tails)} There exist constants $\gamma \geq 1$ and $b_T \geq 1$ such that
    \[
    \max_{h \in [NP]} \sup_{t \geq 1} \|X_{t,h}\|_{\psi_\gamma} \leq b_T.
    \]
    The sequence $b_T$ is allowed to grow with $T$, capturing the growth of the cross-section $N$.

    \item\label{g3p2}\textbf{(Geometric $\alpha$-mixing)} The process $\{X_t\}_{t \in \mathbb{Z}}$ is geometrically $\alpha$-mixing: there exist constants $c_1 > 0$ and $\beta > 0$ such that
    \[
    \alpha_{X}(k) \leq \exp(-c_1 k^\beta) \quad \text{for all } k \geq 1.
    \]

    \item\label{g3p3}\textbf{(Rate condition)} As $(T,N) \to \infty$,
    \[
    b_T^{2/3} \frac{(\log NP)^{(1+2\beta)/(3\beta)}}{T^{1/9}} \to 0
    \quad \text{and} \quad
    b_T \frac{(\log NP)^{7/6}}{T^{1/9}} \to 0.
    \]
    In the leading case $b_T = O(1)$, these reduce to $(\log N)^c = o(T^{1/9})$ for an explicit constant $c$ depending only on $\beta$.

    \item\label{g3p4}\textbf{(Non-degeneracy)} There exists $c_0 > 0$ such that
    \[
    \min_{h \in [NP]} \mathrm{Var}\!\left(T^{-1/2} \sum_{t=1}^T X_{t,h}\right) \geq c_0.
    \]
\end{enumerate}

\begin{proposition}\normalfont\label{prop:hdclt_sufficient}
Suppose conditions~\ref{g3p1}--\ref{g3p4} hold.
Then Assumption~G3 holds with approximation rate
\[
\rho_{NT} \leq C\left[\frac{b_T^{2/3}(\log NP)^{(1+2\beta)/(3\beta)}}{T^{1/9}} + \frac{b_T(\log NP)^{7/6}}{T^{1/9}}\right] \to 0.
\]
\end{proposition}

\begin{proof}
Apply Theorem~2.1 of \cite{chang2024central} to $S_T = T^{-1/2}\sum_{t=1}^T X_t \in \mathbb{R}^{NP}$, noting that \ref{g3p1}--\ref{g3p4} match their tail, mixing, rate, and non-degeneracy conditions exactly with $p = NP$ in their notation.
\end{proof}

\begin{remark}\normalfont
Conditions~\ref{g3p1}--\ref{g3p4} are substantially weaker than the original Assumption~G3 in the sense that: (i) they allow $N \to \infty$ jointly with $T$, whereas a unit-by-unit mixing CLT requires $N$ to be fixed or to grow very slowly; (ii) they impose no restriction on the cross-sectional dependence of $V_{it}$ beyond what is captured by the joint mixing coefficient $\alpha_{X}(k)$, which can accommodate strong factor-driven dependence; and (iii) the sub-Weibull condition \ref{g3p1} is weaker than the finite-$\zeta$-moment condition in the original G3.

The rate condition~\ref{g3p3} imposes an upper bound on how fast $N$ can grow relative to $T$.
In the simplest case $b_T = O(1)$ and $\beta = 1$ (geometric mixing at exponential rate), the condition reduces to $(\log N)^{7/2} = o(T^{1/9})$, which permits $N = \exp(T^{1/63})$.
For the purposes of this paper, with $N$ and $T$ of comparable magnitude in typical panel applications, this is an unrestrictive requirement.

Assumption S3 remains necessary and is not implied by~\ref{g3p1}--\ref{g3p4}: it is used for the separate purpose of establishing super-consistent cluster membership recovery under the alternative, which requires the sharper exponential tail bound in the cluster consistency argument of Lemma~2.
\end{remark}

\bibliography{ref}

@article{pesaran26,
author = {M. Hashem Pesaran AND Andreas Pick AND Allan Timmermann},
title = {Forecasting with Panel Data: Estimation Uncertainty versus Parameter Heterogeneity},
journal = {Quantitative Economics},
volume = {17},
number = {2},
pages = {342-393},
keywords = {Forecasting, panel data, heterogeneity, pooled estimation, empirical Bayes, forecast combination},
doi = {https://doi.org/10.3982/QE2589},
abstract = {<p>We provide a comprehensive examination of the predictive accuracy of panel forecasting methods based on individual, pooling, fixed effects, and empirical Bayes estimation, and propose optimal weights for forecast combination schemes. We consider linear panel data models, allowing for weakly exogenous regressors and correlated heterogeneity. We quantify the gains from exploiting panel data and demonstrate how forecasting performance depends on the degree of parameter heterogeneity, whether such heterogeneity is correlated with the regressors, the goodness‐of‐fit of the model, and the dimensions of the data. Monte Carlo simulations and empirical applications to house prices and CPI inflation show that empirical Bayes and forecast combination methods perform best overall and rarely produce the least accurate forecasts for individual series.</p>},
year = {2026}
}

@article{chang2024central,
  title={Central limit theorems for high dimensional dependent data},
  author={Chang, Jinyuan and Chen, Xiaohui and Wu, Mingcong},
  journal={Bernoulli},
  volume={30},
  number={1},
  pages={712--742},
  year={2024},
  publisher={Bernoulli Society for Mathematical Statistics and Probability}
}

@article{li2022consistent,
  title={Consistent Selection of the Number of Groups in Panel Models via Cross-Validation},
  author={Li, Zhe and Zhu, Xuening and Zou, Changliang},
  journal={arXiv preprint arXiv:2209.05474v3},
  year={2025}
}

@article{markovic2017unifying,
  title={Unifying approach to selective inference with applications to cross-validation},
  author={Markovic, Jelena and Xia, Lucy and Taylor, Jonathan},
  journal={arXiv preprint arXiv:1703.06559},
  year={2017}
}

@misc{Chan2022fbi,
  author = {Chan, Bennie C. Y. and Ng, Serena and Bai, Jushan},
  title = {fbi: Factor-Based Imputation and FRED-MD/QD Data Set},
  year = {2023},
  howpublished = {\url{https://github.com/cykbennie/fbi}},
  note = {R package version 0.7.0}
}

@Manual{Meyer2001,
    title = {e1071: Misc Functions of the Department of Statistics,
      Probability Theory Group (Formerly: E1071), TU Wien},
    author = {David Meyer and Evgenia Dimitriadou and Kurt Hornik and
      Andreas Weingessel and Friedrich Leisch},
    year = {2024},
    note = {R package version 1.7-16}
}

@article{LiawWiener2002,
  title={Classification and regression by random{F}orest},
  author={Liaw, Andy and Wiener, Matthew},
  journal={R News},
  volume={2},
  number={3},
  pages={18--22},
  year={2002}
}

@article{Friedman2010,
  title={Regularization paths for generalized linear models via coordinate descent},
  author={Friedman, Jerome and Hastie, Trevor and Tibshirani, Robert},
  journal={Journal of Statistical Software},
  volume={33},
  number={1},
  pages={1--22},
  year={2010}
}

@article{SmolaScholkopf2004,
  title={A tutorial on support vector regression},
  author={Smola, Alex J. and Sch{\"o}lkopf, Bernhard},
  journal={Statistics and Computing},
  volume={14},
  pages={199--222},
  year={2004},
  publisher={Springer}
}

@article{Breiman2001,
  title={Random forests},
  author={Breiman, Leo},
  journal={Machine Learning},
  volume={45},
  number={1},
  pages={5--32},
  year={2001},
  publisher={Springer}
}

@article{InoueKilian2006,
  title={On the selection of forecasting models},
  author={Inoue, Atsushi and Kilian, Lutz},
  journal={Journal of Econometrics},
  volume={130},
  number={2},
  pages={273--306},
  year={2006},
  publisher={Elsevier}
}

@article{ZouHastie2005,
  title={Regularization and variable selection via the elastic net},
  author={Zou, Hui and Hastie, Trevor},
  journal={Journal of the Royal Statistical Society: Series B},
  volume={67},
  number={2},
  pages={301--320},
  year={2005},
  publisher={Wiley}
}

@inproceedings{ChenGuestrin2016,
  title={{XGB}oost: A scalable tree boosting system},
  author={Chen, Tianqi and Guestrin, Carlos},
  booktitle={Proceedings of the 22nd ACM SIGKDD International Conference on Knowledge Discovery and Data Mining},
  pages={785--794},
  year={2016},
  organization={ACM}
}

@article{AtheyImbens2019,
  title={Machine learning methods that economists should know about},
  author={Athey, Susan and Imbens, Guido W.},
  journal={Annual Review of Economics},
  volume={11},
  pages={685--725},
  year={2019},
  publisher={Annual Reviews}
}

@article{GoyalWelch2008,
  title={A comprehensive look at the empirical performance of equity premium prediction},
  author={Welch, Ivo and Goyal, Amit},
  journal={Review of Financial Studies},
  volume={21},
  number={4},
  pages={1455--1508},
  year={2008},
  publisher={Oxford University Press}
}

@article{Medeiros2021,
  title={Forecasting inflation in a data-rich environment: The benefits of machine learning methods},
  author={Medeiros, Marcelo C. and Vasconcelos, Gabriel F. R. and Veiga, \'{A}ureo de Paula and Zilberman, Eduardo},
  journal={Journal of Business \& Economic Statistics},
  volume={39},
  number={1},
  pages={98--119},
  year={2021},
  publisher={Taylor \& Francis}
}

@article{hoga2023testing,
  title={On testing equal conditional predictive ability under measurement error},
  author={Hoga, Yannick and Dimitriadis, Timo},
  journal={Journal of Business \& Economic Statistics},
  volume={41},
  number={2},
  pages={364--376},
  year={2023},
  publisher={Taylor \& Francis}
}

@article{harvey2024testing,
  title={Testing for Equal Average Forecast Accuracy in Possibly Unstable Environments},
  author={Harvey, David I and Leybourne, Stephen J and Zu, Yang},
  journal={Journal of Business \& Economic Statistics},
  volume={43}, 
  number={3},
  pages={643--656},
  year={2024},
  publisher={Taylor \& Francis}
}

@article{haghighi2025machine,
  title={Machine Learning for Economic Policy},
  author={Haghighi, Maryam and Joseph, Andreas and Kapetanios, George and Kurz, Christopher and Lenza, Michele and Marcucci, Juri},
  journal={Journal of Econometrics},
  volume={249},
  number={Part C},
  pages={105970},
  year={2025},
  publisher={Elsevier}
}

@article{goulet2022machine,
  title={How is machine learning useful for macroeconomic forecasting?},
  author={Goulet Coulombe, Philippe and Leroux, Maxime and Stevanovic, Dalibor and Surprenant, St{\'e}phane},
  journal={Journal of Applied Econometrics},
  volume={37},
  number={5},
  pages={920--964},
  year={2022},
  publisher={Wiley Online Library}
}

@incollection{athey2018impact,
  title={The impact of machine learning on economics},
  author={Athey, Susan},
  booktitle={The Economics of Artificial Intelligence: An Agenda},
  pages={507--547},
  year={2018},
  publisher={University of Chicago Press}
}

@article{clark2014tests,
  title={Tests of equal forecast accuracy for overlapping models},
  author={Clark, Todd E and McCracken, Michael W},
  journal={Journal of Applied Econometrics},
  volume={29},
  number={3},
  pages={415--430},
  year={2014},
  publisher={Wiley Online Library}
}

@article{clark2001tests,
  title={Tests of equal forecast accuracy and encompassing for nested models},
  author={Clark, Todd E and McCracken, Michael W},
  journal={Journal of econometrics},
  volume={105},
  number={1},
  pages={85--110},
  year={2001},
  publisher={Elsevier}
}

@article{vovk2022admissible,
  title={Admissible ways of merging p-values under arbitrary dependence},
  author={Vovk, Vladimir and Wang, Bin and Wang, Ruodu},
  journal={The Annals of Statistics},
  volume={50},
  number={1},
  pages={351--375},
  year={2022},
  publisher={Institute of Mathematical Statistics}
}

@article{dreher2008political,
  title={The political economy of {IMF} forecasts},
  author={Dreher, Axel and Marchesi, Silvia and Vreeland, James Raymond},
  journal={Public Choice},
  volume={137},
  pages={145--171},
  year={2008},
  publisher={Springer}
}

@article{ando2017clustering,
  title={Clustering huge number of financial time series: A panel data approach with high-dimensional predictors and factor structures},
  author={Ando, Tomohiro and Bai, Jushan},
  journal={Journal of the American Statistical Association},
  volume={112},
  number={519},
  pages={1182--1198},
  year={2017},
  publisher={Taylor \& Francis}
}

@article{bai2008forecasting,
  title={Forecasting economic time series using targeted predictors},
  author={Bai, Jushan and Ng, Serena},
  journal={Journal of Econometrics},
  volume={146},
  number={2},
  pages={304--317},
  year={2008},
  publisher={Elsevier}
}

@article{marcellino2006comparison,
  title={A comparison of direct and iterated multistep AR methods for forecasting macroeconomic time series},
  author={Marcellino, Massimiliano and Stock, James H and Watson, Mark W},
  journal={Journal of Econometrics},
  volume={135},
  number={1-2},
  pages={499--526},
  year={2006},
  publisher={Elsevier}
}

@incollection{giacomini2011,
    author = {Giacomini, Raffaella},
    isbn = {9780195398649},
    title = "{Testing conditional predictive ability}",
    booktitle = "{The Oxford Handbook of Economic Forecasting}",
    publisher = {Oxford University Press},
    year = {2011},
    month = {07},
    
}

@article{rossi_forecasting_2021,
	title = {Forecasting {in} {the} {presence} {of} {instabilities}: {How} {we} {know} {whether} {models} {predict} {well} {and} {how} to {improve} {them}},
	volume = {59},
	issn = {0022-0515},
	shorttitle = {Forecasting in the {Presence} of {Instabilities}},
	doi = {10.1257/jel.20201479},
	abstract = {Forecasting in the Presence of Instabilities: How We Know Whether Models Predict Well and How to Improve Them by Barbara Rossi. Published in volume 59, issue 4, pages 1135-90 of Journal of Economic Literature, December 2021, Abstract: This article provides guidance on how to evaluate and improve the...},
	language = {en},
	number = {4},
	journal = {Journal of Economic Literature},
	author = {Rossi, Barbara},
	month = dec,
	year = {2021},
	keywords = {Forecasting, Predictive accuracy},
	pages = {1135--90},
}

@article{chudik2011weak,
  title={Weak and strong cross-section dependence and estimation of large panels},
  author={Chudik, Alexander and Pesaran, M Hashem and Tosetti, Elisa},
  journal={The Econometrics Journal},
  volume={14},
  number={1},
  pages={C45--C90},
  year={2011},
  publisher={JSTOR}
}

@article{lumsdaine2023estimation,
  title={Estimation of panel group structure models with structural breaks in group memberships and coefficients},
  author={Lumsdaine, Robin L and Okui, Ryo and Wang, Wendun},
  journal={Journal of Econometrics},
  volume={233},
  number={1},
  pages={45--65},
  year={2023},
  publisher={Elsevier}
}

@book{hartigan75,
  title={Clustering Algorithms},
  author={Hartigan, John A},
  year={1975},
  publisher={John Wiley \& Sons, Inc.}
}

@article{lloyd82,
  title={Least squares quantization in {PCM}},
  author={Lloyd, Stuart},
  journal={IEEE Transactions on Information Theory},
  volume={28},
  number={2},
  pages={129--137},
  year={1982},
  publisher={IEEE}
}

@article{sun14a,
  title={Fixed-smoothing asymptotics in a two-step generalized method of moments framework},
  author={Sun, Yixiao},
  journal={Econometrica},
  volume={82},
  number={6},
  pages={2327--2370},
  year={2014},
  publisher={Wiley Online Library}
}

@article{sun11,
  title={Robust trend inference with series variance estimator and testing-optimal smoothing parameter},
  author={Sun, Yixiao},
  journal={Journal of Econometrics},
  volume={164},
  number={2},
  pages={345--366},
  year={2011},
  publisher={Elsevier}
}

@article{muller07,
  title={A theory of robust long-run variance estimation},
  author={M{\"u}ller, Ulrich K},
  journal={Journal of Econometrics},
  volume={141},
  number={2},
  pages={1331--1352},
  year={2007},
  publisher={Elsevier}
}

@article{phillips05,
  title={{HAC} estimation by automated regression},
  author={Phillips, Peter CB},
  journal={Econometric Theory},
  volume={21},
  number={1},
  pages={116--142},
  year={2005},
  publisher={Cambridge University Press}
}

@article{sun13,
  title={A heteroskedasticity and autocorrelation robust \emph{F} test using an orthonormal series variance estimator},
  author={Sun, Yixiao},
  journal={The Econometrics Journal},
  volume={16},
  number={1},
  pages={1--26},
  year={2013},
  publisher={Oxford University Press Oxford, UK}
}

@article{kriegeskorte09,
  title={Circular analysis in systems neuroscience: the dangers of double dipping},
  author={Kriegeskorte, Nikolaus and Simmons, W Kyle and Bellgowan, Patrick SF and Baker, Chris I},
  journal={Nature Neuroscience},
  volume={12},
  number={5},
  pages={535--540},
  year={2009},
  publisher={Nature Publishing Group US New York}
}

@article{clark13,
  title={Advances in forecast evaluation},
  author={Clark, Todd and McCracken, Michael},
  journal={Handbook of {E}conomic {F}orecasting},
  volume={2},
  pages={1107--1201},
  year={2013},
  publisher={Elsevier}
}

@article{bai02,
  title={Determining the number of factors in approximate factor models},
  author={Bai, Jushan and Ng, Serena},
  journal={Econometrica},
  volume={70},
  number={1},
  pages={191--221},
  year={2002},
  publisher={Wiley Online Library}
}

@article{gasparin2024combining,
  title={Combining exchangeable {P}-values},
  author={Gasparin, Matteo and Wang, Ruodu and Ramdas, Aaditya},
  journal={Proceedings of the National Academy of Sciences},
  volume={122},
  number={11},
  pages={e2410849122},
  year={2025},
  publisher={National Academy of Sciences}
}

@book{fisher32,
  title={Statistical Methods for Research Workers},
  author={Fisher, RA},
  year={1925},
  publisher={Oliver \& Boyd}
}

@article{vovk20,
  title={Combining \emph{p}-values via averaging},
  author={Vovk, Vladimir and Wang, Ruodu},
  journal={Biometrika},
  volume={107},
  number={4},
  pages={791--808},
  year={2020},
  publisher={Oxford University Press}
}

@article{clark2015nested,
  title={Nested forecast model comparisons: a new approach to testing equal accuracy},
  author={Clark, Todd E and McCracken, Michael W},
  journal={Journal of Econometrics},
  volume={186},
  number={1},
  pages={160--177},
  year={2015},
  publisher={Elsevier}
}

@article{diebold95,
author = {Diebold, Francis X. and Roberto S. Mariano},
title = {Comparing Predictive Accuracy},
journal = {Journal of Business \& Economic Statistics},
volume = {13},
number = {3},
pages = {253-263},
year = {1995},
publisher = {Taylor & Francis},
doi = {10.1080/07350015.1995.10524599}
}

@article{spreng23,
  title={Combining \emph{p}-values for Multivariate Predictive Ability Testing},
  author={Spreng, Lars and Urga, Giovanni},
  journal={Journal of Business \& Economic Statistics},
  volume={41},
  number={3},
  pages={765--777},
  year={2023},
  publisher={Taylor \& Francis}
}

@article{giacomini06,
  title={Tests of conditional predictive ability},
  author={Giacomini, Raffaella and White, Halbert},
  journal={Econometrica},
  volume={74},
  number={6},
  pages={1545--1578},
  year={2006},
  publisher={Wiley Online Library}
}

@article{akgun24,
  title={Equal predictive ability tests based on panel data with applications to {OECD} and {IMF} forecasts},
  author={Akgun, Oguzhan and Pirotte, Alain and Urga, Giovanni and Yang, Zhenlin},
  journal={International Journal of Forecasting},
  volume={40},
  number={1},
  pages={202--228},
  year={2024},
  publisher={Elsevier}
}

@article{qu23b,
  title={Comparing forecasting performance with panel data},
  author={Qu, Ritong and Timmermann, Allan and Zhu, Yinchu},
  journal={International Journal of Forecasting},
  volume={40},
  number={3},
  pages={918-941},
  year={2024},
  publisher={Elsevier}
}

@article{patton23,
author = {Patton, Andrew J. and Brian M. Weller},
title = {Testing for Unobserved Heterogeneity via \emph{k-means} Clustering},
journal = {Journal of Business \& Economic Statistics},
volume = {41},
number = {3},
pages = {737-751},
year = {2023},
publisher = {Taylor & Francis},
doi = {10.1080/07350015.2022.2061983}
}

@article{gao24,
    author = {Gao, Lucy L. and Jacob Bien and Daniela Witten},
    title = {Selective Inference for Hierarchical Clustering},
    journal = {Journal of the American Statistical Association},
    volume = {119},
    number = {545},
    pages = {332-342},
    year = {2024},
    publisher = {Taylor & Francis},
    doi = {10.1080/01621459.2022.2116331}
}

@article{su16,
  title={Identifying latent structures in panel data},
  author={Su, Liangjun and Shi, Zhentao and Phillips, Peter CB},
  journal={Econometrica},
  volume={84},
  number={6},
  pages={2215--2264},
  year={2016},
  publisher={Wiley Online Library}
}

@article{lin12,
  title={Estimation of panel data models with parameter heterogeneity when group membership is unknown},
  author={Lin, Chang-Ching and Ng, Serena},
  journal={Journal of Econometric Methods},
  volume={1},
  number={1},
  pages={42--55},
  year={2012},
  publisher={De Gruyter}
}

@article{bonhomme2022discretizing,
  title={Discretizing unobserved heterogeneity},
  author={Bonhomme, St{\'e}phane and Lamadon, Thibaut and Manresa, Elena},
  journal={Econometrica},
  volume={90},
  number={2},
  pages={625--643},
  year={2022},
  publisher={Wiley Online Library}
}

@article{bonhomme15a,
  title={Grouped patterns of heterogeneity in panel data},
  author={Bonhomme, St{\'e}phane and Manresa, Elena},
  journal={Econometrica},
  volume={83},
  number={3},
  pages={1147--1184},
  year={2015},
  publisher={Wiley Online Library}
}

@article{sarafidis15,
  title={A partially heterogeneous framework for analyzing panel data},
  author={Sarafidis, Vasilis and Weber, Neville},
  journal={Oxford Bulletin of Economics and Statistics},
  volume={77},
  number={2},
  pages={274--296},
  year={2015},
  publisher={Wiley Online Library}
}

@article{chen23,
  author  = {Yiqun T. Chen and Daniela M. Witten},
  title   = {Selective inference for \emph{k}-means clustering},
  journal = {Journal of Machine Learning Research},
  year    = {2023},
  volume  = {24},
  number  = {152},
  pages   = {1--41}
}

\end{document}